\documentclass[11pt]{article}

\usepackage[top=0.9in,bottom=1.1in,left=1.05in,right=1.05in]{geometry}

%\usepackage[a4paper, marginpar=3in, left=2in, right=2in]{geometry}
%\reversemarginpar

\usepackage{amsmath,amssymb}
\usepackage{amsthm}
\usepackage{mathtools}
\usepackage{latexsym}
\usepackage{dsfont}
\usepackage{graphicx}
\usepackage{float}
\usepackage{color}
\usepackage{authblk}

\usepackage{verbatim}
\usepackage{epstopdf}

\usepackage{url}

\usepackage[utf8]{inputenc}

\newcommand\E{\ensuremath{\mathbb{E}}}
\newcommand\PP{\ensuremath{\mathbb{P}}}

\newcommand{\e}{\mathrm{e}}

\DeclareMathOperator\argmax{arg max}

\newtheorem{prop}{Proposition}[section]
\newtheorem{lemma}{Lemma}[section]
\newtheorem{defn}{Definition}[section]

\newtheorem{remark}{Remark}[section]

\title{Mean Field Game Approach to Production and Exploration of Exhaustible Commodities}
\author[1]{Michael Ludkovski\thanks{Ludkovski is partially supported by AMPS-1736439: ludkovski@pstat.ucsb.edu}}
\author[1]{Xuwei Yang}
\affil[1]{Dept of Statistics and Applied Probability\\ University of California Santa Barbara 93106-3110 USA}

\date{}

\numberwithin{equation}{section}

\begin{document}

\maketitle

%%%%%%%%%%%%%%%%%%%%%%%%%%%
\begin{abstract}
In a game theoretic framework, we study energy markets with a continuum of homogenous producers who produce energy from an exhaustible resource such as oil. Each producer simultaneously optimizes production rate that drives her revenues, as well as exploration effort to replenish her reserves.
This exploration activity is modeled through a controlled point process that leads to stochastic increments to reserves level.
The producers interact with each other through the market price that depends on the aggregate production.
We employ a mean field game approach to solve for a Markov Nash equilibrium and develop numerical schemes to solve the resulting system  of non-local HJB and transport equations with non-local coupling.
A time-stationary formulation is also explored, as well as the fluid limit where exploration becomes deterministic.

\bigskip

\noindent \textbf{Keywords:} mean field games, exhaustible resources, dynamic Cournot models.

\end{abstract}

%%%%%%%%%%%%%%%%%%%%%%%%%%%

%%%%% Section 1: Instruction %%%%%%
\section{Introduction}

We consider a stochastic differential game model for an exhaustible commodity market, such as oil. The dynamic market evolution is driven by the use of existing reserves to produce energy and exploration/discovery of new reserves. We assume a Cournot-type competition where each producer chooses their production rate; this resembles e.g.~OPEC members who adjust their output to influence crude oil prices. Extraction of the commodity generates a revenue stream but carries the depletion trade-off. To offset the resulting lower reserves, producers undertake efforts to explore for new reserves. Exploration is uncertain: continuous exploration efforts stochastically lead to discrete discoveries of additional reserves. Individually, producers aim to maximize their total expected profits which are equal to price times quantity extracted, minus the production and exploration costs. Strategically, the producers interact via the global market price $p$ that is determined by the aggregate production.

To model this oligopoly among commodity producers (i.e.~energy firms), we assume a continuum of homogenous agents who compete in a single market for energy. Each agent is small enough to be a price taker, yet in equilibrium the aggregate behavior fully determines supply and in turn the clearing price. For simplicity, we assume constant demand, focusing on producers' choices. This model is reasonable to describe the long-term behavior of the market (on the scale of years), where micro-economic fluctuations are averaged out and commodity supply and reserves is the main determinant of market structure.

While the literature on single-agent optimization for exhaustible resource extraction dates back to the 1970's \cite{DasguptaHealBook,Pindyck78,Pindyck80}, the first rigorous treatment of a dynamic continuous-time non-cooperative model  was by Harris et al.~\cite{HHS} in 2010. They studied an $N-$player Cournot game via the associated system of nonlinear HJB partial differential equations but due to numerical challenges illustrations were limited to the two-player model. Further analysis of competitive duopoly was carried out in \cite{LS-Cournot} and our earlier work \cite{LudkovskiYang14}; the special case of  a single exhaustible player competing against $N$ renewable producers of varying profitability was studied in \cite{SircarLedvina12,DasarathySircar14}.

%% subsection: Mean Field Game approach %%
\subsection{Mean field game approach}

In a differential game model with a finite number $N$ of players, their equilibrium strategies can be determined by a system of Hamilton-Jacobi-Bellman-Isaacs (HJB-I) equations derived from the dynamic programming principle. The dimension of the system in general increases in $N$, which makes the game model intractable for large $N$. Mean field game (MFG) approach simplifies the modeling by considering equilibria with a continuum of homogenous players; the respective finite dimensional game state expands into a distribution $m(\cdot)$. The main idea is then to consider the optimization problem of the representative agent; the latter becomes a regular stochastic control problem with the competitive effect captured via a mean-field interaction driven by $m(\cdot)$. In turn, the aggregate behavior of the players implies dynamics on the distribution $m$ of agent states. This leads to a system of two partial differential equations (PDE's) which is viewed as an approximation to a single multi-dimensional HJB-I PDE in the original finite-$N$ setup. We refer to \cite{BensoussanFrehseBook,CarmonaBook} for the general theory of MFGs.

In our context, the individual states are the reserves' levels $X$, and the interaction is via the market price $p$ that is related to total production $Q$ across all the producers.  Thus, $p$ enters the game value function of the representative producer as the mean field term and drives her choice of production and exploration controls. In turn, the distribution of reserves is driven by the latter production rates and exploration efforts. The key aspect of such oligopolistic MFG's, first introduced in Gu\'{e}ant's PhD thesis~\cite{GueantPhD,GLL10}, is the mean-field interaction via the aggregate $Q$. Since producers directly optimize their own production rates, this corresponds to a mean-field game of \emph{controls}, in contrast to the standard situation where the interaction is through the density $m(\cdot)$. A second distinguishing feature of Cournot MFG's is the hard exhaustibility constraint when reserves reach zero. Different possibilities  at $X_t = 0$ include leaving the game (which magnifies market power of remaining producers); switching to a renewable/inexhaustible resource; prolonging production via costly reserve replenishment. All the above choices lead to non-standard boundary conditions in the respective equations, necessitating tailored treatment.

Two further crucial aspects of oligopoly MFG concern the prescribed dynamics of the reserves process $(X_t)$ and the inverse demand curve that relates $p$ to aggregate production $Q$. For the latter aspect, starting with \cite{GLL10}, price was assumed to be linear (decreasing) in quantity, which brings forward some of the tractability of the original linear-quadratic MFG's. This choice was maintained in \cite{ChanSircar14,Graber16,GraberBensoussan15}. Very recently, Chan and Sircar \cite{ChanSircar16} also investigated MFG's with power-type demand curves. We note that even with linear price schedule, the overall link between production and price is non-linear due to the exhaustibility condition at $x=0$ which requires to separately keep track of exhausted and producing firms.

For an exhaustible resource, reserves are non-increasing and are completely determined by past production. However, this does not capture the real-life aspect of replenishment of fossil-fuel commodities, where global reserve depletion has to a large degree been offset with ongoing discoveries (deep-sea oil, shale gas, oil sands,  and so on). Such discoveries take place thanks to exploration activities determined by the respective exploration efforts. In the early model of Pindyck \cite{Pindyck78}, exploration was incremental, leading to deterministic reserve additions. Subsequent extensions represented exploration as a point process counting new reserve discoveries, see \cite{ArrowChang82,DeshmukhPliska80,HaganCaflisch94} which is also the choice we pursue below.
Another alternative, which is motivated by \emph{uncertainty} about current reserves, has been to introduce exogenous Brownian noise, i.e.~make reserves stochastically fluctuating. This is also convenient for theoretical and numerical purposes and has been commonly used in recent MFG literature, see \cite{ChanSircar14,GraberBensoussan15,GraberMouzouni17,GLL10}. Let us also mention further possibilities of $(X_t)$ following  a stochastic differential equation with controlled volatility and drift \cite{Pindyck80} and a common noise MFG model to capture systematic shocks to aggregate reserves~\cite{Graber16}. 
 
Mathematically, the MFG setup leads to an HJB equation to model a representative agent's strategy, and a transport equation to model the evolution of the distribution of all the producers' states. The structure of these equations is determined by the prescribed dynamics of $(X_t)$. When $X_t$ is deterministic, the equations are first-order, see e.g.~\cite{CardaliaguetGraber15}. When $X_t$ includes Brownian shocks (cf.~\cite{GLL10,ChanSircar14}), the HJB equation is second-order and the transport equation is the usual Kolmogorov forward equation. In contra-distinction, discrete discoveries
add a first-order non-local (``delay'') term to both the HJB \cite{LS-Cournot,LudkovskiYang14} and transport equations. These features are central to the numerical resolution of MFG's that requires handling a coupled system of nonlinear PDE's. We refer to \cite{Achdou13,Achdou16,CarliniSilva14} for a general summary of different computational approaches and their convergence, including finite difference and semi-Langrangian schemes. An alternative common approach
 \cite{GLL10,ChanSircar14,ChanSircar16} is based on Picard-like iterative schemes.

\subsection{Summary of Contributions}
In this paper we apply the MFG approach to model energy markets with a large population of competing producers of exhaustible but replenishable resources. Our main focus is the strategic interaction between exploration and production (E\&P), in a dynamic, stochastic, game-theoretic framework. E\&P is a major theme in the business decisions of energy firms, but is rarely tackled as such in mathematical models. Some of the topics we investigate are: (i) the price effect of exploration; (ii) aggregate production path implied by the model; (iii) aggregate exploration efforts; (iv) possibility of a stationary equilibrium where exploration exactly offsets production; (v) impact of exploration uncertainty on the solution. Our analysis yields quantitative insights into the macro behavior of commodity industries on long-time horizons, linking up with colloquial topics of ``peak oil'', ``value of exploration R\&D'' and ``postponing the exhaustion Doomsday''. Specifically, the stationary model where individual resource levels stochastically change, but the overall reserves distribution and associated aggregate production and price remain constant, is a feature that to our knowledge is new in the oligopoly MFG literature.

Relative to existing Cournot MFG models, we emphasize the analysis of stochastic exploration which leads to first-order, non-local MFG equations. In that sense, our work fits into two different strands of game-theoretic models of energy production. On the one hand, we extend  \cite{ChanSircar14,GLL10} who considered exhaustible resource MFGs but without exploration; thus reserves were non-increasing. On the other hand, we extend the duopoly model \cite{LS-Cournot} to the limiting oligopoly with a continuum of producers. In the duopoly each producer directly influences the price;  in the MFG model herein, each producer has negligible power on market price that is rather driven by the \emph{aggregate} production.

The closest work to ours is by Chan and Sircar \cite{ChanSircar16} who primarily focused on competition of exhaustible producers who switch to a costly renewable resource upon ultimate depletion. They also studied competition of a large group of exhaustible producers alongside a single renewable producer, similar to the major-minor model of Huang \cite{Huang10}. Section 5 of \cite{ChanSircar16} then briefly discusses resource exploration  and the respective stationary equilibrium. Compared to their illustration, we provide multiple additional analyses, including a detailed treatment of both the time-dependent and time-stationary equilibria, convergence to stationarity as the problem horizon increases, and study of the ``fluid limit''. The latter is a law-of-large-numbers scaling that maintains exploration/production controls but removes the associated uncertainty. This mechanism  allows to quantify the pure impact of uncertainty on the MFG model, linking to the deterministic first-order MFG, which is another new development relative to existing literature.

Our setup generates several numerical challenges due to the non-local terms in the PDE's and a non-local coupling between them; a major part of the paper is devoted to constructing a computational scheme to solve the MFG equations. Specifically, we decouple the HJB and transport equations via a Picard-like iteration that alternately updates the optimal production and exploration controls, and the reserves distribution function (which in turn determines the market price). For the HJB equation and similar to \cite{ChanSircar16}, we employ a method of lines, discretizing the space dimension and solving the resulting  ordinary differential equations in the time dimension. The latter still constitutes a coupled system of ODE's due to the exploration control and the implicit boundary condition at $x=0$. For the transport equation we use a fully explicit finite-difference scheme.  However, due to the non-smooth dynamics of $(X_t)$, rather than working with the density $m(dx)$, we operate with the corresponding cumulative distribution function $\eta(\cdot)$, and moreover separately treat the proportion $\pi(t)$ of exhausted producers.

%%%    plan of the paper   %%%
This paper is organized as follows.
In Section~\ref{sec: Model}, we introduce the $N$-player Cournot game that motivates the MFG model in the limit $N\rightarrow \infty$.
Section~\ref{sec: Mean field game Nash equilibrium} discusses the doubly coupled system of HJB and transport equations that characterize the MFG Nash equilibrium. Section~\ref{sec: Numerical methods and examples} is devoted to numerical methods for solving this system and presents numerical illustrations. The rest of the paper then presents two modifications of the main model that yield important economic insights.
In Section~\ref{sec: Stationary mean field game Nash equilibrium}, we study the stationary MFG model, in which
the reserves distribution remains invariant due to the counteracting effects of production and exploration.
Section~\ref{sec: Fluid limit of exploration process} investigates the asymptotic ``fluid limit''  regime whereby the exploration process becomes deterministic, so that discovery of new resources
happens at high frequency with infinitesimal discovery amounts.
The paper concludes with Section \ref{sec:conclusion} and an Appendix that contains most of the proofs.

%%%%%%%  Begin Comment  %%%%%

\section{Model}
\label{sec: Model}

\subsection{Finite player Cournot game }
\label{sec: Cournot game of $N$ players}

%\subsubsection*{Reserves process}
We consider an energy market with $N$ producers (players).
Each producer uses exhaustible resources, such as oil, to produce energy.
Let $X^i_t$ represent the reserves level of player $i$, $i=1,\ldots,N$.
Each $X^i_t$ takes values in the set $\mathds{R}_+$ of nonnegative real numbers.
Reserves level $X^i_t$ decreases at a controlled production rate $q^i_t \geq 0$,
and also has random discrete increment due to exploration.
We use a controlled point process $(N^i_t)$ to model arrivals of new discoveries.
Specifically, $N^i_t$ has intensity $\lambda(t) a^i_t$, where $a^i_t$ is the exploration effort controlled by player $i$.
The parameter $\lambda(t)$ is rate of discovery per unit exploration effort which reflects the current exploration techniques and overall resources underground, it is thus taken as exogenously given and the same for all producers.
Since  total resources underground are depleted over time,
it is reasonable to assume that $\lambda(t)$ is decreasing in $t$ and
$\lim_{t\rightarrow \infty} \lambda(t) = 0$.
Let $\tau_n^i$ be the $n$-th arrival time of the point process $N_t^i$,
then the inter-arrival time between the $n$-th and $(n+1)$-st arrivals satisfies the following probability distribution
\[
\mathds{P} \left( \tau^i_{n+1} > \tau^i_{n}+t \right) =
\exp \left( - \int_{\tau^i_n}^{\tau^i_{n}+t} \lambda(s) a^i_s d s \right) .
\]
Let $\delta$ denote the unit amount of a discovery,
which is assumed to be a positive constant as in \cite{LS-Cournot,LudkovskiYang14}.
The unit amount of discovery $\delta$ can be random in general,
see Remark \ref{remark: random delta}.
The reserves dynamics of each player are given by the following stochastic differential equation
\begin{align}
\label{individual reserves process}
dX^i_t & = -q^i_t \mathds{1}_{\{X^i_t>0\}} dt + \delta d N^i_t , \quad i = 1,2,\ldots,N , \qquad	
X^i_0 = x^i_0 \geq 0,
\end{align}
where $x^i_0$ is the initial reserves level. The indicator function $\mathds{1}_{\{X^i_t>0\}}$ means that production must be shut down, $q^i_t = 0$, whenever reserves run out, $X^i_t=0$. With \eqref{individual reserves process} reserves decrease continuously between discoveries
according to the production schedule and experience  an instantaneous jump of size $\delta$ at discovery epochs.

%%%
\textbf{Cost functions.} We assume that all producers have identical quadratic cost functions of production and exploration,
denoted by $C_q(\cdot)$ and $C_a(\cdot)$ respectively,
\begin{equation}
\label{production and exploration cost functions}
C_q(q) = \kappa_1 q + \beta_1 \frac{q^2}{2}, \qquad
C_a(a) = \kappa_2 a + \beta_2 \frac{a^2}{2} .
\end{equation}
The coefficients $\beta_{1,2} > 0$ of the quadratic terms are assumed to be positive, making the cost functions strictly convex and guaranteeing that the optimal production and exploration effort levels are finite.
The coefficients $\kappa_{1,2} \ge 0 $ of the linear terms represent constant marginal cost of production and exploration due to the use of facilities and labor, while $\beta_{1}, \beta_2$ of the quadratic terms represent increasing marginal costs due to negative externalities (such as rising labor costs or nonlinear taxation). We note that when $\kappa_2 =0$ then exploration is always undertaken, otherwise $a^\ast_t = 0$ could be optimal.

%% price determination %%
\textbf{Supply-demand equilibrium. }
We assume there is a single market price $p$ (so the market is undifferentiated which is a reasonable assumption for the energy industry); in equilibrium $p$ is determined by equating the total demand to the total supply at that price level. This equilibrium is achieved instantaneously at each date $t$, which is viewed as fixed in the following exposition.

In addition to the $N$ producers we assume there  are $M$  undifferentiated consumers.
The demand of consumer $j$, denoted by $d^j$, depends on the price through the linear demand function $d^j = D(p)= L - p$. Note that demand is finite even at zero price and the demand elasticity is bounded.
The aggregate demand, denoted by $Q^{(M)}_{demand}$, is the sum
$$
Q^{(M)}_{demand} := \sum_{j=1}^M d^j = M(L-p).
$$
We now equate total demand with total supply $Q^{(N)}$ and substitute the right-hand-side above to obtain the equilibrium relation between total supply and price,
$
Q^{(N)} = M(L-p).
$
Finally, the clearing price can be represented through the inverse demand function
\begin{equation}
p_t = L - \frac{1}{M} Q^{(N)}_t %= L - \frac{1}{M}\sum_{i=1}^N q^i_{supply}
= L - \frac{N}{M} \left( \frac{1}{N}\sum_{i=1}^N q^i_t \right)  = L - \frac{N}{M} \bar{Q}_t,
\label{inverse demand function}
\end{equation}
where $\bar{Q}_t$ is the \emph{average} production rate. To obtain a nontrivial limiting price as the number of producers and the number of consumers both go to infinity, we see that it is necessary to take $M \propto N$. Without loss of generality (if necessary by redefining $L$), we assume $M=N$, so that $p_t= L - \bar{Q}_t= D^{-1}(\bar{Q}_t)$ where $L$ can be interpreted as the cap on prices as supply vanishes.

%%% objective functional %%%
\subsection{Game value functions and strategies}
In a continuous-time Cournot game model, each player continuously chooses rate of production $q^i_t$ in order to maximize profit which is equal to the revenue $p_t\cdot q^i_t$, minus the production and exploration costs,
integrated and discounted at a rate $r>0$. We work on a finite time horizon $[0, T]$, where $T$ is exogenously specified. The role of the horizon will be revisited in the sequel.
The price $p_t$ each player receives is determined through the inverse demand function \eqref{inverse demand function}.

Denoting player-$i$'s strategy by $s^i :=\left( q^{i}, a^{i} \right)$, the overall strategy profile for all the $N$ players is $\boldsymbol{s}:= \left( s^1, s^2, ... , s^N \right)$.
Starting with reserves state $\left( X^1_t, \cdots , X^N_t \right) =: \bf{X}_t = \bf{x}$, each player's objective functional $\mathcal{J}^i, i=1,\ldots,N$ on the horizon $[t,T]$ is defined as the total discounted profit
%%% objective functional
\begin{align}
&
\mathcal{J}^i \left( \boldsymbol{s} ; t, \bf{x} \right) :=
 \E
\left\{ \int_t^T \left[ D^{-1}\left(  \frac{1}{N} \sum_{j=1}^N q^j_s \right) q_s^i - C_q(q_s^i) - C_a(a_s^i) \right] e^{-r(s - t)} \ ds   \ \Big| \ \bf{X}_t = \bf{x}
\right\},
\label{N players objective functionals}
\end{align}
%%%
where the expectation is over the random point processes $N^j$'s that drive $X^j$'s and hence $q^j$'s.
We focus on the admissible set  $\mathcal{A}$ of strategies whereby $s^i_t = (q^i_t, a^i_t)$ are Markovian feedback controls  $q^i_t=q^i(t, {\bf X}_t)$, $a^i_t=a^i(t, {\bf X}_t)$ such that $\mathcal{J}^i({\bf{s}}; t, \bf{x}) < \infty$,
$\forall {\bf{x} } \in \mathds{R}^N_{+}$,
for all $i = 1, \ldots, N$. %The exhaustibility constraint also imposes that $q^i(t,0) \equiv 0$ is the only admissible control.
%
%and \eqref{N players game value functions}
From \eqref{N players objective functionals}, we see that each player's choice of strategy depends on the strategies of all the others, leading to a non-cooperative game. Our aim is to investigate the resulting (Markov feedback) Nash equilibrium. Importantly, the feedback structure of the controls
together with \eqref{inverse demand function} imply that player $i$'s dependence on $\bf{X}_t$ can be summarized by his individual reserves
$X^i_t$ and the aggregate distribution of all players' reserves. The latter is characterized through the upper-cumulative distribution function defined by
$$
\eta^{(N)}(t, x) := \frac{1}{N}\sum_{j=1}^N \mathds{1}_{ \left\{ X^j_t \geq x \right\} }.
$$
Thus the Markovian feedback controls $\left( q^i, a^i \right)$ can be equivalently represented as
\begin{align}
q^i_t &
= q^i\left( t, X^i_t ; \eta^{(N)}(t,\cdot) \right) , \qquad
a^i_t
= a^i\left( t, X^i_t;  \eta^{(N)}(t,\cdot) \right) ,  \qquad
i=1,\ldots,N. \notag
%\label{q and a feedback form on empirical distribution}
\end{align}

%% definition: Nash equilibrium of N player game  %%
\begin{defn}[Nash equilibrium]
A Nash equilibrium of the $N$-player game is a strategy profile
$\boldsymbol{s}^\ast = \left( s^{1,\ast},\ldots,s^{N,\ast} \right)$,
with $s^{i,\ast} := \left( q^{i, \ast}, a^{i, \ast} \right) $
such that
\begin{align}
\mathcal{J}^i\left( \boldsymbol{s}^\ast;  t, \bf{x} \right)
\geq
\mathcal{J}^i\left(  (\boldsymbol{s}^{\ast, -i}, s^i) ; t,  \bf{x} \right) , \quad
\forall i \in \{ 1, 2, \ldots, N \},
\end{align}
where $(\boldsymbol{s}^{\ast,-i}, s^i)$ is the strategy profile
$\boldsymbol{s}^\ast$ with the $i$-th entry
 replaced by arbitrary $s^i=(q^{i}, a^{i}) \in \mathcal{A}$.

\end{defn} %%
In words, a Nash equilibrium is the set of strategies of the $N$ players such that no one can better off by unilaterally changing his own strategy.  Theoretically the Nash equilibrium of the $N$-player game can be found by Hamilton-Jacobi-Isaacs (HJB-I) approach.
HJB-I approach is to use dynamic programming principle to derive the partial differential equation of each player's game value function, with other players' strategies as entries.
It is extremely hard to find a Nash equilibrium by using the (HJB-I) approach either analytically or numerically, even for small $N$, e.g. $N=2$.
Thus in the following Section~\ref{Mean field game problem as N infinity}, we introduce the
mean field game model as $N \rightarrow \infty$, which serves as an approximation to the Nash equilibrium of the game when number of players is very large.

%%%%%%%%%%%%%%%%%%%%%%%%%%%%%%%%%%
%%%%%%%%% Mean field game formulation %%%%%%%%%%%%%%%%%%%%
\subsection{Mean field game problem as $N \rightarrow \infty$}
\label{Mean field game problem as N infinity}

As the number of players becomes very large $N \rightarrow \infty$, thanks to the Law of Large Numbers,
the empirical distribution $\eta^{(N)}$ is expected to converge to a  CDF $\eta$.
The limiting $\eta(t, x)$ is regarded as the reserves distribution among all players at date $t$, which means, for a given $t$ and $x$, the proportion of all players at time $t$ with reserves level greater than or equal to $x$. The production and exploration controls continue to take the Markovian feedback form
\begin{align}
\label{production and exploration controls mean field game}
q_t &=q(t, X_t; \eta(t,\cdot)), 	\quad
a_t =a(t, X_t; \eta(t,\cdot)).
\end{align}
To re-solve for the supply-demand equilibrium clearing price, we use the total quantity $Q(t)$ of production at time $t$, defined as the Stieltjes integral of a representative producer's production rate with respect to the reserves distribution,
\begin{equation}
Q(t) := - \int_0^\infty q(t, x; \eta) \ \eta(t, dx) .
\label{mfg total production definition}
\end{equation}
Note that $\eta(t,x)$ is decreasing in $x$, thus we add a negative sign to the integral in order to keep $Q(t)$ positive. The definition in \eqref{mfg total production definition} is the limit of the original $\bar{Q}_t$ that was defined for the $N$-player game. As before $Q(t)$ is linked to the clearing price  via
\begin{equation}
p(t, \eta(t,\cdot)) = D^{-1}(Q(t))= L + \int_0^\infty q(t, x; \eta) \ \eta(t, dx) .
\label{mfg price function}
\end{equation}

For a representative producer who starts with initial reserves level $X_t=x$ (with $x \in \mathbb{R}_+$ now a scalar), and representing all other players' states via $\eta(\cdot, \cdot)$  taken as given, the mean-field objective functional is defined analogously to \eqref{N players objective functionals}:
\begin{align}
\label{mfg objective functional}
 \mathcal{J} \left( q, a ; t, x, \eta \right) 	
 :=
 \E
\left\{ \int_t^T \left[ p(s,\eta(s,\cdot))
q_s - C_q(q_s) - C_a(a_s) \right] e^{-r(s-t)} \ ds \
\middle| X_t = x\right\} .
\end{align}
Above the strategies $(q_t, a_t)$ take the Markovian feedback form
\eqref{production and exploration controls mean field game} and the reserves distribution $(\eta(s,\cdot))$ is a probability upper-CDF for all $s \in [t,T]$.  We again remark that
the profit of a player depends on all the other players through the mean field term $Q(s)$.
We define the mean field game Nash equilibrium of our model as
%% definition: mean field game Nash equilibrium
\begin{defn}[Mean field game Markov Nash equilibrium]
\label{Mean field game Markov Nash equilibrium definition}
A MFG MNE is a triple $\left( q^\ast, a^\ast, \eta^\ast \right)$ of adapted processes on $[0,T]$ such that,
denoting by $X^\ast_t$ the solution of
\begin{equation}
\label{reserves dynamics mean field game case}
dX^\ast_t = -q^\ast(t,X^\ast_t, \eta^\ast(t,\cdot) ) \, dt + \delta d N^\ast_t , \quad t \geq 0 , \qquad X^\ast_0 \sim \eta^\ast(0,\cdot),
\end{equation}
then $\eta^\ast(t, x) = \PP( X^\ast_t \ge x)$ is the upper-CDF of $X^\ast_t \, \forall t \in [0, T]$ and
\begin{equation}
\label{mfg optimality condition}
\mathcal{J}(q^\ast, a^\ast; t, x, \eta^\ast) \geq \mathcal{J}(q, a; t, x, \eta^\ast),
\
\forall (q, a)\in \mathcal{A} .
\end{equation}

\end{defn}
%% end of definition: mean field game Nash equilibrium

%\begin{remark}
Definition \ref{Mean field game Markov Nash equilibrium definition} consists of two conditions. One condition, which we can call optimality condition, is that each producer chooses strategy $(q^\ast, a^\ast)$ which gives optimal game value, given the others' strategies.
The second condition, which we can call consistency condition, is that the reserves dynamics of each player under the control of the strategy $(q^\ast, a^\ast)$ has the upper cumulative distribution function $\eta^\ast$ that is the same as the one that enters the objective functional.
%\end{remark}
%
In Section \ref{sec: Mean field game Nash equilibrium}, we introduce the differential equations characterizing  the MFG MNE defined in Definition
\ref{Mean field game Markov Nash equilibrium definition}, which is the core problem of this paper.

%%%%%%%%%%%%%%%%%%%%%%%%%%%%%%%%
%%%%%%%%% Section: Mean Field Game Equilibrium %%%%%%
%%%%%%%%%%%%%%%%%%%%%%%%%%%%%%%%
\section{Mean field game Nash equilibrium}
\label{sec: Mean field game Nash equilibrium}

%According to \cite{Huang Malhame Caines 2006, Lasry and Lions 2007},
Solving for MFG MNE involves two partial differential equations.
One equation is the HJB equation of the game value function of a representative producer, which is derived by a dynamic programming principle and yields the equilibrium production and exploration strategies $(q^\ast, a^\ast)$.
The other equation is the transport equation characterizing the distribution $\eta^\ast$ of reserves process $X^\ast$ controlled by the strategies $(q^\ast, a^\ast)$ obtained from the HJB equation.

Section \ref{sec: Game value function of a representative player}, treats the HJB equation associated to the game value function of a representative producer.
The PDE that characterizes the evolution of the reserves distribution will be discussed in Section~\ref{sec: Transport equation of reserves distribution}.
The overall coupled
system associated to the MFG MNE is taken up in Section~\ref{sec: System of doubly coupled HJB-transport equations}.
Details of numerical methods and examples will be discussed in Section
\ref{sec: Numerical methods and examples}.

%%%
\subsection{Game value function of a representative producer}
\label{sec: Game value function of a representative player}

Let us fix a sequence of probability CDF's $\eta(t,\cdot)$.
Associated with the objective functional \eqref{mfg objective functional}, we define the game value function $v^\eta(t,x )$ of a representative producer by
\begin{align}
\label{mfg value function}
v^\eta(t,x)  &	:= \sup_{(q, a) \in \mathcal{A} } \mathcal{J} \left( q, a ; t, x, \eta \right)	\notag \\
&
= \sup_{ (q, a) \in \mathcal{A} } \E
\left\{ \int_t^{T}\left[ p(s; \eta) q_s - C_q(q_s) - C_a(a_s) \right] e^{-r(s-t)} ds \ \middle| \ X_t = x \right\} ,
\end{align}
where the player chooses her production rate $q(t, X_t; \eta)$ and exploration rate
$a(t, X_t; \eta)$ from the set $\mathcal{A}$ of Markovian feedback controls \eqref{production and exploration controls mean field game}. Note that above $\eta$ is treated as an exogenous parameter, while the price  is still endogenous being a function of total production: $p(t; \eta(t,\cdot)) = D^{-1}\left(Q(t) \right)$ as in \eqref{mfg price function}. This introduces a global dependence between the map $x \mapsto q(t,x)$ and $p(t)$.

Define the forward difference operator $\Delta_x$ as
$\Delta_x v(t, x) :=v(t, x+\delta) - v(t, x)$.

%%%%%% Lemma: Explicit HJB equation %%%
\begin{lemma}
\label{lemma: mean field game explicit HJB equation}
The game value function $v^\eta(t, x)$ defined by \eqref{mfg value function} satisfies the HJB equation
\begin{align}
\label{mfg HJB}
0 & = \frac{\partial}{\partial t} v^\eta(t,x)  -r v^\eta(t,x)
+ \frac{1}{2\beta_1} \left[ \left(p(t; \eta(t,\cdot)) - \kappa_1  - \frac{\partial}{\partial x} v^\eta(t,x) \right)^+ \right]^2 	\notag 	\\
& \quad + \frac{1}{2\beta_2} \left[ ( \lambda(t) \Delta_x v^\eta(t,x) - \kappa_2 )^+  \right]^2 ,
\end{align}
with terminal condition $v^\eta(T,x) = 0$,
where the optimal $q^\eta(t, x)$ and $a^\eta(t, x)$ are given by
\begin{align}
& q^\eta(t,x)
= \frac{1}{\beta_1} \left( L  - Q^\eta(t) - \kappa_1
- \frac{\partial}{\partial x} v^\eta(t, x) \right)^+
 , 	\label{eq:q-star}	 \\
& a^\eta(t, x)
 =  \frac{1}{\beta_2}\left( \lambda(t) \Delta_x v^\eta(t,x) - \kappa_2 \right)^+ ,
\label{optimal mfg q and a explicit}
\end{align}
with $Q^\eta(t)$ uniquely determined by the equation
\begin{align}
Q^\eta(t)
&
+ \int_0^\infty
\frac{1}{\beta_1}\left( L - \kappa_1 - \frac{\partial}{\partial x} v^\eta(t,x) - Q^\eta(t) \right)^+
\eta(t, dx) = 0. \label{eq:Q-star}
\end{align}
%%%
The price $p(t)$ depends on $q^\eta(t,\cdot)$  and the given reserves distribution $\eta(t,\cdot)$ via \eqref{mfg price function}.

\end{lemma}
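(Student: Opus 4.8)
The plan is to derive \eqref{mfg HJB} via the dynamic programming principle (DPP) together with pointwise maximization of the Hamiltonian, and then to treat separately the one non-standard ingredient, namely the endogeneity of the price through $Q^\eta(t)$. First I would write the DPP for $v^\eta$ over an infinitesimal interval $[t,t+h]$. For the controlled reserves process in \eqref{reserves dynamics mean field game case}, which is a pure-jump process with a controlled drift, the relevant infinitesimal generator acting on a smooth $v$ is
\[
\mathcal{L}^{q,a} v(t,x) = -q\,\partial_x v(t,x) + \lambda(t)\,a\,\big( v(t,x+\delta) - v(t,x) \big) = -q\,\partial_x v(t,x) + \lambda(t)\,a\,\Delta_x v(t,x),
\]
where the first term encodes depletion at the production rate $q$ and the second encodes the jump of size $\delta$ arriving with intensity $\lambda(t)a$. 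Combining this with discounting and the instantaneous profit $p(t;\eta)q - C_q(q) - C_a(a)$, and sending $h\to 0$ (via Dynkin's formula for the jump process), produces the pre-maximized HJB
\[
0 = \partial_t v^\eta(t,x) - r v^\eta(t,x) + \sup_{q\ge 0,\, a\ge 0} \Big\{ \big(p(t;\eta) - \kappa_1\big) q - \tfrac{\beta_1}{2} q^2 - q\,\partial_x v^\eta(t,x) - \kappa_2 a - \tfrac{\beta_2}{2} a^2 + \lambda(t)\, a\, \Delta_x v^\eta(t,x) \Big\} .
\]

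Second, the supremum decouples into two independent scalar problems because both the quadratic costs \eqref{production and exploration cost functions} and the generator are additively separable in $q$ and $a$. Each of these is a strictly concave quadratic on $[0,\infty)$ (using $\beta_1,\beta_2>0$), so its maximizer is the unconstrained stationary point truncated at zero: this yields $q^\eta = \tfrac{1}{\beta_1}\big(p(t;\eta) - \kappa_1 - \partial_x v^\eta\big)^+$ and $a^\eta = \tfrac{1}{\beta_2}\big(\lambda(t)\Delta_x v^\eta - \kappa_2\big)^+$, which are \eqref{eq:q-star}--\eqref{optimal mfg q and a explicit}. Substituting the corresponding maximal values $\tfrac{1}{2\beta_1}\big[(p-\kappa_1-\partial_x v^\eta)^+\big]^2$ and $\tfrac{1}{2\beta_2}\big[(\lambda\Delta_x v^\eta-\kappa_2)^+\big]^2$ back into the equation gives \eqref{mfg HJB}, while the terminal condition $v^\eta(T,x)=0$ is read off directly from \eqref{mfg value function}.

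Third --- and this is the step I expect to be the main obstacle --- one must resolve the endogeneity of the price. Since $p(t;\eta) = L - Q^\eta(t)$ by \eqref{mfg price function}, the optimal rate reads $q^\eta(t,x) = \tfrac{1}{\beta_1}\big(L - Q^\eta(t) - \kappa_1 - \partial_x v^\eta(t,x)\big)^+$, and imposing the definition $Q^\eta(t) = -\int_0^\infty q^\eta(t,x)\,\eta(t,dx)$ from \eqref{mfg total production definition} yields exactly the scalar relation \eqref{eq:Q-star}. To show it pins down $Q^\eta(t)$, I would fix $t$ and regard its left-hand side as a function
\[
F(Q) := Q + \int_0^\infty \tfrac{1}{\beta_1}\big(L - \kappa_1 - \partial_x v^\eta(t,x) - Q\big)^+\,\eta(t,dx)
\]
of the scalar $Q$. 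Because $\eta(t,\cdot)$ is non-increasing, $\eta(t,dx)$ is a nonpositive measure; as $Q$ grows the truncated integrand decreases, so the integral term is non-decreasing in $Q$ and $F$ is continuous and strictly increasing. Moreover $F(0)\le 0$ (the integral of a nonnegative function against a nonpositive measure), while $F(Q)\to +\infty$ as $Q\to\infty$ (the integral term vanishes once $Q$ exceeds $\sup_x(L-\kappa_1-\partial_x v^\eta)$). The intermediate value theorem then delivers a unique root $Q^\eta(t)\ge 0$, completing the characterization.

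Finally I would flag two technical points that the formal derivation glosses over. The constraint $\mathds{1}_{\{X_t>0\}}$ in \eqref{individual reserves process} means the equation is to be understood on $x>0$, with a separate treatment at the exhaustion boundary $x=0$, where $q$ is forced to zero; and making the DPP rigorous, or alternatively running a verification argument, requires checking that the candidate controls $(q^\eta,a^\eta)$ are admissible in $\mathcal{A}$ and that the associated value dominates any competitor --- a standard but necessary step given the non-local jump term.
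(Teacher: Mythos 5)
Your proposal is correct and follows essentially the same route as the paper: dynamic programming with the controlled drift-plus-jump generator to obtain the pre-maximized HJB (the paper's \eqref{eq:direct-hjb}), separable strictly concave maximizations in $q$ and $a$ yielding the truncated first-order conditions \eqref{eq:q-star}--\eqref{optimal mfg q and a explicit}, substitution of $p(t;\eta)=L-Q^\eta(t)$, and closure via the scalar equation \eqref{eq:Q-star}, whose root is pinned down by monotonicity plus the intermediate value theorem. The one substantive difference is the sign bookkeeping in that last step, and there your version is the internally consistent one: since $\eta(t,\cdot)$ is an upper-CDF, the Stieltjes measure $\eta(t,dx)$ is nonpositive, so $F(Q)=Q+\int_0^\infty \tfrac{1}{\beta_1}\left(L-\kappa_1-\partial_x v^\eta(t,x)-Q\right)^+\eta(t,dx)$ is continuous and strictly increasing, with $F(0)\le 0$ and $F(Q)=Q>0$ once $Q\ge L-\kappa_1$ (using $\partial_x v^\eta\ge 0$), exactly as you argue. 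The paper instead asserts $G(0)>0$, $G(L-\kappa_1)<0$ and $G$ decreasing, which is what one obtains by integrating against the positive measure $-\eta(t,dx)$; both formulations determine the same unique root $Q^\eta(t)\in[0,L-\kappa_1]$, but as literally written the paper's inequalities are flipped relative to its own measure convention, and your write-up silently corrects this. Like the paper, you leave the verification/admissibility step and the $x=0$ boundary treatment at the formal level, so there is no gap on either side beyond what the paper itself accepts.
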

%%% end of lemma

%%% proof of lemma
\begin{proof}
The associated HJB equation of \eqref{mfg value function} derived by the dynamic programming principle, is
\begin{multline}
\label{eq:direct-hjb}
  0 = \frac{\partial}{\partial t} v^\eta(t,x)  - r v^\eta(t,x)
+ \sup_{a \geq 0} \left[ - C_a(a) +a\lambda(t) \Delta_x v^\eta(t,x) \right] 	 \\
+ \sup_{q\geq 0}\left[ p(t; \eta(t,\cdot))  q - C_q(q)
- q \frac{\partial}{\partial x} v^\eta(t,x) \right]   ,
\end{multline}
where the forward difference term $\Delta_x v(t, x)$ is due to the jumps in the reserves dynamics, cf.~\cite{LS-Cournot}.
The optimal exploration rate $a^\eta$ is determined by the first order condition
\begin{align}
 a^\eta(t, x) &  =
\argmax_{ a \geq 0}
\left[ - C_a(a) +a\lambda(t) \Delta_x v^\eta(t,x) \right]
 =  \frac{1}{\beta_2}\left( \lambda(t) \Delta_x v^\eta(t,x) - \kappa_2 \right)^+ ,
\label{eq:a-foc}
\end{align}
where we plugged the quadratic form of $C_a$ from \eqref{production and exploration cost functions}.
Similarly, maximizing the last term in \eqref{eq:direct-hjb} to solve for
the optimal production rate $q^\eta$ leads to the first order condition
\begin{align}
0 & =
\frac{\partial}{\partial q}
 \left[ p(t,\eta(t,\cdot))  q^\eta(t,x)
- C_q(q^\eta(t,x)) - q^\eta(t,x) \frac{\partial}{\partial x} v^\eta(t,x) \right]	\notag	\\
\Leftrightarrow \quad \beta_1 q^\eta(t,x) & = \left( p(t,\eta(t,\cdot))  - \kappa_1 - \frac{\partial}{\partial x} v^\eta(t,x) \right)^+.
 	\label{eq:q-foc}
\end{align}
Using $p(t,\eta(t,\cdot)) =  L - Q^\eta(t)$ yields \eqref{eq:q-star}. 
Integrating the right-hand side of \eqref{eq:q-star} with respect to $\eta(t, \cdot)$,
\begin{align}
 \int_0^\infty
\frac{1}{\beta_1}\left( L - \kappa_1 - \frac{\partial}{\partial x} v^\eta(t,x) - Q^\eta(t) \right)^+
\eta(t, dx)  =  \int_0^\infty q^\eta(t,x) \eta(t, dx) = -Q^\eta(t).
\label{eq:integral q eta}
\end{align}
Thus, $Q^\eta(t))$ satisfies $G(Q^\eta(t)) = 0$ as in \eqref{eq:Q-star} where
\begin{align*}
G( Q ) = Q+ \int_0^\infty
\frac{1}{\beta_1}\left( L - \kappa_1 - \frac{\partial}{\partial x} v^\eta(t,x) - Q \right)^+
\eta(t, dx) .
\end{align*}
%we can determine $Q(t)$ from the equation
%$G(Q(t))=0$.
%Possible values of $Q(t)$ satisfying $ G(Q(t))  = 0$ lie in the interval $.
%If $L \leq \kappa_1$, it follows that $Q(t) = 0$;
Assuming $L > \kappa_1$ (otherwise production is never profitable and $Q(t) = 0$), we have $G(0) > 0$
and $G(  L-\kappa_1  )  < 0$. Moreover, $Q \mapsto G(Q)$ is continuous
because the integrand
is uniformly bounded,
$\left| \left( L - \kappa_1 - \frac{\partial}{\partial x} v^\eta(t,x) - Q  \right)^+ \right| \leq (L - \kappa_1)$. Since $Q \mapsto G(Q)$ is decreasing it follows
that a unique root $Q(t)$ exists  in $[0, L-\kappa_1]$. Finally \eqref{mfg HJB} follows by using \eqref{eq:a-foc} and \eqref{eq:q-foc} in \eqref{eq:direct-hjb}.
\end{proof}
%%%   end of proof of lemma  %%%

We observe two non-standard features of the HJB equation \eqref{mfg HJB}. First, the optimal production control \eqref{eq:q-star} does not only depend on the individual producer's value function $\frac{\partial}{\partial x} v^\eta(t,x)$, but also on the reserves distribution of all the players through the mean field term
$\int_0^\infty \frac{\partial}{\partial z} v^\eta(t,z) \eta(t, dz)$.
Second, \eqref{mfg HJB} contains two \emph{non-local} terms: the forward difference $v^\eta(t, x+\delta) - v^\eta(t,x)$ and the integral
$\int_0^\infty \frac{\partial}{\partial z}v^\eta(t, z) \eta(t, dz)$.

The HJB equation has two boundary conditions. At $t=T$ we take $v(T, x) = 0$ as no more production is assumed possible beyond the prescribed horizon. Furthermore, the exhaustibility constraint $x \ge 0$ imposes a  boundary condition at $x=0$ similar to the model in \cite{LS-Cournot} for a single exhaustible producer.
Since production $q(t,0)=0$ is zero on the boundary $x=0$, we have
\begin{align}
0&= \frac{\partial}{\partial t} v^\eta(t,0)  - r v^\eta(t,0)
 +  \sup_{a\geq 0} \left[ - C_a( a ) + a \lambda(t) \Delta_x v^\eta(t,0) \right]
\notag	\\
&= \frac{\partial}{\partial t} v^\eta(t,0)  - rv^\eta(t,0)
+ \frac{1}{2\beta_2} \left[ ( \lambda(t) \Delta_x v^\eta(t,0) - \kappa_2 )^+  \right]^2 ,
\qquad 0\leq t < T.
\label{HJB equation on boundary}
\end{align}
We will use the boundary condition equation \eqref{HJB equation on boundary} in the numerical schemes. At the other extreme, as $x \to \infty$ then $\Delta_x v^\eta(t,x) \to 0$ and hence for $\kappa_2 > 0$ we have $a^\eta(t,x) = 0$ from \eqref{eq:a-foc}. Thus, there is a saturation reserves level $x_{sat}(t)$ \cite{LS-Cournot,LudkovskiYang14} such that $a^\eta(t,x) = 0 \forall x \ge x_{sat}(t)$: with a lot of reserves and a strictly positive marginal cost, exploration becomes unprofitable (furthermore, since $v^\eta(t,x)$ is expected to be concave in $x$, $a^\eta(t,x)$ is monotone decreasing).

%%%%%%%%%%%%%
\subsection{Transport equation of reserves distribution}
\label{sec: Transport equation of reserves distribution}

In this section~we study evolution of the reserves distribution through the transport equation of the upper-cumulative distribution function
$\eta(t, \cdot)$ of the reserves process $X_t$ from \eqref{individual reserves process}
%\begin{equation}
%d X_t = - q_t \mathds{1}_{\{ X_t > 0 \}} dt + \delta d N_t ,
%\label{reserves process no game}
%\end{equation}
where $N_t$ is a point process with controlled rate $\lambda(t) a_t$,
and the production rate $q_t = q(t, X_t)$ and exploration rate $a_t = a(t, X_t)$ are given, i.e.~treated as exogenous inputs.

%According to the reserves dynamics \eqref{reserves process no game},
When reserves reach zero $X_t=0$, production shuts down $q_t = 0$.  With exploration effort being made, the reserves level $X_t$ can bounce back to $X_{\tau}=\delta$, however the waiting time until next discovery is strictly positive. As a result, $\PP(X_t = 0) >0$, i.e.~the distribution of $X_t$ has a point mass at $x=0$.
Thus to study the evolution of the distribution of $X_t$,
we consider two parts:
 the upper-cumulative distribution function $\eta(t, x) = \PP \left( X_t \geq x \right)$
in the interior $x>0$; and the boundary probability $\pi(t):= \PP(X_t = 0)=1-\eta(t,0+)$.
The upper-CDF $\eta(t, x)$ is regarded as the proportion of players with reserves level greater than or equal to $x$,
and $\pi(t)$ is interpreted as the proportion of producers with no reserves. The following proposition gives the piecewise PDE that 
$\eta$ satisfies. See the proof in Appendix \ref{Proof of proposition transport equation}. Observe that because production slows down as reserves are exhausted $\lim_{x \downarrow 0} q(t,x) = 0$, there is no boundary condition for $\eta$ at $x=0$; instead $\pi(t)$ shows up in the PDE for $\eta$.

%%%%% Proposition: Transport equation %%%%%
\begin{prop}[Transport equation]
\label{prop: transport equation}

The distribution of the reserves process $X_t$ is characterized by the pair $\left( \pi(t), \eta(t, x) \right)$, where $\eta(t,x)= \PP(X_t \geq x)$, $0<x<\infty$, and $\pi(t) := 1-\eta(t,0+)$:
% satisfy the following system of differential equations (note that the partial $x$-derivative on the first line is taken from the right $x \downarrow 0$ since $1 = \eta(t,0) > \eta(t,0+)$ is discontinuous at $x=0$)
%% transport equation
\begin{subequations}
%\begin{align} \left\{
\begin{align} %\label{eq:transport-1}
 %\frac{d}{dt}\pi(t)  &= -\lambda(t) a(t, 0 ) \pi(t) - q(t, 0+) \frac{\partial}{\partial x} \eta(t, 0+) ;
 \label{eq:transport-2}
\frac{\partial}{\partial t} \eta(t,x) &= \lambda(t) a(t,0) \pi(t) - \int_{0+}^x \lambda(t) a(t,z)\eta(t, dz)
+ q(t,x)\frac{\partial}{\partial x} \eta(t,x) , \quad 0<x \leq \delta ;  \\
\frac{\partial}{\partial t} \eta(t,x) &= - \int_{x -\delta}^x \lambda(t) a(t,z)
\eta(t, dz)
+ q(t,x) \frac{\partial}{\partial x} \eta(t,x), \quad x> \delta . \label{eq:transport-3}
\end{align}\label{transport equation} %\right.
%\end{align}
\end{subequations}
with given initial condition $\eta(0, x) = \eta_0(x)$ and $\pi(0)= 1-\eta_0(0+)$.
\end{prop}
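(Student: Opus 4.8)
The plan is to derive the evolution of $\eta(t,x)=\PP(X_t \ge x)$ from the infinitesimal generator of the controlled process $(X_t)$ in \eqref{individual reserves process}. With $q(t,\cdot)$ and $a(t,\cdot)$ fixed, $X_t$ is a piecewise-deterministic Markov process whose generator acts on a test function $\phi$ by
\begin{equation*}
\mathcal{L}_t \phi(x) = -q(t,x)\,\phi'(x)\,\ind_{\{x>0\}} + \lambda(t)\,a(t,x)\bigl(\phi(x+\delta)-\phi(x)\bigr),
\end{equation*}
the first term coming from continuous depletion and the second from the point process of discoveries. First I would record the forward (Dynkin) identity $\frac{d}{dt}\E[\phi(X_t)] = \E[\mathcal{L}_t\phi(X_t)]$, obtained from It\^o's formula for drift-plus-jump processes after taking expectations so that the compensated martingale part drops out.

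To reach the CDF formulation I would apply this identity with $\phi$ the indicator of a half-line, $\phi = \ind_{[x,\infty)}$, so that $\E[\phi(X_t)] = \eta(t,x)$. Since this $\phi$ is discontinuous, I would first replace it by a smooth mollification and pass to the limit at the end. The drift term then produces $-q(t,x) f(t,x)$, where $f(t,\cdot)$ is the (sub-probability) density of $X_t$ on $(0,\infty)$; because $f(t,x) = -\partial_x\eta(t,x)$ this is exactly the local term $q(t,x)\,\partial_x\eta(t,x)$ in \eqref{transport equation}. The jump term produces
\begin{equation*}
\E\bigl[\lambda(t)\,a(t,X_t)\,\ind_{\{x-\delta \le X_t < x\}}\bigr],
\end{equation*}
the expected discovery rate among producers who would cross level $x$ upon a jump of size $\delta$.

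The final step is to evaluate this expectation, and here the case split emerges from the atom of $X_t$ at the origin. When $x>\delta$ the interval $[x-\delta,x)$ lies in $(0,\infty)$, avoids the mass $\pi(t)$ at $0$, and the expectation equals $-\int_{x-\delta}^x \lambda(t) a(t,z)\,\eta(t,dz)$ (the sign accounting for $\eta(t,dz)$ being a nonpositive measure), yielding \eqref{eq:transport-3}. When $0<x\le\delta$ the event $\{x-\delta\le X_t<x\}$ collapses to $\{0\le X_t<x\}$ because $X_t\ge 0$, so it captures the boundary atom; separating that contribution gives the source $\lambda(t)a(t,0)\pi(t)$ plus the interior integral $-\int_{0+}^x\lambda(t)a(t,z)\,\eta(t,dz)$, which is \eqref{eq:transport-2}.

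The main obstacle I anticipate is the rigorous treatment of the boundary at $x=0$. Two points need care: justifying the Dynkin identity for the non-smooth half-line indicator, which requires knowing that the interior law of $X_t$ admits a density so that the drift term converges to $-q(t,x)f(t,x)$ rather than generating spurious boundary flux; and correctly routing the mass at the origin. The fact that production shuts off there, $\lim_{x\downarrow 0} q(t,x)=0$, is precisely what removes any flux boundary condition for $\eta$ at $x=0$ and instead makes the atom $\pi(t)$ enter only as the discovery source $\lambda(t)a(t,0)\pi(t)$ in the regime $x\le\delta$; keeping the open/closed endpoints consistent across the two regimes, so the equations agree at $x=\delta$, is the delicate bookkeeping.
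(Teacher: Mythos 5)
Your proposal is correct: it reaches both equations, with the right signs, and the case split $x\lessgtr\delta$ emerges for the right reason (whether the interval $[x-\delta,x)$ captures the atom at the origin). The route differs from the paper's in execution, though not in underlying mechanism. The paper's proof (Appendix A.1) is a weak formulation: it applies It\^o's formula for jump processes to an \emph{arbitrary} test function $h\in C_c^\infty((0,T)\times\mathds{R}_+)$, takes expectations so that $\E[h(T,X_T)-h(0,X_0)]=0$, rewrites all expectations as space-time integrals against the interior density $-\frac{\partial}{\partial x}\eta(t,x)$ plus the jump source $\lambda(t)a(t,0)\pi(t)$, and then integrates by parts until the identity reads $\int\!\!\int \frac{\partial}{\partial x}h\cdot[\cdots]\,dt\,dx=0$, whence arbitrariness of $h$ forces the bracketed expressions to vanish separately on $(0,\delta)$ and $(\delta,\infty)$. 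You instead apply the forward Dynkin identity to the single (mollified) test function $\ind_{[x,\infty)}$, whose expectation \emph{is} $\eta(t,x)$, so the equation is read off directly: the drift term yields $q\,\partial_x\eta$, and the jump term yields the $\lambda a$-weighted mass of $[x-\delta,x)$, using the identity $\ind_{[x,\infty)}(y+\delta)-\ind_{[x,\infty)}(y)=\ind_{[x-\delta,x)}(y)$. What your route buys is directness and transparency: no double integration by parts, no localization/variational lemma (which in the paper strictly only shows the bracket is constant in $x$; one still needs decay as $x\to\infty$ to conclude it is zero, a step your pointwise derivation skips entirely), and the roles of $\pi(t)$ and of $q(t,0)=0$ are immediate. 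What it costs is the mollification limit, which requires the law of $X_t$ to have no atoms on $(0,\infty)$ and a density near the evaluation point; but the paper needs the same interior regularity implicitly, since its expectations are written against $\frac{\partial}{\partial x}\eta\,dx$. Both arguments ultimately rest on the same two structural facts: It\^o's formula for drift-plus-jump processes, and the decomposition of the law of $X_t$ into the atom $\pi(t)$ at zero plus an absolutely continuous interior part.
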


The discontinuity of $\eta(t,\cdot)$ at $x=0$ generates higher order discontinuities at $x=\delta, 2\delta, 3\delta, \cdots$. Indeed, at $x = k\delta$ only the first $(k-1)$ derivatives of $\eta(t,x)$ exist. In other words, the distribution of $X_t$ has a point mass at $x=0$, a first-order discontinuity (non-continuous density) at $x=\delta$ and a smooth density for all other $x > 0$. This non-smoothness is the reason why we do not work with the ill-defined density ``$m(t,x) = -\frac{\partial}{\partial x} \eta(t,x)$''.

%%%%%%%%%%%%%%
\begin{remark}
\label{remark: random delta}
The size of  new discoveries $\delta$ can be random in general.
We may model discovery amounts via a stochastic sequence
$\delta_n, n=1, 2, \ldots,$
where each $\delta_n$ is identically distributed with some distribution $F_\delta(\cdot)$
and independent of everything else in the model.
Introducing $F_\delta$ entails replacing the integral $\int_{x-\delta}^x \lambda(t) a(t,z) \eta(t, dz)$  in \eqref{eq:transport-3} with
$\int_0^x F(du) \int_{x-u}^x \lambda(t) a(t, z) \eta(t, dz)$. Similarly, in the HJB equation we would replace $v(t, x+\delta)$ with $\int_0^\infty v(t, x+u) F_\delta(du)$.
For simplicity we stick to fixed discovery sizes for the rest of the article.
\end{remark}

%%%
\subsection{System of HJB-transport equations}
\label{sec: System of doubly coupled HJB-transport equations}

The consistency condition of Definition \ref{Mean field game Markov Nash equilibrium definition} implies that a MFG MNE is characterized by the HJB equation \eqref{mfg value function} where we plug-in the equilibrium CDF $\eta^\ast$, and the transport equation \eqref{transport equation} where we plug-in the equilibrium $q^\ast$ and $a^\ast$. The equilibrium  price  process is $p^\ast(t) = L + \int_0^\infty q^\ast(t, z) \eta^\ast(t, dz)$. The resulting system is summarized in the following.

%% Proposition: Explicit HJB-Transport equations %%%
\begin{prop}[MFG PDE's]
\label{prop: Mean field game partial differential equations}
The mean field game Nash equilibrium $(q^\ast, a^\ast, \eta^\ast)$ is determined by the HJB equation:
%% HJB
\begin{align}
\label{mfg HJB equation}
0&= \frac{\partial}{\partial t} v(t,x)  - r v(t,x)
 +  \left[ - C_a( a^\ast(t,x) ) + a^\ast(t,x)\lambda(t) \Delta_x v(t,x) \right]  \notag \\
&
\quad + \left[ p^\ast(t) q^\ast(t,x) - C_q(q^\ast(t,x))
- q^\ast(t,x) \frac{\partial}{\partial x} v(t,x) \right],   \quad
0< x , \ 0\leq t < T ,
\end{align}
where the $q^\ast(t, x)$ and $a^\ast(t, x)$ are given by
\begin{align}\label{eq:q-star-2}
& q^\ast(t,x)
= \frac{1}{\beta_1} \left( L  - Q(t) - \kappa_1
- \frac{\partial}{\partial x} v(t, x) \right)^+ ,
% = \frac{1}{\beta_1}\left[ \frac{\beta_1}{1+\beta_1} (L-\kappa_1)- \frac{\partial}{\partial x}v(t,x)   - \frac{1}{1+\beta_1} \int_0^\infty \frac{\partial}{\partial z} v(t,z) \eta^\ast(t, dz)   \right]^+ , 		
\\ \label{eq:a-star-2}
& a^\ast(t, x)
 =  \frac{1}{\beta_2}\left( \lambda(t) \Delta_x v(t,x) - \kappa_2 \right)^+ ,
\end{align}
with $Q(t)$ uniquely determined by the equation
\begin{align}
Q(t)
& =
- \int_0^\infty
\frac{1}{\beta_1}\left( L - \kappa_1 - \frac{\partial}{\partial x} v(t,x) - Q(t) \right)^+
\eta^\ast(t, dx) = - \int_0^\infty q^\ast(t, x) \eta^\ast(t,d x), 	\label{total production}
\end{align}
% transport
and the transport equation:
\begin{subequations} %\left\{
\begin{align}
%\frac{d}{dt}\pi^\ast(t) &= -\lambda(t) a^\ast(t, 0) \pi^\ast(t) - q^\ast(t, 0+) \frac{\partial}{\partial x} \eta^\ast(t, 0+) ; \label{eq:P-transport} \\
\frac{\partial}{\partial t} \eta^\ast(t,x) &= \lambda(t) a^\ast(t,0) (1-\eta^\ast(t,0+)) - \int_{0+}^x \lambda(t) a^\ast(t,z)\eta^\ast(t, dz)
+ q^\ast(t,x)\frac{\partial}{\partial x} \eta^\ast(t,x) , \quad 0<x \leq \delta ;  \label{eq:eta-1}  \\
\frac{\partial}{\partial t} \eta^\ast(t,x) &= - \int_{x -\delta}^x \lambda(t) a^\ast(t,z)
\eta^\ast(t, dz)
+ q^\ast(t,x) \frac{\partial}{\partial x} \eta^\ast(t,x), \quad\qquad x > \delta . \label{eq:eta-2}
\end{align} %\right.
\label{mfg transport equation}
\end{subequations}
\end{prop}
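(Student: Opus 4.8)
The plan is to assemble the coupled system by combining the two results already established for its separate components — the HJB characterization of Lemma~\ref{lemma: mean field game explicit HJB equation} and the transport characterization of Proposition~\ref{prop: transport equation} — and then tying them together through the two requirements of Definition~\ref{Mean field game Markov Nash equilibrium definition}. Since both ingredient results are in hand, the work is one of correct sequencing and identification rather than new analysis.

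First I would invoke the optimality condition \eqref{mfg optimality condition}. Freezing the equilibrium upper-CDF $\eta = \eta^\ast$ as an exogenous parameter, the value function $v := v^{\eta^\ast}$ of the representative producer is exactly the object treated in Lemma~\ref{lemma: mean field game explicit HJB equation}. Hence $v$ solves the HJB equation, the pointwise maximizers of the first-order conditions \eqref{eq:a-foc}--\eqref{eq:q-foc} furnish the feedback controls \eqref{eq:q-star-2}--\eqref{eq:a-star-2}, and the aggregate production $Q(t)$ is the unique root of \eqref{total production}, which exists by the monotonicity and sign argument for $G(\cdot)$ carried out in that lemma. The only rewriting needed is to present the HJB in the un-maximized form \eqref{mfg HJB equation}: substituting the optimal $q^\ast, a^\ast$ back into the two $\sup$ terms of \eqref{eq:direct-hjb} converts each supremum into the corresponding bracketed expression, which is purely algebraic. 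Substituting $q^\ast$ into \eqref{mfg price function} then yields the stated equilibrium price $p^\ast(t) = L + \int_0^\infty q^\ast(t,z)\,\eta^\ast(t,dz)$.

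Second I would invoke the consistency condition of Definition~\ref{Mean field game Markov Nash equilibrium definition}. By that condition, $\eta^\ast(t,x) = \PP(X^\ast_t \ge x)$, where $X^\ast$ is the reserves process \eqref{reserves dynamics mean field game case} driven by the equilibrium controls $(q^\ast, a^\ast)$ obtained in the previous step. Treating these controls as the exogenous inputs required by Proposition~\ref{prop: transport equation}, its conclusion applies verbatim with $q = q^\ast$, $a = a^\ast$, $\eta = \eta^\ast$ and $\pi(t) = 1 - \eta^\ast(t,0+)$, delivering precisely the piecewise transport system \eqref{mfg transport equation} together with its boundary bookkeeping at $x=0$. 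This closes the coupled system, and the non-local integral in \eqref{total production} as well as the forward-difference term $\Delta_x v$ are inherited directly from the two component results.

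The delicate point — and the reason the argument is about ordering rather than estimates — is the self-referential coupling. In Lemma~\ref{lemma: mean field game explicit HJB equation} the CDF $\eta$ is frozen, whereas in the equilibrium the \emph{same} $\eta^\ast$ must simultaneously enter the HJB through the price/aggregate-production term $Q(t)$ \emph{and} be reproduced as the law of the controlled process $X^\ast$. The step that makes the reasoning rigorous is the explicit identification of these two appearances of $\eta^\ast$: the frozen-$\eta$ analysis alone does not produce an equilibrium, and only the consistency condition forces the distribution generated by the optimal controls to coincide with the distribution used to compute them. I would therefore present this identification as the crux of the proof; no separate fixed-point or regularity argument is needed at this stage beyond invoking Definition~\ref{Mean field game Markov Nash equilibrium definition}, the existence/uniqueness of the coupled solution being deferred to the numerical and stationary analyses that follow.
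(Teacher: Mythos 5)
Your proposal is correct and takes essentially the same approach as the paper: the paper justifies Proposition~\ref{prop: Mean field game partial differential equations} with exactly this assembly, plugging the equilibrium CDF $\eta^\ast$ into Lemma~\ref{lemma: mean field game explicit HJB equation} to obtain the HJB equation, controls, and $Q(t)$, and plugging the resulting $(q^\ast,a^\ast)$ into Proposition~\ref{prop: transport equation} via the consistency condition of Definition~\ref{Mean field game Markov Nash equilibrium definition}, with no separate fixed-point or existence argument. Your identification of the double role of $\eta^\ast$ as the crux, and the deferral of existence/uniqueness, matches the paper's own (brief) treatment.
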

%%% end proposition

The HJB equation and transport equation are doubly coupled with  $\eta^\ast$ entering the HJB equation through the aggregate production which is an integral of optimal production rates $q^\ast(t,x)$ with respect to the mean-field reserves distribution $\eta^\ast(t,dx)$. Conversely, the optimal production and exploration rates $(q^\ast, a^\ast)$ obtained from the HJB equation of a representative producer drive the reserves distribution $\eta^\ast(\cdot)$.

Existence, uniqueness, and regularity of the solutions of the system of MFG PDE's is still an ongoing challenge and an area of active research. For the system \eqref{mfg HJB equation}--\eqref{mfg transport equation} the difficulty in proving existence and uniqueness of solutions lies in the non-local coupling term
$\int_0^\infty q(t,x) \eta(t, dx)$ and the forward delay term $\Delta_x v(t, x) = v(t, x+\delta) - v(t,x)$. In the more common \emph{local} coupling situation, the mean-field interaction for a representative producer with state $(t,x)$ is of the form $F(t, x, m(t, x))$, i.e.~the player interacts with the density of her neighbors $m(t,x)$ at the same $(t,x)$. In contrast, in the supply-demand context, the interaction includes \emph{all} players, namely their production rates (that can be linked to the marginal values $\frac{\partial}{\partial x} v(t, x) $) across all $x$.

Related proofs for second-order Cournot MFG PDEs have been provided in \cite{GraberBensoussan15, GraberMouzouni17}. The respective reserves dynamics involve Brownian noise and no jump terms (no exploration). Specifically, %
Graber and Bensoussan \cite{GraberBensoussan15} established existence and uniqueness of MFG MNE in the case that players leave the game after exhaustion (Dirichlet boundary conditions), while
\cite{GraberMouzouni17} recently proved existence and uniqueness of solutions in the case where reserves can be exogenously infinitesimally replenished at $x=0$ (corresponding eventually to Neumann boundary conditions). Their model (with zero volatility) can be viewed as the non-exploration $\lambda \equiv 0$ sub-case of our model. However, exogenous discoveries imply that the reserves distribution is a probability density on $(0,X_{max})$, obviating the need to track $\pi(t)$ which significantly simplifies the respective proof.
 In a related vein, Cardaliaguet and Graber~\cite{CardaliaguetGraber15} gave detailed proof of existence and uniqueness of equilibrium solution for first order MFG's with local coupling. However, first order MFG PDEs with non-local terms $v(t, x+\delta) - v(t, x)$  to the best of our knowledge have not been discussed in existing literature (except in passing in \cite[Sec 5]{ChanSircar16}), and the respective existence, uniqueness, and regularity of solutions remain an open problem. %This model was mentioned in passing in  but with little detail.

The MFG framework links the individual strategic behavior of each producer with the macro-scale organization of the market. Therefore the main economic insights concern the resulting \emph{aggregate} quantities that describe the overall evolution of the market. For this purpose, we recall the total production $Q(t)$  defined in \eqref{total production}
$A(t)$ the total discovery,
and $R(t)$ the total reserves, which are defined respectively as
\begin{align}
%Q(t) & 	\\
R(t) & = \int_0^\infty \eta^\ast(t, x) dx, 		\label{total reserves}		\\
A(t)	 & = -\delta \int_0^\infty \lambda(t) a^\ast(t,x) \eta^\ast(t, d x).	\label{total discovery}
\end{align}
Note that $R(t) = \int_0^\infty \PP( X_t \ge x) dx = \E[ X_t]$ justifying its meaning of total reserves.
The following Lemma \ref{lemma: relation of Q A R}, proven in Appendix \ref{app: relation of Q A R}, shows the relation between these quantities of interest. It can be interpreted as conservation of mass for the reserves: at the macro-scale total reserves change is simply the net difference between reserves additions (via new discoveries $A(\cdot)$) and reserves consumption (via production $Q(\cdot)$).
%%% Lemma: relation of Q A R %%
\begin{lemma}
\label{lemma: relation of Q A R}
We have the relation
\begin{align}
\frac{d}{dt}R(t) & = -Q(t) + A(t)	,
\label{relation of Q A R differential form}	\quad\text{i.e.~}\quad
R(t)
 = R(0) - \int_0^t Q(s) \, ds + \int_0^t A(s) \, ds .	
%\label{relation of Q A R integral form}
\end{align}

\end{lemma}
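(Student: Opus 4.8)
The plan is to differentiate the reserves functional $R(t) = \int_0^\infty \eta^\ast(t,x)\,dx$ under the integral sign and substitute the transport equation (the pair \eqref{eq:eta-1}--\eqref{eq:eta-2}) for $\frac{\partial}{\partial t}\eta^\ast(t,x)$. The right-hand sides split naturally into a \emph{production} contribution (the term $q^\ast\frac{\partial}{\partial x}\eta^\ast$) and an \emph{exploration} contribution (the boundary term together with the non-local integrals), and I will show that integrating these two pieces in $x$ reproduces exactly $-Q(t)$ and $A(t)$ respectively. An equivalent and slightly slicker route, which I would mention as a cross-check, is to note $R(t) = \E[X_t]$ and apply Dynkin's formula to $f(x)=x$ under the generator $\mathcal{L}_t f(x) = -q^\ast(t,x)\ind_{\{x>0\}}f'(x) + \lambda(t)a^\ast(t,x)\bigl(f(x+\delta)-f(x)\bigr)$; here $\mathcal{L}_t f(x) = -q^\ast(t,x)\ind_{\{x>0\}} + \delta\lambda(t)a^\ast(t,x)$, and taking expectations yields \eqref{relation of Q A R differential form} directly.

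For the production contribution, integrating $q^\ast(t,x)\frac{\partial}{\partial x}\eta^\ast(t,x)$ over $x\in(0,\infty)$ and reading $\frac{\partial}{\partial x}\eta^\ast(t,x)\,dx$ as the Stieltjes measure $\eta^\ast(t,dx)$ on the interior immediately gives $\int_0^\infty q^\ast(t,x)\,\eta^\ast(t,dx) = -Q(t)$ by the definition \eqref{total production}. Here I would note that $q^\ast(t,0)=0$, so the atom of $\eta^\ast(t,\cdot)$ at $x=0$ is irrelevant for this term, and that $\eta^\ast(t,x)\to 0$ as $x\to\infty$, so no boundary contribution survives (equivalently, one may integrate by parts).

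For the exploration contribution, I would combine \eqref{eq:eta-1} and \eqref{eq:eta-2} by writing the non-local integrals with the unified lower limit $(x-\delta)^+$, so that the total double integral becomes $-\int_0^\infty \Bigl(\int_{(x-\delta)^+}^x \lambda(t)a^\ast(t,z)\,\eta^\ast(t,dz)\Bigr)dx$. Applying Fubini, each fixed $z>0$ is covered precisely by the $x$-interval $[z,z+\delta]$ of length $\delta$, so this collapses to $-\delta\int_{0+}^\infty \lambda(t)a^\ast(t,z)\,\eta^\ast(t,dz)$. Adding the boundary term $\int_0^\delta \lambda(t)a^\ast(t,0)(1-\eta^\ast(t,0+))\,dx = \delta\lambda(t)a^\ast(t,0)\pi(t)$ from \eqref{eq:eta-1} then accounts exactly for the atom of $\eta^\ast(t,\cdot)$ at $x=0$, whose mass is $-\pi(t)$ as a downward jump of the upper-CDF, so the sum equals $-\delta\int_0^\infty \lambda(t)a^\ast(t,x)\,\eta^\ast(t,dx) = A(t)$ as defined in \eqref{total discovery}. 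Combining the two contributions gives $\frac{d}{dt}R(t) = -Q(t)+A(t)$, and the integral form follows by integrating in $t$.

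The main obstacle is analytic rather than algebraic: rigorously justifying the interchange of $\frac{d}{dt}$ with $\int_0^\infty dx$ and the Fubini swap in the exploration term, which requires enough decay and integrability of $\eta^\ast$, $q^\ast$ and $a^\ast$ at infinity (guaranteed by the saturation level $x_{sat}(t)$ beyond which $a^\ast\equiv 0$, by the uniform bound on $q^\ast$, and by $R(t)=\E[X_t]<\infty$). The one genuinely delicate bookkeeping point is the treatment of the atom at $x=0$: the boundary term $\lambda(t)a^\ast(t,0)\pi(t)$ in the transport equation must be matched against the convention that the integral defining $A(t)$ in \eqref{total discovery} runs over the closed half-line and therefore captures the point mass, so that exploration effort expended by exhausted producers is correctly counted as a reserve addition.
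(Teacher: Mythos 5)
Your proof is correct, and its skeleton---differentiate $R(t)$ under the integral sign, substitute the transport equations \eqref{eq:eta-1}--\eqref{eq:eta-2}, and identify the production and exploration contributions with $-Q(t)$ and $A(t)$---is the same as the paper's. The execution differs in two worthwhile ways. First, where you evaluate the non-local double integral $\int_0^\infty\bigl(\int_{(x-\delta)^+}^{x}\lambda(t) a^\ast(t,z)\,\eta^\ast(t,dz)\bigr)dx$ in one stroke by Fubini (each fixed $z>0$ is covered by an $x$-interval of length exactly $\delta$), the paper splits it into the pieces over $(0,\delta]$ and $(\delta,\infty)$ and evaluates each by integration by parts, recovering the same value $\delta\int_{0+}^\infty\lambda(t) a^\ast(t,z)\,\eta^\ast(t,dz)$ only after cancellations between the two pieces; your route is shorter and less error-prone. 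Second, and more substantively, your bookkeeping at $x=0$ is more careful than the paper's: the paper's displayed computation drops the term $\lambda(t)a^\ast(t,0)\pi(t)$ of \eqref{eq:eta-1} when integrating over $(0,\delta]$, and its final line carries $\int_{0+}^\infty$ rather than $\int_{0}^\infty$, which matches the definition \eqref{total discovery} of $A(t)$ only under the convention that the Stieltjes integral excludes the atom at zero---a convention inconsistent with, e.g., Proposition~\ref{prop: stationary mean field game equilibrium in fluid limit}, where all mass sits at $x=0$ and yet $\tilde{A}_0=\tilde{Q}_0>0$. Your reconciliation---keeping the boundary term $\delta\lambda(t)a^\ast(t,0)\pi(t)$ and matching it against the atom of mass $-\pi(t)$ that $\eta^\ast(t,\cdot)$ places at $x=0$---is the right one, so that exploration by exhausted producers is counted in $A(t)$. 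Finally, your Dynkin-formula cross-check ($R(t)=\E[X_t]$ with the generator applied to $f(x)=x$) is a genuinely different, more probabilistic argument that the paper does not give; it yields \eqref{relation of Q A R differential form} in one line at the cost of justifying the martingale property of the compensated jump integral, and it independently confirms that the atom at zero must be included in $A(t)$.
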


%%%%%%%%%%%%%%%%%%%%%%%%%%%%%
%%%%%%         Section: Numerical Analysis     %%%%%%%
%%%%%%%%%%%%%%%%%%%%%%%%%%%%%

\section{Numerical methods and examples}
\label{sec: Numerical methods and examples}

We use an iterative scheme to numerically solve the system of HJB equation
\eqref{mfg HJB equation} and transport equation \eqref{mfg transport equation},
similar to the approach in  \cite{GLL10,ChanSircar14}.
The Picard-like iterations start with an initial price process $p^{(0)}(\cdot)$ as an input into the MFG value function \eqref{mfg HJB}, which reduces to a standard optimization problem for the production and exploration rates $(q^{(0)}, a^{(0)})$.
Then we input  $(q^{(0)}, a^{(0)})$ into the equation
\eqref{transport equation} of reserves evolution to solve for  $\eta^{(0)}(\cdot, \cdot)$.
The $q^{(0)}$ and $\eta^{(0)}$ obtained are used to update the price \eqref{mfg price function}, via
$p^{(1)}(t) = \frac{1}{2}\left[ D^{-1}\left( -\int_0^\infty q^{(0)}(t, x) \eta^{(0)}(t, dx)\right) + p^{(0)}(t) \right] $.
The updated price $p^{(1)}(\cdot)$ is then used for a new iteration. As $k \to \infty$, the iterations are expected to converge to a fixed point, i.e.~a
triple $(q^{(\infty)}, a^{(\infty)}, \eta^{(\infty)})$ that simultaneously satisfies the HJB equation \eqref{mfg HJB equation} and transport equation \eqref{mfg transport equation} and hence yields a MFG MNE.

For numerical purposes we restrict to a bounded space domain $[0,X_{max}]$ which is further partitioned using a mesh $0=x_0<x_1< ... < x_M= X_{max}$, with equal mesh size
$\Delta x = x_m - x_{m-1}, m=1,\ldots,M$. Below we fix $\Delta x = 0.1$ in all the computational examples. Other numerical parameters of our examples are summarized in Table
\ref{Parameters values for numerical analysis}.

%%%% begin Table: parameters value  %%%
%%%% end of table: parameters value %%%

\begin{table}[htb]
\centering
$$\begin{array}{l|l} %\hline
\text{ Cost Functions} & \kappa_1 = \kappa_2 = 0.1, \quad \beta_1 = \beta_2 = 1 \\
\text{ Max Price/Int Rate} & L = 5, \quad r= 0.1 \\
\text{ Reserves dynamics } & \delta = 1, \quad \lambda = 1 \\
\text{ Numerical Scheme} & T=50, X_{max} = 120, \Delta x = 0.1 \\ \hline
\end{array} $$
\caption{Parameter values used for all numerical illustrations in Section \ref{sec: Numerical methods and examples}.}
\label{Parameters values for numerical analysis}
\end{table}

%%%
In Section~\ref{sec: Numerical method for HJB equation}, we introduce the numerical method to solve the HJB equation of a representative producer's game value function with price $p(t)$ exogenously given.
In Section~\ref{sec: Numerical method for transport equation}, we introduce the numerical method to solve the equation \eqref{transport equation} of reserves distribution controlled by the optimal $(q, a)$ obtained in the previous step.
In Section~\ref{sec: Numerical method for the system of HJB and transport equations},
we show the iterative scheme to solve the coupled HJB and transport equations.

%%%%%%%%%%%%%%%%%%%%%
\subsection{Numerical scheme for the HJB equation}
\label{sec: Numerical method for HJB equation}

In this section~we solve for mean field game value function $v(t,x)$ defined by \eqref{mfg HJB} with an exogenously given price $p(t)$.   Treating $p(t)$ as exogenous allows us to avoid the production control formula in
\eqref{eq:q-star}	  %\eqref{optimal q mfg implicit}
which has a mean-field dependence via $\int_0^\infty \frac{\partial}{\partial z} v(t, z)\eta(t, dz)$. Instead we use \eqref{eq:q-foc} that only depends on the player's own reserves state $x$, and reduces to a standard optimal stochastic control problem. For the exploration control we work with the first order condition as in \eqref{eq:a-foc}.
The HJB equation \eqref{mfg HJB} with boundary condition \eqref{HJB equation on boundary}
is similar to the single-agent problem in \cite{LS-Cournot}. The latter paper considered a time-stationary model which reduced the HJB equation to a first order nonlinear ordinary differential equation in $x$. In contrast, \eqref{mfg HJB} has time-dependence and hence is a genuine PDE.

We employ a method of lines to discretize the $x$ variable and treat the HJB PDE
as a system of ordinary differential equations in time variable $t$ with the terminal condition $v(T,x) = 0$.
The space derivative of $v(t, x)$ at each space grid point $x_m$ is approximated by a backward difference quotient
$\frac{\partial}{\partial x} v(t, x_m) \approx \frac{v(t, x_m) - v(t, x_{m-1})}{\Delta x}$.
The non-local term $\Delta_x v(t, x_m)$ is approximated by
$\Delta_x v(t, x_m) \approx v(t, x_{m+d}) - v(t, x_m)$ with $d = \lfloor \frac{\delta}{\Delta x} \rfloor$ so that $x_m + \delta \simeq x_{m + d}$.
We solve for $v(\cdot, x_m)$  as an ordinary differential equation in variable $t$, viewing $v(t, x_{m-1})$ and $v(t, x_{m+d})$ as source terms,
\begin{multline}
\frac{\partial}{\partial t} v(t, x_m)
\approx  r v(t, x_m)
 - \frac{1}{2\beta_1} \left[ \left( p(t) - \kappa_1  - \frac{v(t, x_m) - v(t, x_{m-1})}{\Delta x} \right)^+ \right]^2 \\
 - \frac{1}{2\beta_2} \left[ ( \lambda(t) [v(t, x_{m+d}) - v(t, x_m)]  - \kappa_2 )^+  \right]^2 ,
\quad m = 1, \ldots, M-d.
\label{eqn: ode vt interior}
\end{multline}
For the boundary case $m=0$, production stops and the equation becomes
\begin{align}
\frac{\partial}{\partial t} v(t, x_0)
= r v(t, x_0)
- \frac{1}{2\beta_2} \left[ ( \lambda(t) [v(t, x_{d}) - v(t, x_0)]  - \kappa_2 )^+  \right]^2 .
\label{eqn: ode vt left boundary}
\end{align}
Recall that for $x$ large enough, saturation level of reserves is reached and no exploration effort is made.
We take $X_{max}$ such that this would be true for $x_{M-d+1}, \ldots, x_M=X_{max}$ whereby the term
$( \lambda(t) \Delta_x v(t, x)  - \kappa_2 )^+$ vanishes and \eqref{eqn: ode vt interior} simplifies to
\begin{align}
\frac{\partial}{\partial t} v(t, x_m)
= r v(t, x_m)
- \frac{1}{2\beta_1} \left[ \left( p(t) - \kappa_1  - \frac{\partial}{\partial x} v(t, x_m) \right)^+ \right]^2 ,\quad m = M-d+1, \ldots, M.
\label{eqn: ode vt right boundary}
\end{align}
%Once numerical result is obtained, we need to verify that
%$( \lambda(t) \Delta_x v(t, x_m)  - \kappa_2 )^+$ is indeed zero for $m = M-d+1, ..., M$.
%Otherwise, the right limit $X_{max}$ of the space domain should be extended so as to accommodate the condition
%$( \lambda(t) \Delta_x v(t, x_m)  - \kappa_2 )^+=0$ for $m = M-d+1, ..., M$.
We use Matlab's Runge-Kutta solver \texttt{ode45} to solve (backward in time) the system
\eqref{eqn: ode vt interior}--\eqref{eqn: ode vt right boundary}
of ordinary differential equations for $\{v(t, x_m): m = 0, 1, \ldots, M\}$.

%%%%%%%%%%%%%%%%%%
\subsubsection{A numerical example of the HJB equation }\label{sec:hjb-ex}
%We  now apply the numerical scheme introduced above  of a representative producer's game value function with an exogenously specified price process $\{ p(t), 0\leq t \leq T\}$.

To illustrate the above approach to solve the HJB equation \eqref{mfg HJB}, we consider an example with a constant exogenous price $p(t)=3, \forall t \le T$. To prescribe $\lambda(t)$, observe that intuitively chances of a new discovery should be proportional to the remaining reserves underground. Assuming the global exploitable reserves decrease (linearly) in time due to ongoing exploration and production, we are led to consider a linear link between $t$ and discovery rate $\lambda(t)$:
% We consider $\lambda(t)$ as a linear function of "time-to-terminal" such that
\begin{align}
\lambda(t) = \left( 1 - t/\bar{T}\right)^+.
\notag
\end{align}
The time $\bar{T}$ can be viewed as global exhaustion of the commodity.

Figure \ref{fig: q_single_agent and a_single_agent} shows the resulting optimal production rate $q(t,x)$ and exploration effort $a(t,x)$ for several intermediate $t$'s.
At each $t$, production rate $q(t,x)$ is increasing in reserves level $x$ (asymptotically reaching $p(t) - \kappa_1$ as $ x\to\infty$),
while exploration effort $a(t,x)$ is decreasing in $x$, becoming zero $a(t,x) = 0$ for $x \ge 80$.
The monotonicity of $q(t,\cdot)$ and $a(t,\cdot)$ is due to decreasing marginal value of reserves,
which is consistent with the results in \cite{LS-Cournot,LudkovskiYang14}.
Both production and exploration rates decrease in $t$,
because the discovery rate $\lambda(t)$ is decreasing,
which gives decreasing motivation for exploration and in turn lowers production as marginal value of reserves rises. The above $q(t,x)$ and $a(t,x)$ for $0\leq t \leq T$ and $0 \leq x \leq X_{max}$ will be used in the next Section~\ref{sec: Numerical method for transport equation} as input to compute the evolution of reserves distribution.

%\vspace{-2cm}
\begin{figure}[htb]
%\hspace{-18pt}
%\vspace{-2cm}
\begin{center}
\begin{tabular}{cc}
%\hspace{-0.8cm}
\begin{minipage}{0.47\textwidth}
\includegraphics[width=0.98\textwidth, height=2.4in,trim=1in 2.65in 1in 2.2in]{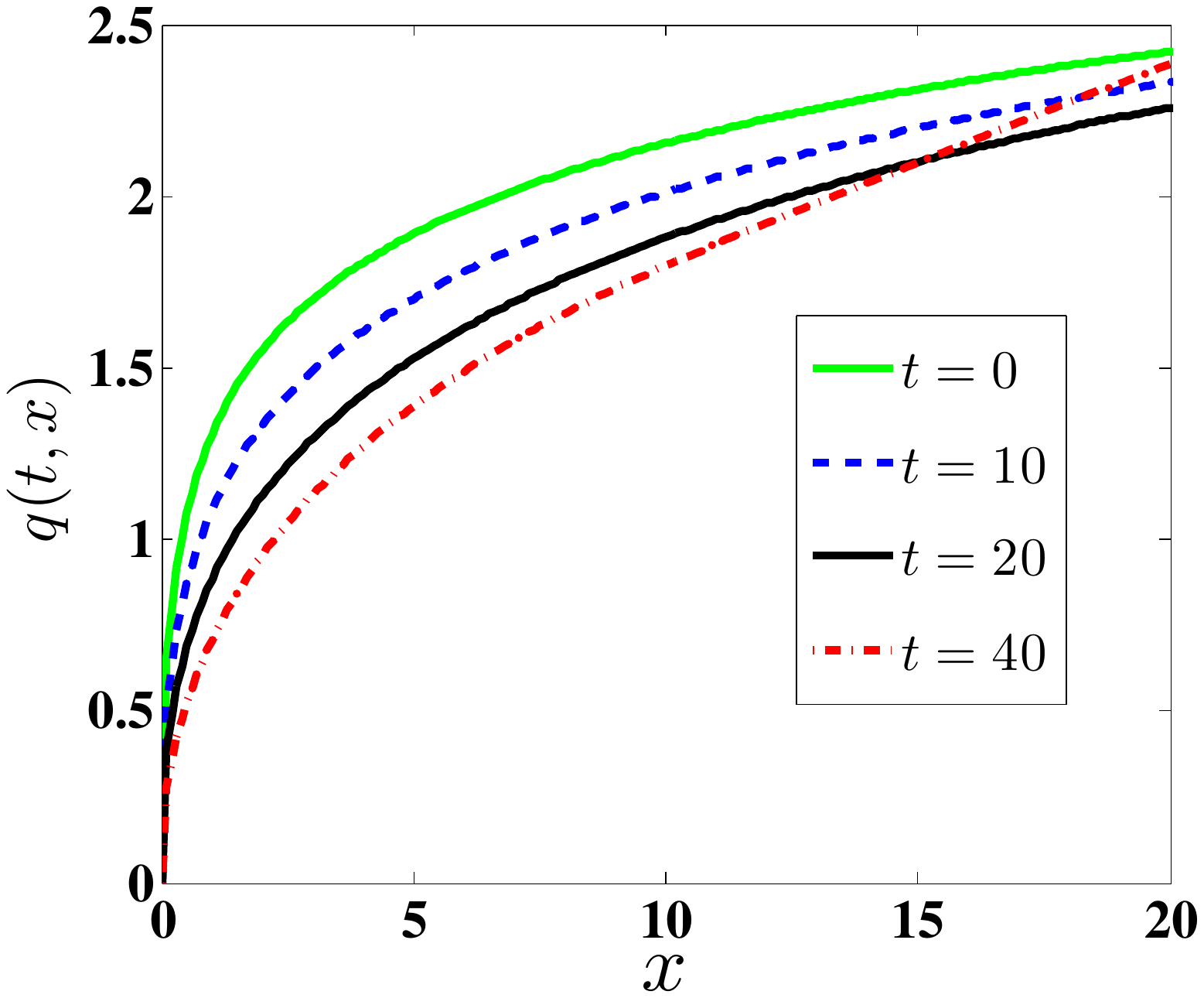}
\end{minipage} &
%\hspace{-3cm}
\begin{minipage}{0.47\textwidth}
\includegraphics[width=0.98\textwidth, height=2.4in,trim=1in 2.65in 1in 2.2in]{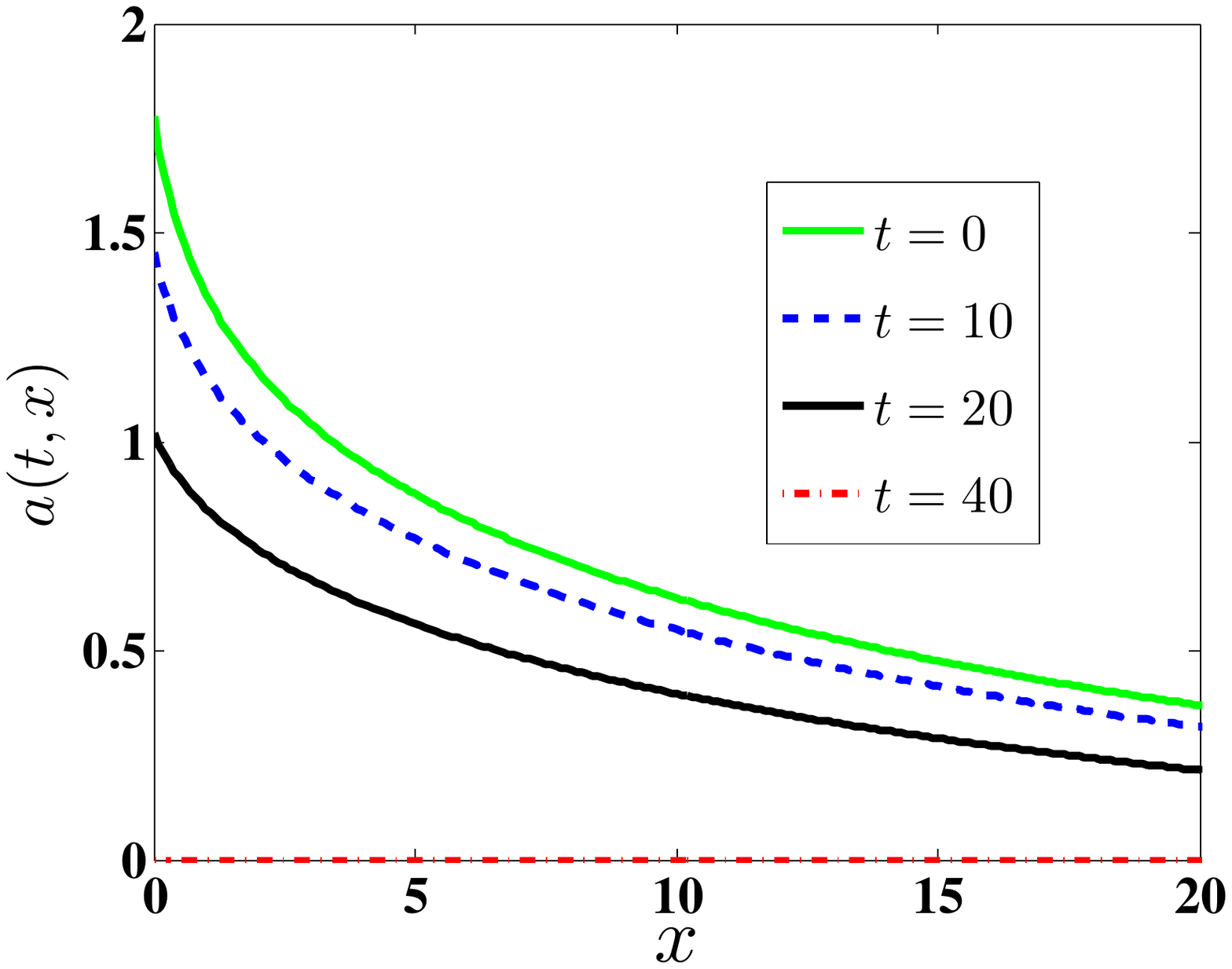}
\end{minipage}
\end{tabular}
%\vspace{-3cm}
\begin{minipage}{0.97\textwidth}
\caption{ Production and exploration controls $(q, a)$ associated with the HJB equation \eqref{mfg HJB} under constant price $p(t)=3$ and $\lambda(t) = (1-0.025t)^+, 0\le t \le T$.
%The parameters are $\beta_{1,2}=1$, $\kappa_{1,2}=0.1$, $r=0.1$, $\delta=1$.
%We choose $T=50, \bar{T}=40$, $X_{max}=30$.
Left panel: production rate $q(t,x)$. Right panel: optimized exploration rate $a(t,x)$. }
\label{fig: q_single_agent and a_single_agent}
\end{minipage}
\end{center}
\end{figure}

%%%%%%%%%%%%%%%%%%%%%%%%%%%%%%%%%

\subsection{Numerical scheme for transport equation}
\label{sec: Numerical method for transport equation}

We now assume given controls $q(t,x), a(t,x)$ and take up the evolution of the reserves distribution.
To numerically solve the transport equations of $\eta(t, x)$ we use a fully explicit finite difference scheme which replaces derivatives with discretized increments of the respective functions over a grid. We use the same partition in the space domain $[0, X_{max}]$ using $\Delta x$ as in the previous section.
To justify this bounded domain for the $x$-variable, recall the discussion about saturation level $x_{sat}$ at the end of Section \ref{sec: Game value function of a representative player} which motivates us to assume that $a(t,x) = 0$ for $x$ large enough, and in turn implies that $\eta(t,x) =0$ for $x$ large enough (e.g.~$x \ge \sup_t x_{sat}(t) + \delta$, with the additional assumption that the support of the initial distribution $\eta_0$ is also bounded). Thus, we apply the numerical boundary condition $\eta(t, X_{max}) = 0$ for all $t$.  (Even if $a(t,x) > 0$ for all $x$, we still expect that the right tail of $\eta$ should become negligible for $x$ large and so can be numerically truncated at $X_{Max}$.) Furthermore, we partition the time domain $[0, T]$ using a mesh $0 = t_0 < t_1 < ... < t_N = T$ with $t_n = n \Delta t$. To handle the boundary at $x=0$, the values $\eta(t, \cdot)$ and $q(t, \cdot)$ at $x=0+$ are numerically approximated by $\eta(t, x_1)$ and $q(t, x_1)$, respectively.

With the above setup, we
approximate both derivatives in time and in space by forward difference quotients:
%\todo{you're doing forward in space, so MUST be $x_m-x_{m-1}$}
\begin{equation}
\frac{\partial}{\partial t} \eta(t_n, x_m)
 \approx
 \frac{\eta(t_{n+1}, x_m) - \eta(t_n, x_m)}{\Delta t} ,
\quad
\frac{\partial}{\partial x} \eta(t_n, x_m)
 \approx
 \frac{\eta(t_{n}, x_{m+1}) - \eta(t_n, x_{m})}{\Delta x} .
\label{approximation to space derivative}
\nonumber
\end{equation}
By choosing $d = \lfloor \frac{\delta}{\Delta x} \rfloor$,  so that $x_m -\delta \simeq x_{m-d}$
we approximate the integral term in \eqref{eq:transport-2}--\eqref{eq:transport-3} with a Riemann sum
\begin{equation}
 -\int^{x_m}_{(x_{m}-\delta)_+} \lambda(t) a(t,x)  \eta(t, dx)
\approx
 \sum_{j=m-d + 1 \vee 1}^{m} \lambda(t_n) a(t_n, x_j) \left( \eta(t_n, x_{j-1}) - \eta(t_n, x_{j}) \right),
% = \sum_{i=m-d}^{m-1} \lambda a^n_i \left( \eta^n_{i+1} - \eta^n_{i} \right)
\label{approximation to integral term}
\end{equation}
where $\eta(t_n, x_{j-1}) - \eta(t_n, x_{j})$ is the proportion of producers with reserves in the interval $[x_{j-1}, x_j]$.

We start with given initial condition
$\eta(t_0, x_m) = \eta_0(x_m)$, $m=0, \ldots, M$, and solve forward in time using the right-edge boundary condition $\eta(t_n, x_M) = 0$, $n=0, ..., N$.
%At each time point $t_n$, \eqref{eq:transport-1} is converted into solving for $\pi(t_{n+1})$ from $\pi(t_n)$ and $\eta(t_n,\cdot)$:
%\begin{equation}
%\pi(t_{n+1}) = \pi(t_n) - \Delta t \lambda(t_n)  a(t_n, x_0) \pi(t_n)
% + q(t_n, x_1) \frac{\eta(t_n, x_1) - \eta(t_n, x_2)}{\Delta x} \Delta t ,
%\label{numerical boundary probability} \end{equation}
%where $q(t_n, x_1)$ approximates the value of $q(t_n, 0+)$,
%and $\frac{\eta(t_n, x_2) - \eta(t_n, x_1)}{\Delta x}$ is used to approximate the derivative
%$\frac{\partial}{\partial x} \eta(t_n , 0+)$.
We take $\eta(t_n,x_0) = 1$ and interpret $\eta(t_n,x_1) \approx \eta(t_n, 0+)$ so that $\pi(t_n) = \eta(t_n, x_0) - \eta(t_n, x_1)$. 
We then solve for $\eta(t_{n+1}, \cdot)$ forward in space, splitting into cases according to $x_m \lessgtr \delta$. %%
For $0<x_m<\delta$ (i.e.~$m=1,2,\ldots$),  which corresponds to \eqref{eq:transport-2}, 
we obtain the numerical value of $\eta(t_{n+1}, x_m)$ as
\begin{align}
\eta(t_{n+1}, x_m)
& = \eta(t_{n}, x_m)
%+ \Delta t \lambda(t_n) a(t_n, x_m) \pi(t_n)
 + \Delta t q(t_n, x_m) \frac{\eta(t_n, x_{m+1}) - \eta(t_n, x_{m})}{\Delta x} \notag\\
& \quad
- \Delta t \sum_{j=1}^m \lambda(t_n) a(t_n, x_j) \left( \eta(t_n, x_j) - \eta(t_n, x_{j-1})  \right) .
\label{numerical eta 1}
\end{align}
where the term for $j=1$ corresponds to $\lambda(t_n) a(t_n,0) \pi(t_n)$ in \eqref{eq:transport-2}.
For $x_M > x_m > \delta$, cf.~-\eqref{eq:transport-3},
 we obtain the numerical value of $\eta(t_{n+1}, x_m)$ by
\begin{align}
\notag
\eta(t_{n+1}, x_m)
& =  \eta(t_n, x_m)
- \Delta t \sum_{j=m-d+1}^{m} \lambda(t_n) a(t_n, x_j) \left( \eta(t_n, x_{j} ) - \eta(t_n, x_{j - 1}) \right) 	\notag	\\
 & \qquad + q(t_n, x_m) \left( \eta(t_n, x_{m+1}) - \eta(t_n, x_{m}) \right) \frac{ \Delta t }{\Delta x} .
\label{numerical eta 2}
\end{align}

Note that the above equations require only the values $a(t_n, x_m), q(t_n, x_m)$ and there is no difficulty in combining a method-of-lines approach for the HJB portion of the MFG equations with the above fully discretized finite-difference scheme for the transport equation.

\subsubsection{Illustrating the Evolution of Reserves Distribution}\label{sec:transport-ex}
%%%%%
As an example suppose that the initial reserves distribution has a parabolic initial density $m_0(x)$
$$
m_0(x)=\frac{6x(u-x)}{u^3} \quad \Leftrightarrow \quad \eta_0(x) = 1- 3 (x/u)^2 + 2 (x/u)^3   \quad\text{for}\quad 0 \le x \le u,
$$
and $m_0(x) = 0$ otherwise.
%The probability density curve has parabolic shape ranged symmetrically between $0$ and $u$  with center $x = \frac{u}{2}$.
In the example shown in Figure \ref{fig: P_t and m_t example}, we take $u=10$, cf.~$m(0,x)$ on the left panel of the  Figure.
The evolution of boundary probability $\pi(t) = \PP(X_t = 0)$ and the density of reserves distribution $m(t,x) = - \frac{\partial}{\partial x}\eta(t, x)$ are shown in Figure~\ref{fig: P_t and m_t example}.
Numerically the density function is approximated by a difference quotient $m(t_n, x_m) \approx \frac{\eta(t_n, x_{m}) - \eta(t_n, x_{m+1})}{\Delta x}$. Since discovery rate $\lambda(t)$ decreases in time, the reserves density $m(t,x)$ shifts towards zero as time evolves, as shown on the left panel of Figure~\ref{fig:  P_t and m_t example}. Similarly,
the proportion $\pi(t)$ of producers with no remaining reserves increases in $t$ and zero global reserves are left shortly after discovery becomes impossible $\inf\{t : \pi(t) = 1 \} \simeq 41$, cf.~right panel of Figure \ref{fig:  P_t and m_t example}. %, that is, after $t=40$ all the producers have no remaining reserves.
We also note the discontinuity of $m(t,\cdot)$ at $x=\delta$ due to the discrete reserves jumps from $X_t = 0$.

%%%%%
\begin{figure}[htb]
\begin{center}
\begin{tabular}{rl}
\begin{minipage}{0.47\textwidth}
\includegraphics[width=0.98\textwidth,height=2.1in,trim=0.95in 3in 1.25in 2.95in]{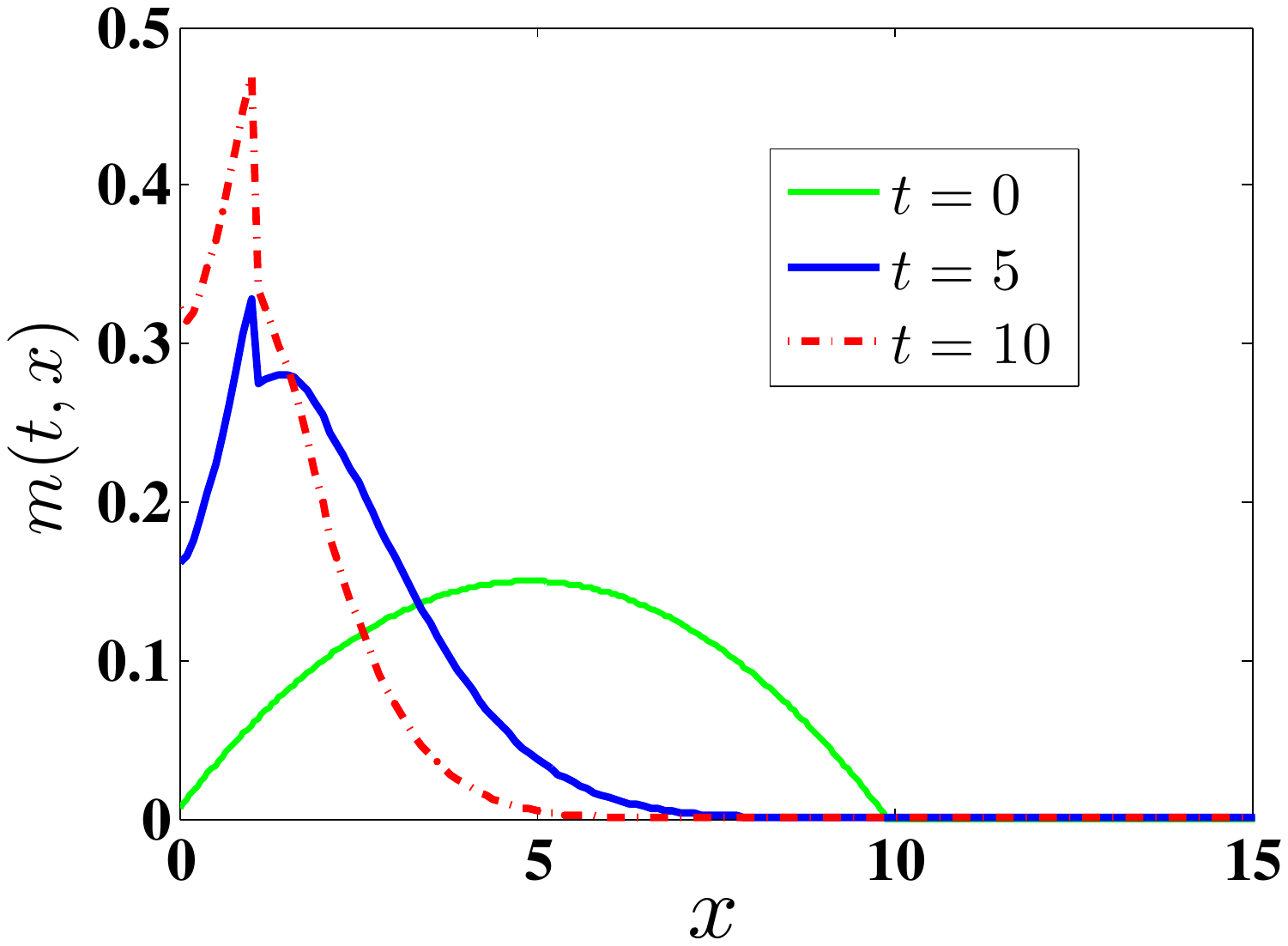}
\end{minipage} &
\begin{minipage}{0.47\textwidth}
\includegraphics[width=0.98\textwidth, height=2.1in,trim=.95in 3in 1.25in 2.95in]{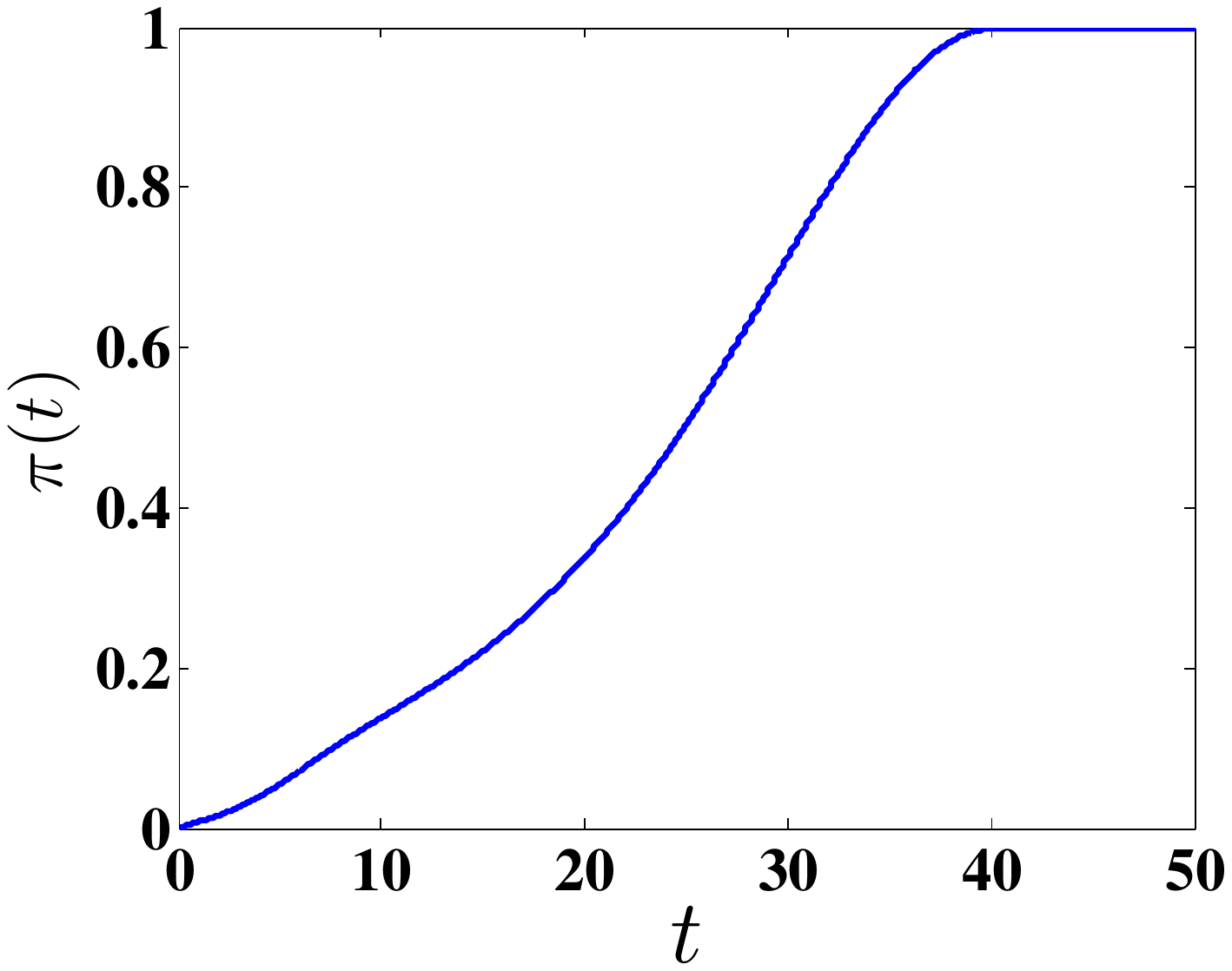}
\end{minipage}
%\vspace{-2cm}
\end{tabular}
\begin{minipage}{0.97\textwidth}
\caption{
Evolution of reserves distribution under the production and exploration controls $(q, a)$ obtained in Section \ref{sec: Numerical method for HJB equation}.
The discovery rate is $\lambda(t) = (1-0.025t)^+$ and unit amount of a discovery is $\delta = 1$.
Left panel:  Density of reserves distribution $m(t,x)=-\frac{\partial}{\partial x} \eta(t, x)$ for several $t$'s.
%The initial density is given by $m_0(x) = m(0, x)$.
Right:  Proportion of producers with no reserves $\pi(t) = \PP(X_t = 0)$. After $t=41$ all reserves are exhausted.   }
\label{fig:  P_t and m_t example}
\end{minipage}
\end{center}
\end{figure}

%%%%%%%%%%%%%%%%%%%%%%%%%%%%
%%%    Numerical method for the system of HJB and transport equations   %%% %%%%%%%%%%%%%%%%%%%%%%%%%%%%%%

\subsection{Numerical scheme for the MFG system}
\label{sec: Numerical method for the system of HJB and transport equations}

We introduce an iterative scheme to solve the system of coupled HJB and transport equations.
Our solution strategy consists of a loop over the following three steps. The loops are repeated over the iterations $k=0,1,\ldots$ until numerical convergence.

To initialize, we start with an initial price process $p^{(0)}(t)$ (greater than $\kappa_1$, to ensure strictly positive production rate). In Step 1, given the current  $p^{(k)}(\cdot)$, the numerical scheme in Section~\ref{sec: Numerical method for HJB equation} is implemented for the HJB equation, outputting the optimal production $q^{(k)}$ and exploration $a^{(k)}$ rates.
 Next in Step 2, these $q^{(k)}$ and $a^{(k)}$ are substituted into the transport equation to solve for $\eta^{(k)}$, following the scheme in Section~\ref{sec: Numerical method for transport equation}.
We then compute the total production $Q^{(k)}$ by using a Riemann sum to approximate the integral of $q^{(k)}(t,x)$ with respect to $\eta^{(k)}(t,\cdot)$.  Finally, in Step 3 we update the price to $p^{(k+1)}$.
Observe that if $p^{(k)}(t)$ is lower than equilibrium price $p^\ast(t)$ for all $t\in [0, T]$,
the resulting $Q^{(k)}(t)$ will be lower than the equilibrium $Q(t)$. As a result, $D^{(-1)}(Q^{(k)}(t))$ will be higher than $p^\ast(t)$, and vice versa. Thus to speed up convergence, we take $p^{(k+1)}(t)$ in the next iteration to be the average of $p^{(k)}(t)$ and $D^{-1}(Q^{(k)}(t))$. Numerically we observe that this yields a monotone sequence of $p^{(k)}(t)$'s, improving convergence to the equilibrium  $p^\ast(t)$.

{\bf Step 0}.  Start with an initial guess $p^{(0)}(t), t \in [0,T]$ of market price.

{\bf Step 1}. For iteration $k=0,1,2,...$, and given $p^{(k)}(\cdot)$, solve the HJB equation \eqref{mfg HJB equation} to obtain $v^{(k)}(t,x)$
and  the corresponding $q^{(k)}(t,x)$ and $a^{(k)}(t,x)$ as in \eqref{eq:q-star-2}-\eqref{eq:a-star-2}.

{\bf Step 2}. With the above $q^{(k)}$ and $a^{(k)}$ solve the transport equation to obtain $\eta^{(k)}(t,x)$ satisfying \eqref{transport equation}.

{\bf Step 3}. Update the market price via  the new total quantity of production
\begin{equation*}
p^{(k+1)}(t):= \frac{ D^{-1}\left( Q^{(k)}(t)\right) + p^{(k)}(t) }{2}
\ \quad\text{with}
\quad Q^{(k)}(t) = \sum_{m=1}^{M-1} q^{(k)}(t,x_m) [ \eta^{(k)}(t,x_m) - \eta^{(k)}(t,x_{m+1})].
%-\int_0^\infty q^{(k)}(t,x) \ \eta^{(k)}(t, dx).
\end{equation*}

{\bf Repeat} Steps 1 --- 3 until convergence in the sup-norm defined as
$\left\| \cdot  \right\|_\infty :=\sup_{[0, T]\times[0, X_{max}]}|\cdot|$.
Iteration will stop when tolerance of error $TolError$ is satisfied
\begin{align}
\left\| v^{(k+1)} - v^{(k)}  \right\|_\infty < TolError, 	\quad\text{and}\quad
\left\| \eta^{(k+1)} - \eta^{(k)}  \right\|_\infty < TolError .
\label{eq:tol-error}
\end{align}

We continue with the running example where the discovery rate is
$ \lambda(t) = \left(1- 0.025t\right)^+$, $\delta = 1$ and initial price process is $p^{(0)}(t) = 3 \forall t$. Recall that the solutions obtained in Sections~\ref{sec:hjb-ex} and \ref{sec:transport-ex} can be viewed as the first iteration $k=0$ of the above scheme.
Figure~\ref{fig: Convergence of iterations} illustrates the iterations over $k=0,1,\ldots$ with the resulting HJB value functions $v^{(k)}(t,x)$  at fixed time $t=10$.
 In each iteration $k$, if $v^{(k)}(t, x)$ is lower than the equilibrium value $v^\ast(t, x)$ for all
$x \in [0, X_{max}]$ with some $t$ fixed, then in the next iteration $v^{(k+1)}(t, x)$ will move up towards the equilibrium level $v(t, x)$. This pointwise monotone convergence in $x$ is observed in Figure~\ref{fig: Convergence of iterations}. Numerical convergence with a tolerance of $TolError=10^{-6}$ in \eqref{eq:tol-error} is achieved  after $k=4$ iterations.

%%%%%
\begin{figure}[htb]
\begin{center}
%\begin{tabular}{c}
%\begin{minipage}{0.47\textwidth}
\includegraphics[width=0.5\textwidth,height=2.2in,trim=0.85in 3in 1.1in 2.65in]{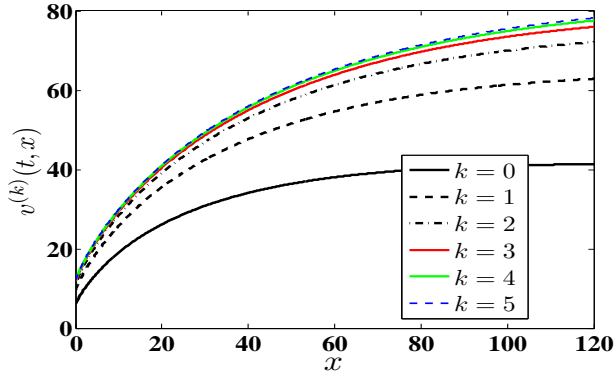}
%\end{minipage}
%\vspace{-2cm}
%\end{tabular} \\
\begin{minipage}{0.97\textwidth}
\caption{Convergence of the numerical scheme in Section \ref{sec: Numerical method for the system of HJB and transport equations}. We start with initial guess $p^{(0)}(t) = 3 \forall t \in [0, T]$, and discovery rate $\lambda(t) = (1-0.025t)^+$.
Game value function at $t=10$, $v^{(k)}(t, x)$, converges after $k\geq 4$ iterations.    }
\label{fig: Convergence of iterations}
\end{minipage}
\end{center}
\end{figure}

Figure~\ref{fig: Q A R exhaustible} shows the resulting evolution of total production $Q(t)$, total discovery rate $A(t)$, and total reserves level $R(t)$.
Total reserves $R(t)$ decrease as production proceeds; in turn
decreasing $R(t)$ lowers the total production rate $Q(t)$ and raises market price $p(t)$. Interestingly we observed a hump shape in $t \mapsto A(t)$: initially exploration efforts rise, then peak and gradually decline. This complex relationship is driven by the changing exploration success parameter $\lambda(t)$ (that discourages exploration as time progresses) and the reserves distribution $\eta(t,x)$ (which encourages exploration as reserves tend to get depleted on average).

To get some further insights, we compare these results with the non-exploration (NE) case that has zero discovery rate $\lambda(t)=0$. When $\lambda(t)=0$, no exploration effort will be made $a^\ast(t,x) \equiv 0$ as there is no hope to have any discovery. Consequently, producers simply gradually extract their initial reserves, eventually leading to total depletion, $R^{NE}(t) = 0$  for $t > 10.5$ in the Figure. This postponement of the reserves ``Doomsday'' is illustrated in the right-most panel of Figure~\ref{fig: Q A R exhaustible} that plots the evolution of the proportion of exhausted producers' $\pi(t)$.
In comparison, for a model with exploration ultimate depletion only happens around $t=41$ (recall that $\lambda(t) = 0$ after $t=40)$. In fact at $t=10$, less than 10\% of producers have no reserves. As expected, because exploration increases global reserves, $R^E(t) \ge R^{NE}(t)$, the respective  marginal value of reserves is lower and hence production is boosted, $Q^E(t) \ge Q^{NE}(t) \forall t$. Thus, exploration not only delays exhaustion but also unambiguously raises revenues.

%%%
\begin{figure}[htb]
\begin{center} %\hspace*{1pt}
\begin{tabular}{rccl}
\begin{minipage}{0.23\textwidth}
\includegraphics[width=0.97\textwidth,height=2in,trim=1.25in 2.95in 1.35in 2.85in]{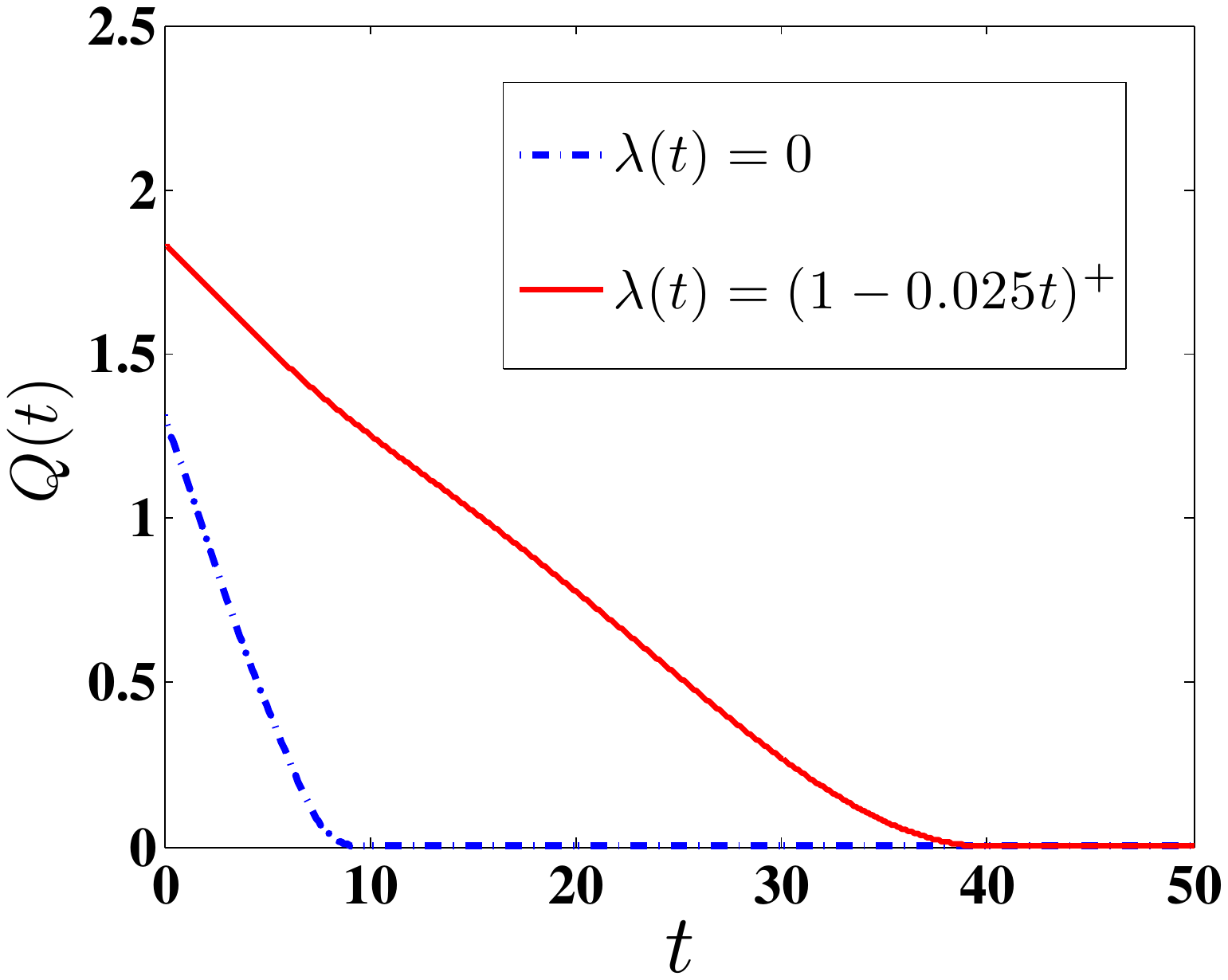}
\end{minipage} &
\begin{minipage}{0.23\textwidth}
\includegraphics[width=0.97\textwidth,height=2in,trim=1.35in 2.95in 1.35in 2.85in]{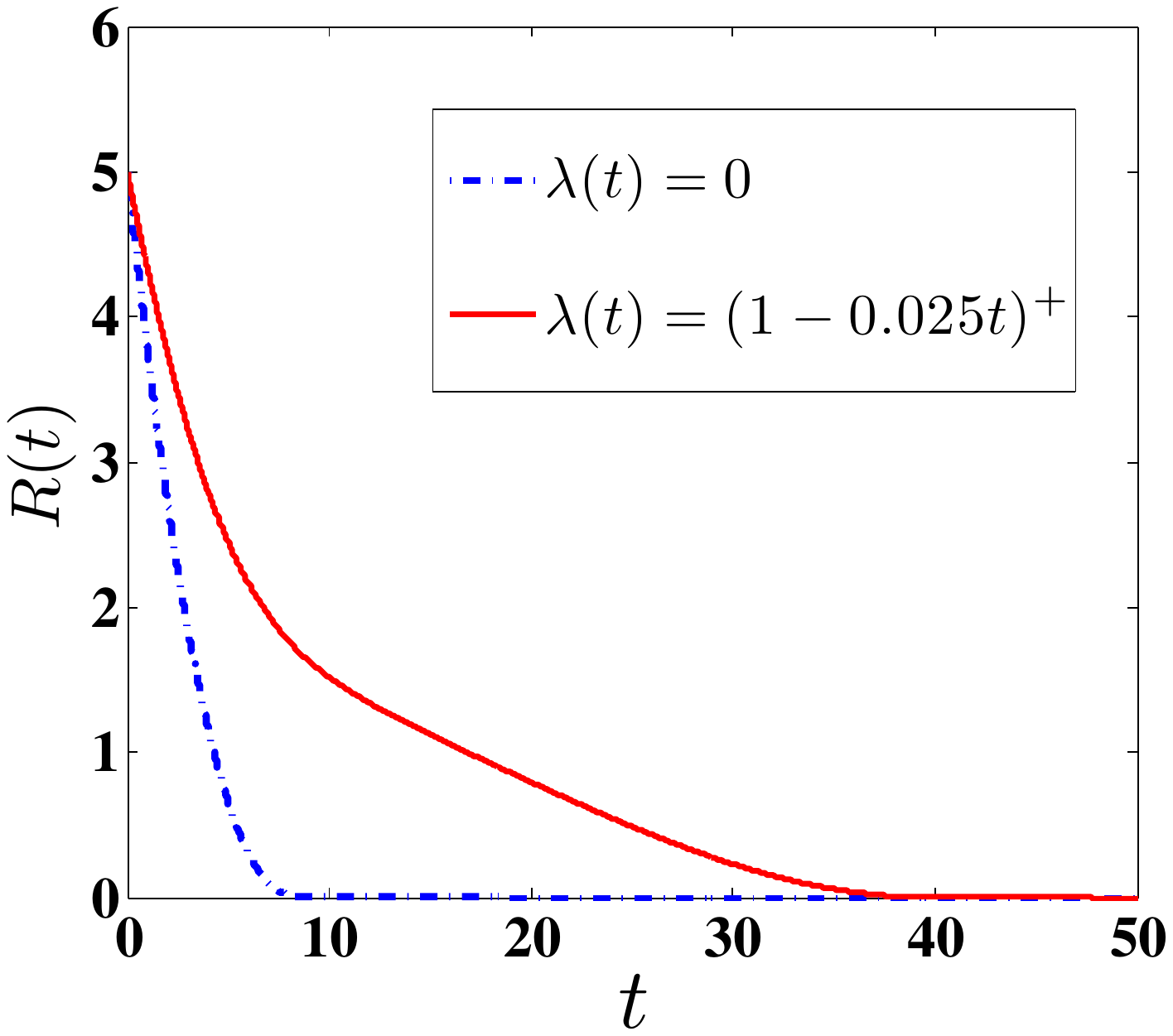}
\end{minipage} &
\begin{minipage}{0.23\textwidth}
\includegraphics[width=0.97\textwidth,height=2in,trim=1.25in 2.95in 1.35in 2.85in]{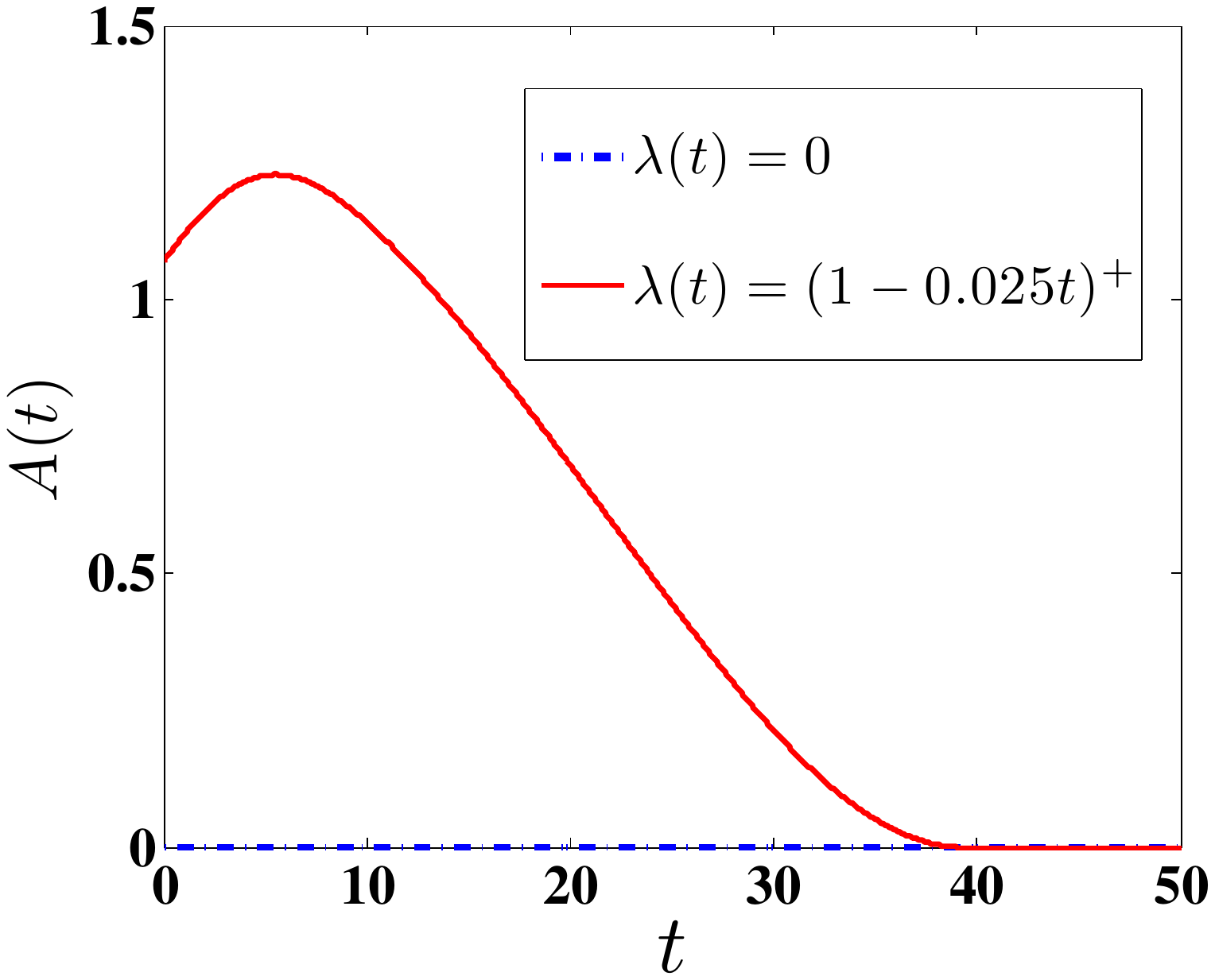}
\end{minipage} &
\begin{minipage}{0.23\textwidth}
\includegraphics[width=0.97\textwidth,height=2in,trim=1.25in 2.95in 1in 2.85in]{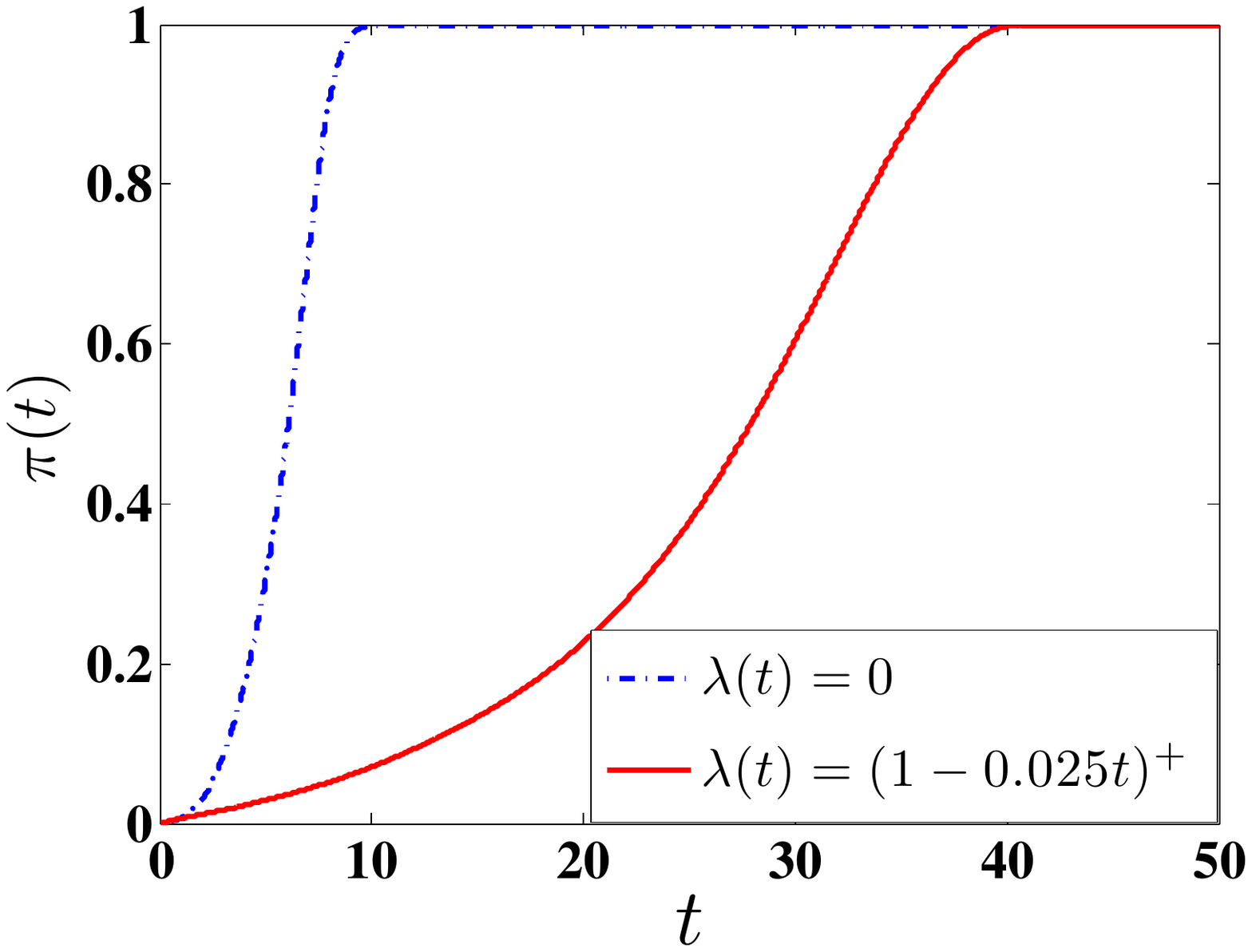}
\end{minipage}
\end{tabular}
\end{center}
\begin{minipage}{0.97\textwidth}
\caption{From left to right: evolution of total production $Q(t)$, total reserves $R(t)$, total discovery rate $A(t)$ and proportion $\pi(t)$ of producers with no reserves as a function of $t$. We show
 discovery rate $\lambda(t) = (1-0.025t)^+$, in comparison to no-exploration $\lambda(t)=0$. 
% The parameters are $\beta_{1,2}=1$, $\kappa_{1,2}=0.1$, $r=0.1$, $\delta=1$.
% We choose $T=50, \bar{T}=40$, $X_{max}=30$.
\label{fig: Q A R exhaustible}}
\end{minipage}
\end{figure}

%%%%%%%%%%%%%%%%%%%%%%%%%%%%%%%%%%%%%%%%%%%%
\section{Stationary mean field game Nash equilibrium}
\label{sec: Stationary mean field game Nash equilibrium}

In Section~\ref{sec: Mean field game Nash equilibrium} we studied a generic model with time-inhomogeneous discovery rate $\lambda(t)$, which would typically be taken to be decreasing in time.
When there are still abundant resources underground, it is reasonable to assume that the discovery rate is time-homogeneous $\lambda(t) = \lambda$, for some $\lambda > 0$.
Thanks to exploration, the commodity used up for production can be compensated by new discoveries, and thus a \emph{stationary} level of production and exploration can be obtained. In this section, we discuss such stationary MFG equilibria denoted by
$(\tilde{q}, \tilde{a}, \tilde{\eta})$.
Specifically, if the reserves has initial distribution $X_0\sim \tilde{\eta}$, and all the players apply the strategy $ q_t = \tilde{q}(x;  \tilde{\eta})$ and $a_t = \tilde{a}(x; \tilde{\eta})$, then the reserves process
\begin{align}\label{eq:stat-X}
d X_t  = - \tilde{q}(X_t)\mathds{1}_{\{ X_t >0\}} dt + \delta d \tilde{N}_t
\end{align}
has the distribution $\tilde{\eta}(\cdot)$ for all $t>0$, that is, the reserves distribution is invariant in time.
%The stationary reserves distribution will be characterized in the stationary transport equation \eqref{stationary mfg transport equation}.
%%%
We define the stationary objective functional $\tilde{\mathcal{J}}$ of a player with  current reserves level $x$ and conditionally on a reserves distribution $\tilde{\eta}(\cdot)$ as
\begin{equation}
\label{stationary mfg objective functional}
\tilde{\mathcal{J}}( \tilde{q}, \tilde{a} ;x, \tilde{\eta})
: =  \E
\left\{ \int_0^{\infty}\left[ D^{-1}\left( \tilde{Q}(\tilde{\eta}) \right) \tilde{q}(X_t) - C_q(\tilde{q}(X_t)) - C_a(\tilde{a}(X_t)) \right] e^{-rt} dt \ \middle| \ X_0 = x\right\} ,
\end{equation}
%%%
where $\tilde{Q}(\tilde{\eta}) := - \int_0^\infty \tilde{q}(x) \tilde{\eta}(dx) $ is the stationary aggregate production.
%\notag

%% definition: stationary mean field game Nash equilibrium
\begin{defn}[Stationary MFG MNE]
\label{defn: Stationary mean field game Nash equilibrium}
A stationary mean field game Nash equilibrium is a triple
$\left( \tilde{q}^\ast, \tilde{a}^\ast, \tilde{\eta} \right)$ such that for $(X_t)$ from \eqref{eq:stat-X}
the distribution of reserves $\tilde{\eta} = \PP( X_t \ge x) \forall t$ is unchanged under the strategies
$(\tilde{q}^\ast, \tilde{a}^\ast)$,
and
\begin{equation}
\label{stationary mfg optimality condition}
\tilde{v}( x) \equiv \tilde{\mathcal{J}}( \tilde{q}^\ast, \tilde{a}^\ast; \tilde{\eta})
\geq
\tilde{\mathcal{J}}(q, a; \tilde{\eta}),
\quad
\forall (q, a)\in \mathcal{A} .
\end{equation}

\end{defn}
%% end of definition: stationary mean field game Nash equilibrium

The following Proposition \ref{prop: Stationary mean field game partial differential equations} gives the system of stationary HJB and transport equations for $\tilde{v}, \tilde{\eta}$ under a constant discovery rate $\lambda > 0$. Intuitively, it is equivalent to the equations in the previous section after dropping the dependence on $t$. Consequently, we pass from PDE's to ordinary differential equations in $x$.

%%% lemma: Explicit HJB-Transport equations %%%
\begin{prop}[Characterizing stationary MFG equilibrium]
\label{prop: Stationary mean field game partial differential equations}

The stationary value function $\tilde{v}$ and upper-CDF $\tilde{\eta}$ satisfy:% \eqref{stationary mfg HJB equation} and transport equation \eqref{stationary mfg transport equation}.
\begin{align}
\label{stationary mfg HJB equation}
  r \tilde{v}( x) & =
   \left[ - C_a( \tilde{a}^\ast( x) ) + \tilde{a}^\ast( x)\lambda \Delta_x \tilde{v}( x) \right]
+ \left[ \tilde{p}  \tilde{q}^\ast( x) - C_q(\tilde{q}^\ast( x))
- \tilde{q}^\ast( x) \tilde{v}'( x) \right] , \quad
  x > 0 ; \\
\label{stationary mfg transport equation}
& \begin{cases}
%0 = -\lambda \tilde{a}^\ast (0) \tilde{\pi} - \tilde{q}^\ast ( 0+ )  \tilde{\eta}'(0+) , \\
%%
0  = \lambda \tilde{a}^\ast (0) (1-\tilde{\eta}(0+))
- \int_{0+}^x \lambda \tilde{a}^\ast (z) \tilde{\eta}(dz)
+ \tilde{q}^\ast(x ) \tilde{\eta}'(x) , \qquad 0<x \leq \delta ,   \\
0 = - \int_{x -\delta}^x \lambda \tilde{a}^\ast( z )
\tilde{\eta}( dz)
+ \tilde{q}^\ast(x) \tilde{\eta}'(x), \qquad\qquad\qquad x > \delta  ,
\end{cases}
\end{align}
where the equilibrium stationary production and exploration rates ($\tilde{q}^\ast$, $\tilde{a}^\ast$) and  price $\tilde{p}$ are
\begin{align}
 & \tilde{q}^\ast(x)
 = \frac{1}{\beta_1} \left( L  -  \tilde{Q} - \kappa_1
-  \tilde{v}'(x) \right)^+ ,  	\notag	\\
& \tilde{a}^\ast( x)
 =  \frac{1}{\beta_2}\left( \lambda \Delta_x \tilde{v}( x) - \kappa_2 \right)^+ ,
\notag \\
& \tilde{p} = D^{-1}\left( \tilde{Q}  \right) = L + \int_0^\infty \tilde{q}^\ast(x) \tilde{\eta}(dx) ,
\end{align}
with $\tilde{Q}$ uniquely determined by the equation
\begin{align}
\tilde{Q}
& =
- \int_0^\infty
\frac{1}{\beta_1}\left( L - \kappa_1 -  \tilde{v}'(x) - \tilde{Q} \right)^+
\tilde{\eta}(dx) .  \label{eq:stat-Q}
\end{align}

\end{prop}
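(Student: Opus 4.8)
The plan is to treat Proposition~\ref{prop: Stationary mean field game partial differential equations} as the time-homogeneous specialization of Lemma~\ref{lemma: mean field game explicit HJB equation} (for the HJB part) and Proposition~\ref{prop: transport equation} (for the transport part), exploiting that under a constant discovery rate $\lambda$ both the controlled dynamics \eqref{eq:stat-X} and the running profit in \eqref{stationary mfg objective functional} are time-invariant, while the discount factor $r>0$ makes the infinite-horizon integral converge. First I would establish that the optimal value of \eqref{stationary mfg objective functional} depends only on the current reserves $x$, so that $\tilde v(x) = \sup_{(q,a)}\tilde{\mathcal J}(q,a;x,\tilde\eta)$ is well-defined and time-homogeneous. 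This follows from a standard shift-invariance argument for discounted infinite-horizon control, using that the per-period reward is bounded (production is capped at $L-\kappa_1$ and the costs are nonnegative), so the dynamic programming principle applies.

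Given time-homogeneity, the associated HJB equation is exactly \eqref{eq:direct-hjb} with $\partial_t v\equiv 0$ and $p(t;\eta)$ replaced by the constant $\tilde p$, i.e. $r\tilde v(x) = \sup_{a\ge 0}[-C_a(a)+a\lambda\Delta_x\tilde v(x)] + \sup_{q\ge 0}[\tilde p\,q - C_q(q) - q\,\tilde v'(x)]$. The two suprema are resolved by the same first-order conditions used to prove Lemma~\ref{lemma: mean field game explicit HJB equation}: inserting the quadratic costs \eqref{production and exploration cost functions} yields the explicit $\tilde a^\ast$ and, after substituting $\tilde p = L-\tilde Q$, the explicit $\tilde q^\ast$, giving \eqref{stationary mfg HJB equation}. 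The forward-difference term $\Delta_x\tilde v$ again arises from the jump part $\delta\,d\tilde N_t$ of \eqref{eq:stat-X}, and the boundary equation at $x=0$ (production shut off, only exploration active) carries over unchanged from \eqref{HJB equation on boundary}.

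For the market-clearing quantity, I would integrate $\tilde q^\ast(x)$ against $\tilde\eta(dx)$ exactly as in \eqref{eq:integral q eta} to obtain the fixed-point relation \eqref{eq:stat-Q}, and then reuse the monotonicity argument verbatim: the map $\tilde Q\mapsto G(\tilde Q):=\tilde Q + \int_0^\infty \frac{1}{\beta_1}(L-\kappa_1-\tilde v'(x)-\tilde Q)^+\,\tilde\eta(dx)$ is continuous, strictly decreasing, with $G(0)>0$ and $G(L-\kappa_1)<0$ whenever $L>\kappa_1$, so a unique root $\tilde Q\in[0,L-\kappa_1]$ exists. The transport equation \eqref{stationary mfg transport equation} is then obtained from Proposition~\ref{prop: transport equation} by imposing the consistency/invariance condition of Definition~\ref{defn: Stationary mean field game Nash equilibrium}, namely $\partial_t\tilde\eta\equiv 0$; setting the left-hand sides of \eqref{eq:transport-2}--\eqref{eq:transport-3} to zero and dropping the $t$-dependence of the now-stationary controls gives the two stated ODE cases for $0<x\le\delta$ and $x>\delta$, with the $j=1$ boundary term capturing the mass $1-\tilde\eta(0+)$ flowing out of $x=0$ via discovery.

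The main obstacle I anticipate is the first step, namely rigorously justifying that the infinite-horizon value function is genuinely time-independent and that a verification argument (equivalence between a solution of the HJB ODE and the optimal control) holds despite the non-local jump term and the degeneracy of the dynamics at the boundary $x=0$. Everything downstream is an algebraic specialization of the already-established finite-horizon results, but the passage to a stationary value requires controlling the long-run behavior of the discounted reward along trajectories that may repeatedly hit and leave $x=0$; this is where the integrability of $\int_0^\infty e^{-rt}(\cdot)\,dt$ and the boundedness of the admissible controls must be invoked with care.
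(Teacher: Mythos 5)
Your proposal is correct and takes essentially the same route as the paper, which in fact offers no formal proof of this proposition at all: it simply asserts that the stationary system is the time-homogeneous specialization of Lemma~\ref{lemma: mean field game explicit HJB equation} and Proposition~\ref{prop: transport equation} obtained by setting $\partial_t v \equiv 0$ and $\partial_t \eta \equiv 0$ under constant $\lambda$, which is exactly your argument. Your extra steps --- the shift-invariance justification of time-homogeneity of the discounted infinite-horizon value, the reuse of the monotonicity argument for uniqueness of $\tilde{Q}$, and the flagged verification issue at the degenerate boundary $x=0$ --- go beyond what the paper records and are the right points to worry about, so there is no gap relative to the paper's own treatment.
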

%%%%%%%%  End lemma %%%

Similar to \cite{LS-Cournot},
the boundary condition for $\tilde{v}(0)$ is determined by
\begin{equation}
\tilde{v}(0)
= \sup_{a \geq 0} \E \left[ e^{-r\tau} \tilde{v}(\delta) -
 \int_0^\tau e^{-rt} C_a(a) dt \right]
= \sup_{a \geq 0} \frac{a\lambda \tilde{v}(\delta) - C_a(a)}{r+a\lambda} .
\label{stationary HJB boundary}
\end{equation}

\begin{remark}
If the rate of new discoveries is zero, $\lambda=0$ then from the transport equation \eqref{stationary mfg transport equation} we have $\tilde{\eta}'(x) = 0$
for all $x > 0$, which implies that there is no producer with positive reserves level in the long run.
\end{remark}

\subsection{Solving for stationary MFG equilibria}
\label{sec: Examples of stationary mean field game}

For the stationary MFG developed in \eqref{stationary mfg HJB equation}-\eqref{stationary mfg transport equation}, the iterative scheme introduced in section
\ref{sec: Numerical method for the system of HJB and transport equations}
is not directly applicable. The  challenge lies in solving the stationary transport equation
\eqref{stationary mfg transport equation}. We see that the singularity at $x=0$ creates in effect
a \emph{free boundary} at $x=0$ that describes the balance between the density for $x> 0$ and the point
mass $\tilde{\pi}$ of exhausted producers. It is not clear how to directly handle this free boundary without ending with an intractable global system of coupled nonlinear equations.

To overcome this issue, we exploit the link between the time-dependent and time-stationary MFG models. Specifically, with a constant discovery rate $\lambda$ and large horizon $T$, the strategies $(v^\ast, a^\ast)$ have only weak dependence on $t$. Thus, we expect a convergence of the reserves process $X_t$ to an invariant distribution, since with a feedback, time-independent control it forms a recurrent Markov process on $\mathbb{R}_+$. This suggests to take $T$ large, solve the MFG on $[0,T]$ and then ``extract'' a $(\tilde{v}, \tilde{a}, \tilde{\eta})$ to approximate the true stationary solution.

A numerical illustration of a non-stationary MFG with constant discovery rate
$\lambda(t) = \lambda \equiv 1$ is shown in Figure \ref{fig: stationary mfg}.
The lower panels of Figure \ref{fig: stationary mfg} show the evolution of total production $Q(t)$, total discovery $A(t)$, and total reserves level $R(t)$, which are defined by
\eqref{total production}-\eqref{total discovery}.
We observe a boundary layer for small $t$ (roughly $t \in [0, 12]$) arising from the non-equilibrium initial distribution $\eta_0(dx)$, and another boundary layer (roughly for $t \in [45,50]$) arising from the terminal condition $v(T,x) = 0$. The latter causes $\lim_{t \to T} R(t) = 0, \lim_{t \to T} A(t) = 0$ observed in the plots. (Note that as the horizon is approached, total production rises in order to spend down all reserves and reach $R(T) = 0$.) At the same time, for the intermediate $t$'s all the quantities are effectively time-independent and hence should be close to the stationary MFG equilibrium solution. In particular, due to the conservation of reserves, $R(t) \simeq 1.9$ for $t\in [15,45]$ we observe that $Q(t) \simeq A(t)$ on that time interval. Similarly, the respective reserves distribution $\eta(t,x)$ is almost independent of $t$, cf.~the plot of $\pi(t)$ in Figure~\ref{fig: stationary mfg}. Put another way, the actual value of the horizon $T$ is essentially irrelevant as it only determines where the end-of-the-world boundary layer appears (around $t=T-5$ in the plot) and has negligible effect on the solution prior to that.

A rigorous treatment of this phenomenon has been given in Cardaliaguet et al.~\cite{Cardaliaguet12} for a locally coupled MFG, and Cardaliaguet et al.~\cite{CardaliaguetPorretta13} for a special case of a non-local coupling. According to \cite{CardaliaguetPorretta13}, for each $t\in [0, T]$, the solution
$(v(t, x), \eta(t, x))$ of a non-stationary MFG model converges in $L^2$-norm to the solution
$(\tilde{v}(x), \tilde{\eta}(x))$ of stationary MFG model as $T\to \infty$. Furthermore, in their setting the difference between stationary and non-stationary mean field game equilibrium solutions, measured by $L^2$-norm, is minimized at $t = T/2$.
Extending these proofs to the setting of Cournot MFGs with exploration is left for future research.

In light of these results, we can obtain an approximate solution of the stationary MFG MNE by solving the non-stationary equations \eqref{mfg HJB equation} and \eqref{mfg transport equation} with constant discovery rate $\lambda(t) \equiv \lambda$,
employing the same iterative scheme as in Section~\ref{sec: Numerical method for the system of HJB and transport equations}. Then the solution $(v(t, x), \eta(t, x))$ at $t = T/2$ is taken as approximate solution of the stationary mean field game model
\eqref{stationary mfg HJB equation}-\eqref{stationary mfg transport equation},
i.e., $\tilde{v}(x) \approx v(T/2, x)$ and $\tilde{\eta}(x) \approx \eta(T/2, x)$
for all $x \in [0, X_{max}]$. A related approach was taken in Chan and Sircar~\cite{ChanSircar16} where the stationary MFG solution was obtained by solving non-stationary transport equation coupled with stationary HJB equation and taking the large time limit.

In the example shown in Figure \ref{fig: stationary mfg} we had $T=50$, and so we use the intermediate solution $( v(t, x), \eta(t, x) ) \approx (\tilde{v}(x), \tilde{\eta}(x))$ at $t = T/2 = 25$ as an approximation to the corresponding time-stationary MFG. The upper left panel of Figure~\ref{fig: stationary mfg} shows the (approximate) stationary reserve density $\tilde{m}(x)  \approx  \frac{\partial}{\partial x}\eta(25, x)$. We observe that $\tilde{m}(x)$ increases in $x$ for
$0<x \leq \delta$ where the rate of discovery is higher than the rate of production, and decreases for $x > \delta$.
%%%
Similarly, we can extract the stationary total production $\tilde{Q}$, total discovery $\tilde{A}$, and total reserves level $\tilde{R}$ by looking at $Q(t), A(t), R(t)$ at $t=T/2=25$. (Due to conservation of mass $\tilde{A} = \tilde{Q}$.)

%%%%%%
\begin{figure}[htb]
\begin{center}
\begin{tabular}{rl}
\begin{minipage}{0.45\textwidth}
\includegraphics[width=0.98\textwidth, height=2in,trim=1in 2.85in 1.2in 2.3in]{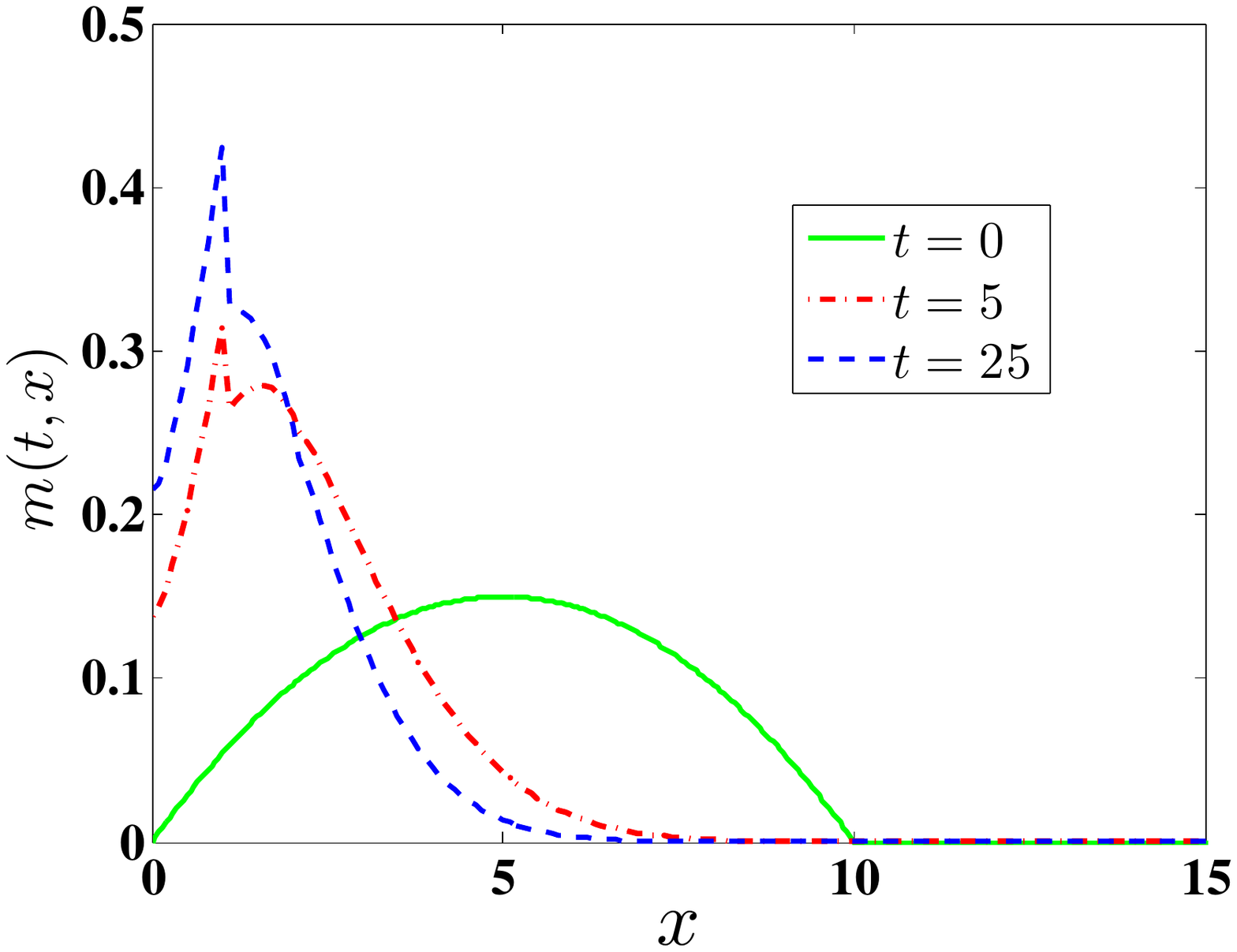}
\end{minipage} &
\begin{minipage}{0.45\textwidth}
\includegraphics[width=.98\textwidth, height=2in,trim=1.2in 2.85in 1.2in 2.3in]{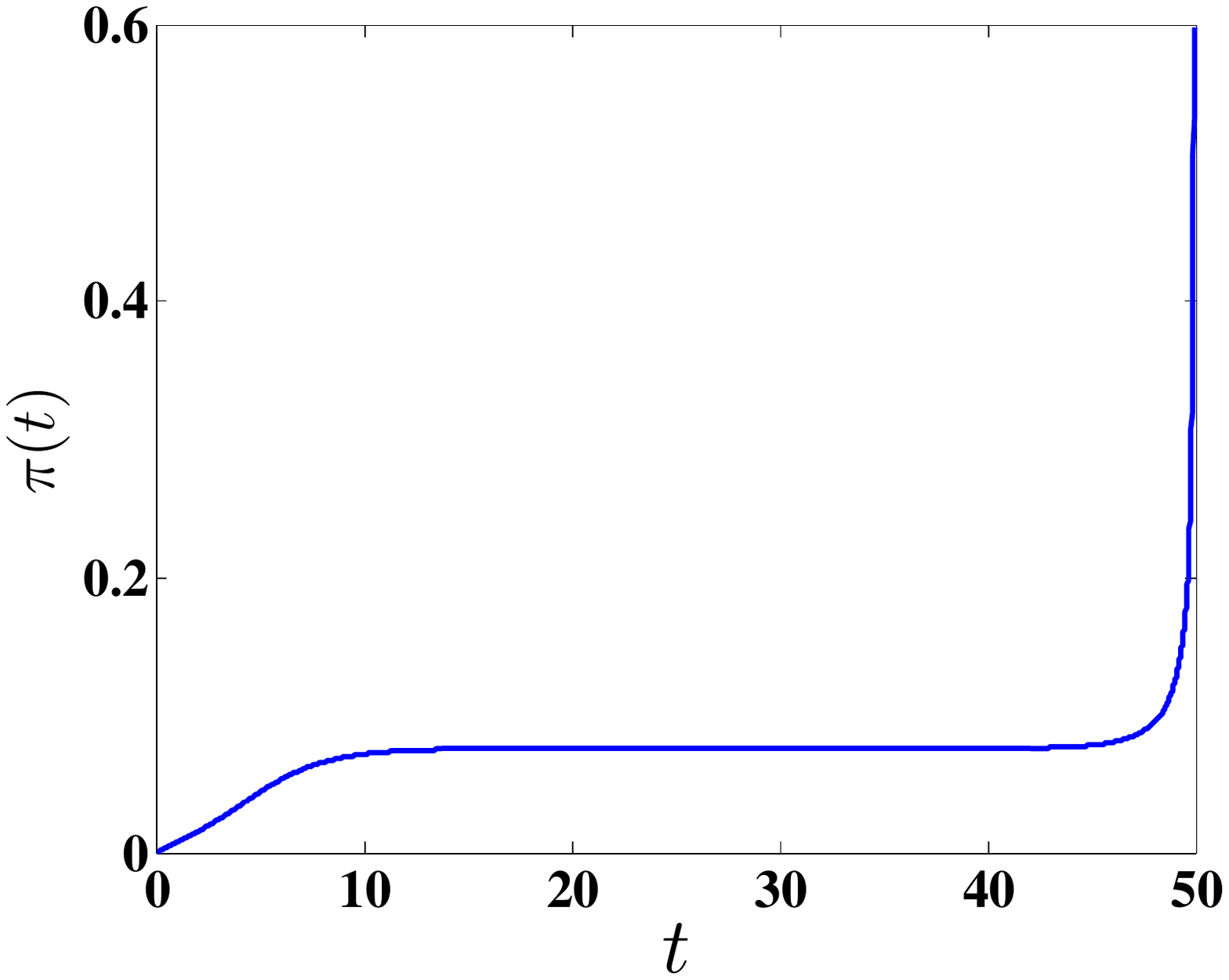}
\end{minipage} \\
\begin{minipage}{0.45\textwidth}
\includegraphics[width=0.98\textwidth, height=2in,trim=1in 2.85in 0.5in 2.3in]{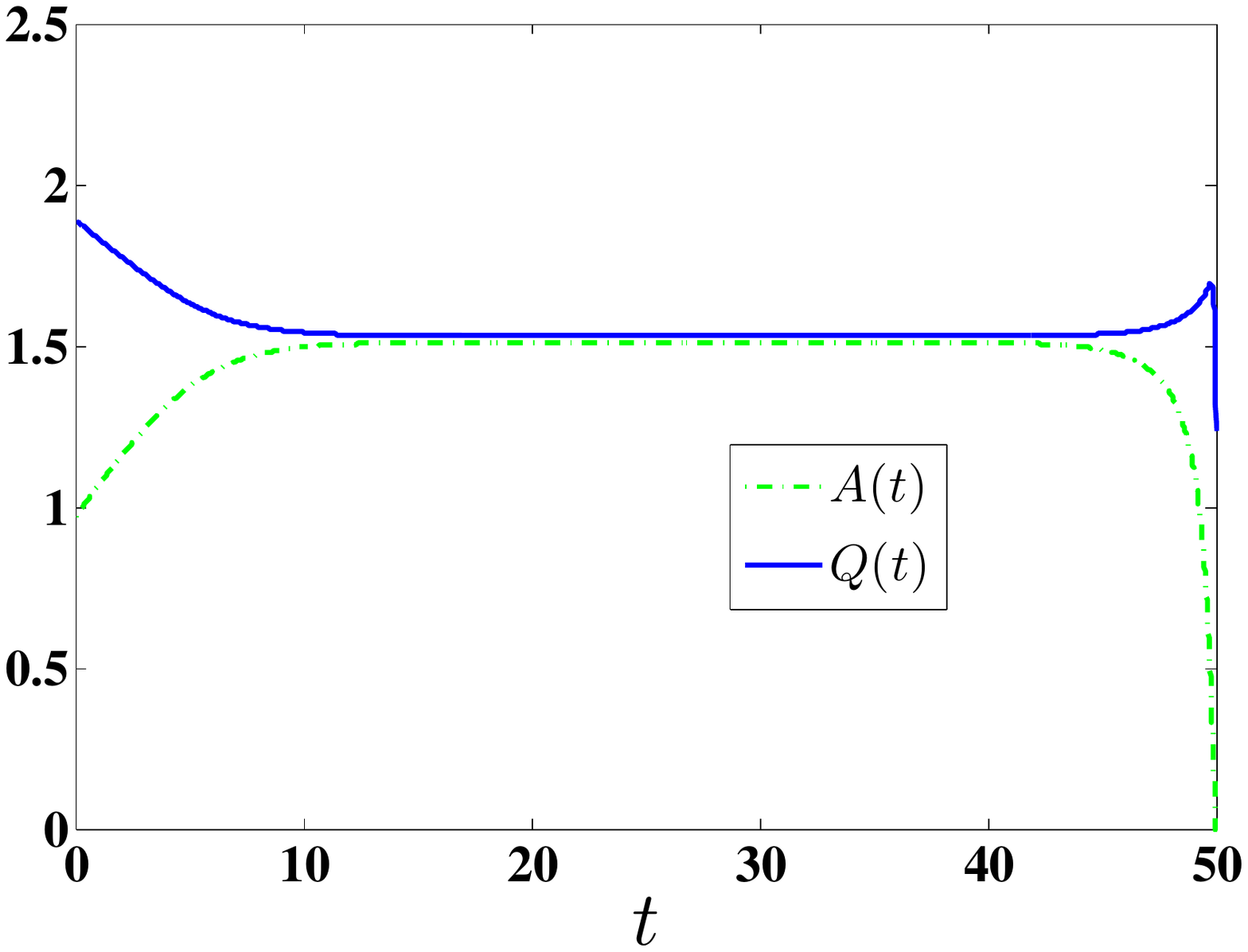}
\end{minipage} &
\begin{minipage}{0.45\textwidth}
\includegraphics[width=0.98\textwidth, height=2in,trim=1.2in 2.85in 1.2in 2.3in]{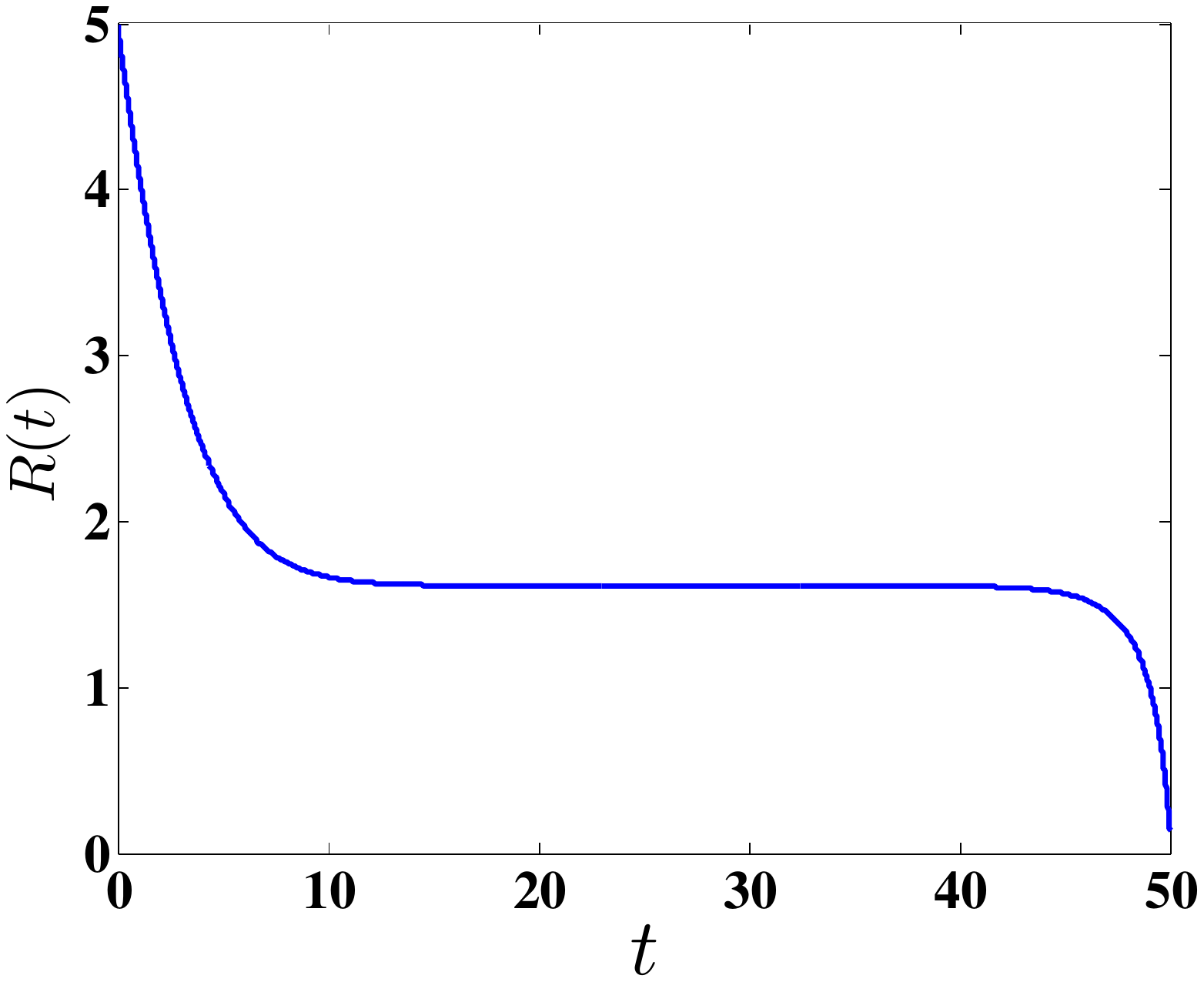}
\end{minipage}
\end{tabular}
\begin{minipage}{0.97\textwidth}
\caption{MFG solution with a constant $\lambda(t) \equiv \lambda = 1$ and $T=50$ to illustrate the relationship between the time-dependent and stationary solutions.
Upper left panel: Density $m(t, x)$ of reserves distribution.
Upper right: Proportion $\pi(t)$ of producers without reserves.
Lower left: Total exploration rate $A(t)$ and total production $Q(t)$.  Lower right: Total reserves $R(t)$. }
\label{fig: stationary mfg}
\end{minipage}
\end{center}
\end{figure}

\subsection{Comparative Statics for the Stationary MFG}
It is instructive to study the effect of exploration on the equilibrium of the stationary mean field game.
Figure~\ref{fig: stationary_Q_against_lambda stationary_R_against_lambda stationary_bdryP_against_lambda} shows the effect of discovery rate $\lambda$ on the aggregate stationary quantities $\tilde{Q}$, $\tilde{A}$,
and $\tilde{R}$, all of which have positive relation with $\lambda$.
%Greater $\lambda$ implies greater chance to make new reserves discovery with same exploration effort, and thus higher stationary total production, cf.~left panel of Figure
%\ref{fig: stationary_Q_against_lambda stationary_R_against_lambda stationary_bdryP_against_lambda}.
% Greater $\lambda$ increases the probability of discovery, while lowers the value of each discovery.
As discoveries take place faster with larger $\lambda$, the marginal value of each discovery decreases which yields an ambiguous effect to exploration effort $\tilde{a}^\ast$. In the top right panel of Figure~\ref{fig: stationary_Q_against_lambda stationary_R_against_lambda stationary_bdryP_against_lambda}) we observe that for low values of $\lambda$, $\lambda \to \tilde{a}^\ast(x; \lambda)$ increases, i.e.~exploration is encouraged by higher likelihood of discovery. However, for high $\lambda$'s, $\lambda \to \tilde{a}^\ast(x; \lambda)$ is decreasing  pointwise as the producer becomes ``lazy'' and does not see a need to work as hard, since new reserves are so easy to come by. In aggregate across $x$, we do observe a positive relation between $\lambda$ and total discovery rate $\tilde{A}$ (top left panel). Due to $\tilde{A}=\tilde{Q}$, this translates into higher aggregate production and lower prices.

Easier discoveries also raise the stationary level of reserves $\tilde{R}$, although the underlying impact on
$\tilde{\eta}$ is non-monotone. This is illustrated in the
bottom right panel of Figure~\ref{fig: stationary_Q_against_lambda stationary_R_against_lambda stationary_bdryP_against_lambda} which plots the density $\tilde{m}(x)$ for several different $\lambda$'s and highlights multiple phenomena of interest. On the one hand, we observe that $\lambda \tilde{a}^\ast(0)$ monotonically increases in $\lambda$ which reduces the expected time until next discovery at $x=0$ and hence lowers the stationary proportion $\tilde{\pi}$. In the same vein, $\tilde{R}$ rises in $\lambda$ and shifts 
 the whole $\tilde{m}$ to the right. On the other hand,, the spread, i.e.~variance of $\tilde{m}$ starts falling as $\lambda$ keeps rising. Thus, for low $\lambda$, $\tilde{\eta}$ is more spread out and $\tilde{\pi}$ is higher; for high $\lambda$ $\tilde{\eta}$ is concentrated around the average $\tilde{R}$. Moreover, the support of $\tilde{m}$ has a hump shape in $\lambda$. Recall that due to exploration saturation, $\tilde{m}$ is supported on $[0, \tilde{x}_{sat}+\delta]$ where $\tilde{x}_{sat}$ is the saturation level. We find that $\tilde{x}_{sat}$ first rises and then falls in  terms of $\lambda$. For example, when $\lambda=1$, we have $\tilde{x}_sat = 64.8$ which can be compared to $\tilde{x}_{sat}(\lambda=0.2) = 60.7$ and $\tilde{x}_{sat}(\lambda = 10) = 23.9$. In the latter situation when $\lambda$ is very big, there is no reason to hold many reserves (instead resources can be replenished almost instantaneously), so $\tilde{v}(x)$ approaches its horizontal asymptote quickly and hence exploration only takes place for small $x$.
%\todo{does $a^\ast$ decrease in $\lambda$?}
A further phenomenon is that
when $\lambda$ is very small, e.g.~$\lambda < 0.05$ in Figure
\ref{fig: stationary_Q_against_lambda stationary_R_against_lambda stationary_bdryP_against_lambda},
exploration stops entirely ($\tilde{A}=0$ leading to $\tilde{R}=0$) in stationary equilibrium.
This occurs because when $\kappa_2>0$ and $\lambda$ is small enough,
the expected addition of value $\lambda \Delta_x \tilde{v}(x)$ is always smaller than the cost $\kappa_2$ and thus no exploration efforts will be made. Thus, when discoveries are ``too difficult'', exploration will cease
%This implies that in the long run the total production will be zero.
%Thus, when $\kappa_2 >0$, exploration will cease for small enough $\lambda>0$,
even if there are still potential new reserves remaining underground, $\lambda > 0$.

%%%%%%%%

\begin{figure}[htb]
\begin{center}
\begin{tabular}{cc}
\begin{minipage}{0.3\textwidth}
\includegraphics[width=0.98\textwidth,height=1.9in,trim=1.2in 3in 1.25in 2.7in]{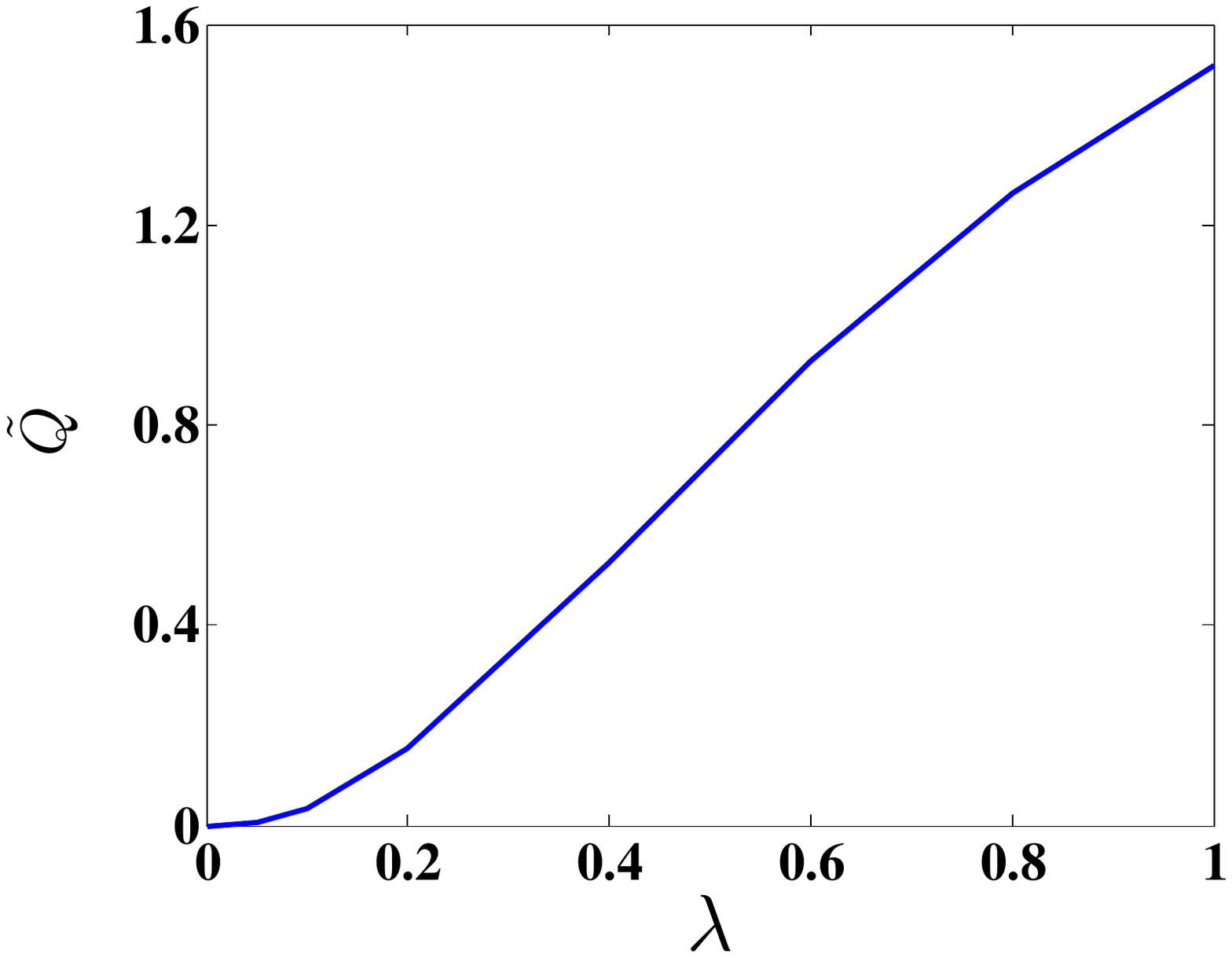}
\end{minipage} &
\begin{minipage}{0.5\textwidth}
\includegraphics[width=0.98\textwidth,height=1.9in,trim=0.8in 2.95in 1.2in 2.8in]{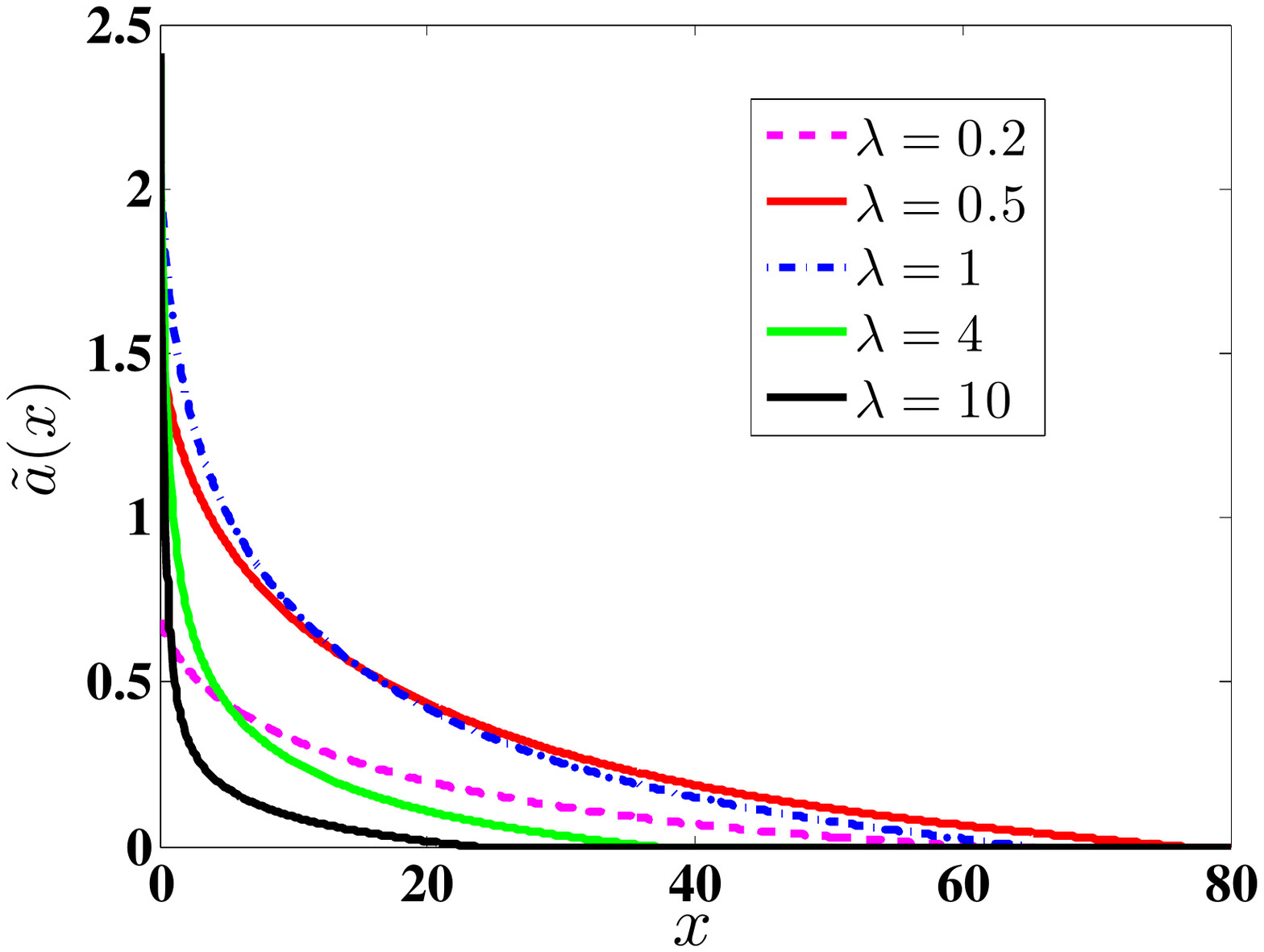}
\end{minipage} \\
\begin{minipage}{0.3\textwidth}
\includegraphics[width=0.98\textwidth,height=1.9in,trim=1.2in 2.95in 1.2in 2.65in]{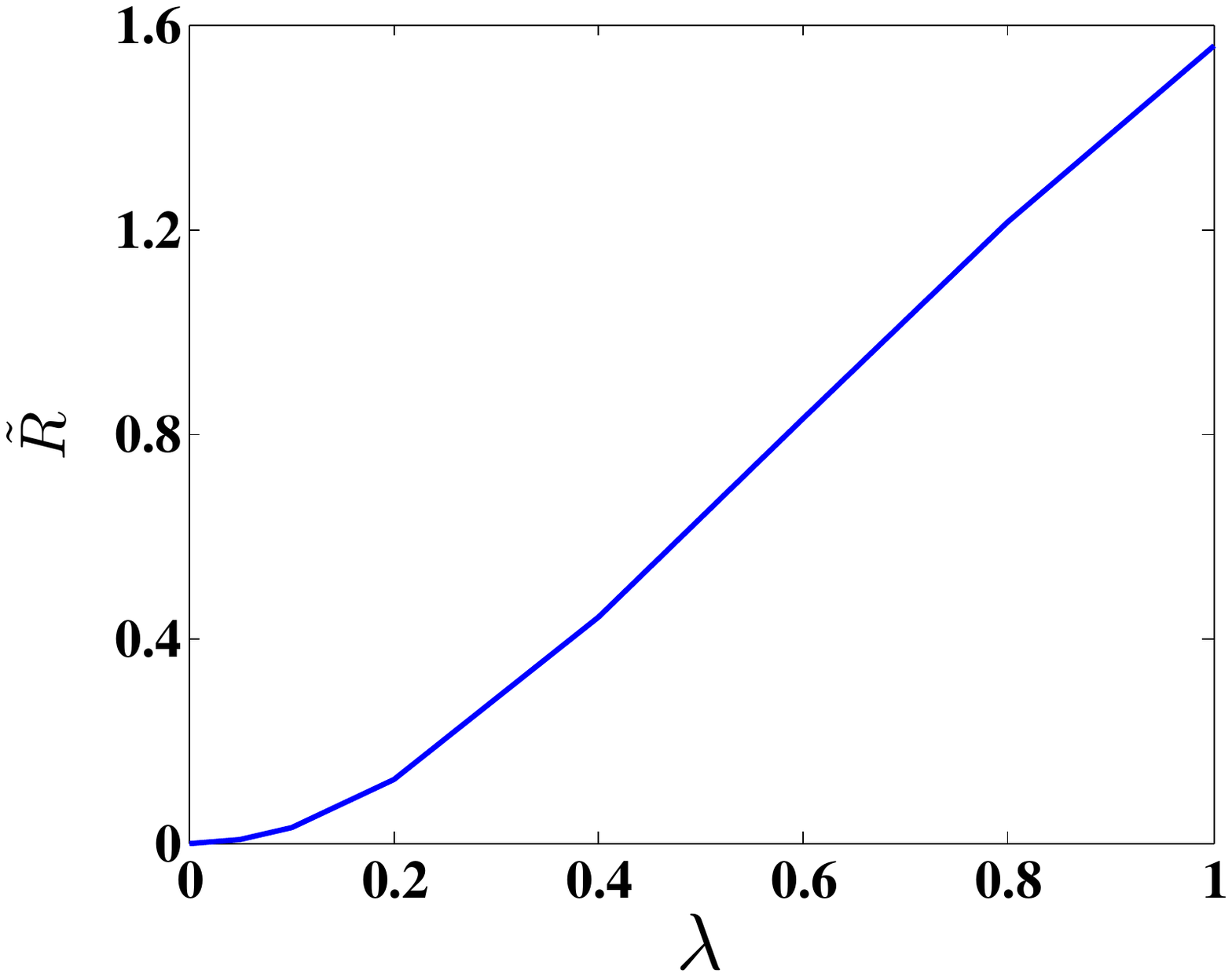}
\end{minipage}
 & 
\begin{minipage}{0.5\textwidth}
\includegraphics[width=0.98\textwidth,height=1.9in,trim=1.2in 3in 1.42in 2.85in]{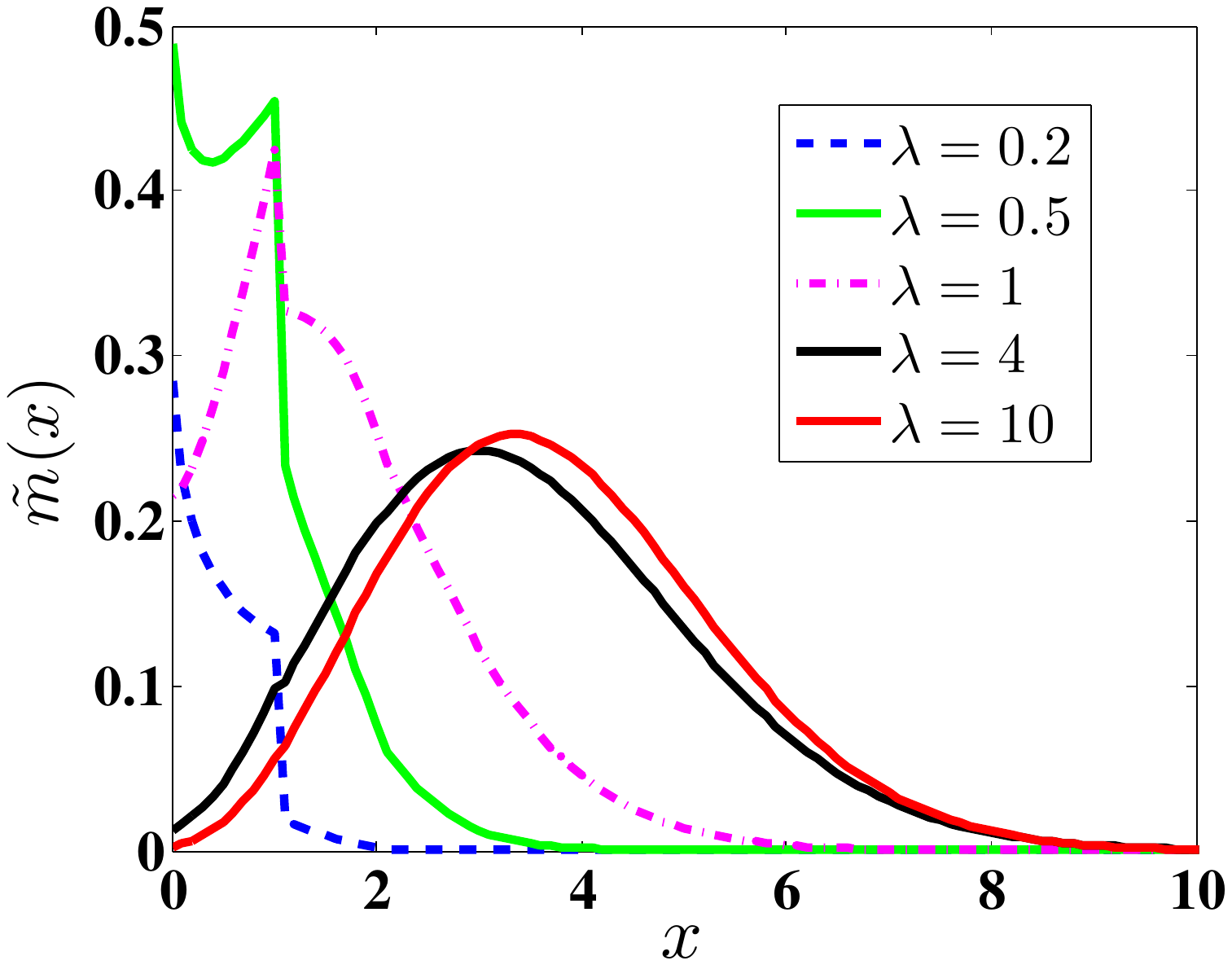} %{stationary_bdryP_against_lambda}
\end{minipage}
\end{tabular}
\begin{minipage}{0.97\textwidth}
\caption{ Stationary MFG solution as a function
 of  discovery rate $\lambda$.
%The parameters are $\beta_{1,2}=1$, $\kappa_{1,2}=0.1$, $r=0.1$, $\delta=1$.
\emph{Top Left} panel: Stationary aggregate exploration/production $\tilde{A}=\tilde{Q}$.
\emph{Top Right:} Stationary exploration effort $\tilde{a}^\ast(x)$.
\emph{Bottom Left:} Stationary aggregate reserves $\tilde{R} = \int_0^\infty x \tilde{m}(dx) $;
\emph{Bottom Right:} Stationary distribution $\tilde{m}(x)$. Note that the total mass on $(0,\infty)$ is $1-\tilde{\pi}$ which depends on $\lambda$. As before, there is a discontinuity at $x=\delta =1$.
%Stationary proportion $\tilde{\pi}$ of producers with no reserves.
\label{fig: stationary_Q_against_lambda stationary_R_against_lambda stationary_bdryP_against_lambda}}
\end{minipage}
\end{center}
\end{figure}

%%%%%%%%%%%%%%%%%%%%%%%%%%%%%%
%%%%%   Section: Fluid limit of exploration process   %%%%%
%%%%%%%%%%%%%%%%%%%%%%%%%%%%%%

\section{Fluid limit of exploration process}
\label{sec: Fluid limit of exploration process}

The stochasticity of the exploration process depends on two factors: the discovery rate $\lambda$ per unit exploration effort, and the size $\delta$ of each discovery.  To study the effect of randomness of the exploration process on equilibrium production and reserves distribution we introduce an asymptotic parameter $\epsilon>0$ (cf.~\cite{HaganCaflisch94}),
rescaling
$\lambda_\epsilon := \lambda / \epsilon$ and  $\delta_\epsilon := \delta \epsilon$.
As $\epsilon \downarrow 0$, we have the discovery rate $\lambda_\epsilon \uparrow \infty$ and unit discovery amount $\delta_\epsilon \downarrow 0$,
which means that the exploration process becomes more deterministic. In the sequel we use $\epsilon$ to index the respective MFG equilibria.

For the limiting case $\epsilon = 0$ the exploration process is fully deterministic. This is known as the fluid limit since we fully average out the stochasticity in $(X_t)$ without modifying its average (in the sense of expected value) behavior. Intuitively in the fluid limit, the difference term $\Delta_x v(t,x) = v(t,x+\delta)-v(t,x)$ becomes $\frac{\partial}{\partial x} v_0(t,x)$  and the integral becomes $ \delta a_0^*(t,x) \frac{\partial}{\partial x} \eta_0(t,x)$, removing the non-local term. The resulting MFG equations are given by \eqref{fluid limit HJB non stationary}--\eqref{mfg transport equation fluid limit non stationary} below. %
%
%%%%%%%%
\begin{align}
\label{fluid limit HJB non stationary}
& 0  = \frac{\partial}{\partial t} v_0 (t , x)  - r v_0(t, x)
+ \frac{1}{2\beta_1} \left[ \left(p_0(t) - \kappa_1  - \frac{\partial}{\partial x} v_0(t , x) \right)^+ \right]^2 	\notag 	\\
& \quad + \frac{1}{2\beta_2} \left[  \left( \lambda \delta \frac{\partial}{\partial x}v_0(t,x) - \kappa_2  \right)^+  \right]^2; \\
 & %\begin{cases}
%\frac{d}{dt}\pi_0(t) =
%- \lambda \delta a^\ast_0(t, 0) \pi_0(t) - q^\ast_0(t, 0+) \frac{\partial}{\partial x} \eta_0(t, 0+) ;	 \\
%%
\frac{\partial}{\partial t} \eta_0(t,x)
= \left(  - \lambda \delta a^\ast_0(t,x)
+ q^\ast_0(t,x) \right)
\frac{\partial}{\partial x} \eta_0 (t,x), \quad x > 0   ,
%\end{cases}
\label{mfg transport equation fluid limit non stationary}
\end{align}
where the optimal production rate $q^\ast_0$ and exploration rate $a^\ast_0$ are
\begin{align}
q^\ast_0(t, x)
&= \arg \max_{q \geq 0} \left[ p_0(t)q - C_q(q) - q \frac{\partial}{\partial x} v_0(t,x) \right]	\notag	\\
& = \frac{1}{\beta_1} \left( p_0(t) - \kappa_1 - \frac{\partial}{\partial x} v_0(t,x) \right)^+ ,
\label{optimal production fluid limit non stationary}
	\\
a^\ast_0(t, x)
& =\arg \max_{ a \geq 0} \left[ - C_a(a) +a \lambda \delta \frac{\partial}{\partial x} v_0(t,x) \right]
 =  \frac{1}{\beta_2}\left( \lambda \delta \frac{\partial}{\partial x} v_0(t,x) - \kappa_2 \right)^+,
\label{optimal exploration fluid limit non stationary}
\end{align}
%%%%%%%%%%%%%%%%%
and $p_0(t) = L + \int_0^\infty q^\ast_0(t,x) \eta_0(t,dx)$.
Note that there is no ``boundary'' at $x=0$ for $\eta_0$ because depletion is never explicitly encountered; it only imposes the constraint $a^\ast_0(t,0) \ge q^\ast_0(t,0)$.
The boundary conditions $v_0(t, 0)$ and $\frac{\partial}{\partial x} v_0(t, 0)$ are given explicitly by the following Lemma proven in Appendix~\ref{proof of lemma boundary conditions fluid limit non stationary HJB}.
%\ref{lemma: boundary conditions fluid limit non stationary HJB}.
%%%
\begin{lemma}
\label{lemma: boundary conditions fluid limit non stationary HJB}
%%%
The boundary conditions $v_0(t, 0)$ and $\frac{\partial}{\partial x} v_0(t, 0)$ satisfy
% \todo{Isn't $v(T,x) =0$ below?}
\begin{align}
& \frac{\partial}{\partial x} v_0(t, 0)
= \frac{\beta_2 (p_0(t) - \kappa_1 ) + \beta_1 \lambda \delta \kappa_2}{\beta_1 \lambda^2 \delta^2 + \beta_2} ;
\label{v0 partial x boundary condition}  \\
%%%
&v_0(t, 0) =
% v_0(T, 0) \e^{-r(T-t)} +
\int_t^T  \left[  \frac{\lambda \delta (p_0(s) - \kappa_1) - \kappa_2}{\beta_1 \lambda^2 \delta^2 + \beta_2}  \right]^2 (1+\lambda^2 \delta^2) \e^{-r(s - t)} ds .
\label{v0 boundary condition}
\end{align}

\end{lemma}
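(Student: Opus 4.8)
The plan is to read off both boundary identities from the geometry of the fluid flow. In the limit $\epsilon\downarrow 0$ the reserves evolve deterministically as $\dot X_t = -q^\ast_0(t,X_t)+\lambda\delta\,a^\ast_0(t,X_t)$, and $X_t$ is barred from entering $\{x<0\}$. At $x=0$ this imposes that the drift be nonnegative; the natural boundary behaviour is that the no-depletion constraint $\lambda\delta\,a^\ast_0(t,0)\ge q^\ast_0(t,0)$ binds, so that the flow is pinned at the boundary with $\dot X=0$. I would first argue that this constraint is genuinely active: for a strictly positive marginal value $\partial_x v_0(t,0)$, the unconstrained production rate from \eqref{optimal production fluid limit non stationary} exceeds the sustainable inflow $\lambda\delta\,a^\ast_0(t,0)$, so that state-constrained optimality selects precisely the marginal value $w:=\partial_x v_0(t,0)$ for which the two unconstrained optimizers balance, $q^\ast_0(t,0)=\lambda\delta\,a^\ast_0(t,0)$. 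This is the reflecting-boundary mechanism familiar from state-constrained control.

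Second, I would substitute the feedback forms \eqref{optimal production fluid limit non stationary} and \eqref{optimal exploration fluid limit non stationary} into the balance relation, assuming both positive parts are active, to obtain the single linear equation
\[
\frac{1}{\beta_1}\bigl(p_0(t)-\kappa_1-w\bigr)=\frac{\lambda\delta}{\beta_2}\bigl(\lambda\delta\,w-\kappa_2\bigr),
\]
whose unique solution is exactly \eqref{v0 partial x boundary condition}. Solving is routine; the only check is that at this $w$ both $p_0(t)-\kappa_1-w$ and $\lambda\delta\,w-\kappa_2$ are nonnegative, which holds precisely when exploration is profitable at the boundary, $\lambda\delta(p_0(t)-\kappa_1)>\kappa_2$, so that the positive parts may be dropped.

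Third, for $v_0(t,0)$ I would evaluate the fluid-limit HJB \eqref{fluid limit HJB non stationary} at $x=0$ with $\partial_x v_0(t,0)$ replaced by the value $w$ just found. Since $w$ depends on $t$ alone, the HJB degenerates to a linear scalar ODE in time,
\[
\partial_t v_0(t,0)=r\,v_0(t,0)-g(t),\qquad v_0(T,0)=0,
\]
where $g(t)$ denotes the Hamiltonian evaluated at $w$. Variation of parameters then gives the representation $v_0(t,0)=\int_t^T g(s)\,e^{-r(s-t)}\,ds$, and substituting the explicit $w$ and simplifying $g(s)$ produces the stated integrand, yielding \eqref{v0 boundary condition}.

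The main obstacle is the first step: rigorously justifying that the state constraint is active at $x=0$ and that the correct datum for the first-order HJB is the reflecting marginal value determined by $\dot X=0$, rather than a Dirichlet or Neumann condition inherited from the pre-limit problem. One should verify this is consistent both with the viscosity interpretation of the first-order PDE and with the $\epsilon\downarrow0$ limit of the pre-limit boundary condition \eqref{stationary HJB boundary}. Once this binding-constraint characterization of the boundary is secured, the remaining algebra and the ODE integration are entirely routine.
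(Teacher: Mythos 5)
Your steps 2 and 3 coincide with the paper's own proof: substituting the feedback forms \eqref{optimal production fluid limit non stationary}--\eqref{optimal exploration fluid limit non stationary} into the balance relation $q_0^\ast(t,0)=\lambda\delta\,a_0^\ast(t,0)$ and solving the resulting linear equation gives \eqref{v0 partial x boundary condition}, and evaluating \eqref{fluid limit HJB non stationary} at $x=0$ then yields a scalar linear ODE in $t$ integrated against $e^{-r(s-t)}$ with $v_0(T,0)=0$. The genuine gap is in your step 1, which is the only step with real content. Your proposed mechanism --- ``for a strictly positive marginal value the unconstrained production rate exceeds the sustainable inflow, so state-constrained optimality selects the balancing $w$'' --- is not an argument: both $q_0^\ast(t,0)$ and $\lambda\delta\,a_0^\ast(t,0)$ are functions of the \emph{same} unknown $w=\partial_x v_0(t,0)$, the first decreasing and the second increasing in $w$, so which one dominates depends on $w$ and cannot be asserted a priori. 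More importantly, the configuration that must be excluded is the one where inflow \emph{strictly} exceeds production at $x=0$ (strictly positive drift, constraint never active, the state simply moves into the interior and gets pinned at some $x^\ast>0$). That configuration is perfectly compatible with the dynamics and with any reflecting-boundary or viscosity formulation of the state constraint; it can only be ruled out by \emph{optimality}, which your sketch does not do and which you correctly flag as the unresolved obstacle.

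The paper closes exactly this hole with a value-comparison argument, given in Lemma~\ref{lemma: equilibrium production and exploration in fluid limit} and invoked ``similarly'' for the time-dependent case: monotonicity of $\tilde q^\ast_0$ (non-decreasing) and $\tilde a^\ast_0$ (non-increasing) produces a pinning point $x^\ast$ with $q_0^\ast(x^\ast)=\lambda\delta a_0^\ast(x^\ast)$; if $x^\ast>0$, then (a) replicating the pinned strategy of $x^\ast$ from zero initial reserves is admissible (inflow equals outflow, so reserves stay at $0$), whence $\tilde v_0(x^\ast)\le \tilde v_0(0)$, while (b) starting from $x^\ast$, producing with \emph{zero} exploration until depletion and then continuing optimally is admissible and strictly dominates the value at $0$ (it earns the same production profit while avoiding the strictly positive exploration cost paid at the boundary), whence $\tilde v_0(0)<\tilde v_0(x^\ast)$ --- a contradiction forcing $x^\ast=0$, i.e.\ \eqref{equation: fluid limit boundary production and exploration relation}. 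This sub-optimality comparison, not a reflection principle or the $\epsilon\downarrow 0$ limit of \eqref{stationary HJB boundary}, is the missing idea. One further caution on your ``entirely routine'' step 3: evaluating the Hamiltonian at the $w$ of \eqref{v0 partial x boundary condition} gives the source term $\tfrac{1}{2}\bigl(\beta_1\lambda^2\delta^2+\beta_2\bigr)\,a_0^\ast(t,0)^2$, whose integral carries a factor $\tfrac12$ that does not appear in \eqref{v0 boundary condition} (the paper's own proof and statement disagree by this factor when $\beta_1=\beta_2=1$), so asserting that the simplification ``produces the stated integrand'' without performing it would paper over a discrepancy that a careful execution of your plan would in fact surface.
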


By taking $T \to \infty$, we may then consider the stationary fluid limit MFG. Mathematically, this yields the simplest setup as it removes the non-local ``delay'' term associated with discrete exploration, as well as the time-dependence, leaving with a coupled system of two ODE's. In fact, the following Proposition~\ref{prop: stationary mean field game equilibrium in fluid limit} implies that economically the stationary MFG in the fluid limit reduces to just a couple of algebraic relations.

%%%
\begin{prop}[Stationary mean field game equilibrium in fluid limit]
\label{prop: stationary mean field game equilibrium in fluid limit}
The stationary MFG MNE in fluid limit ($\epsilon = 0$) is summarized as
\begin{enumerate}
\item[(i).] The stationary reserves distribution is $\tilde{\pi}_0 =1$, i.e.~all producers hold no reserves, $\tilde{R}_0 = 0$.
\item[(ii).] The equilibrium total production $\tilde{Q}_0$ and market price in the fluid limit are given by
\begin{align}
\tilde{Q}_0 = \tilde{q}^\ast_0(0)
 =
\frac{[ (L-\kappa_1)\lambda \delta - \kappa_2 ]^+ }{\beta_2 + (1+\beta_1)\lambda \delta}, \quad\text{and} \qquad \tilde{p}_0 = L- \tilde{Q}_0;
\label{total production in fluid limit}
\end{align}

\item[(iii).] The equilibrium exploration control is $\tilde{a}^\ast_0(x) = 0$ $\forall x > 0$ and
\begin{equation} \label{equation: fluid limit boundary production and exploration relation}
    \tilde{a}^\ast_0(0) = \frac{1}{\delta \lambda} \tilde{q}_0^\ast(0).
    \end{equation}
\end{enumerate}

\end{prop}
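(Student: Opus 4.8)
The plan is to drop the time dependence throughout Section~\ref{sec: Fluid limit of exploration process}, so that \eqref{fluid limit HJB non stationary}--\eqref{mfg transport equation fluid limit non stationary} collapse to a coupled pair of ODEs in $x$ for $\tilde v_0$ and $\tilde\eta_0$, with the constant equilibrium price $\tilde p_0 = L - \tilde Q_0$ still to be determined. The central observation is that in the fluid limit the reserves evolve deterministically, $\dot X_t = b(X_t)$ with drift $b(x) := \lambda\delta\,\tilde a^\ast_0(x) - \tilde q^\ast_0(x)$, so any invariant distribution must be supported on the rest set $\{x : b(x) = 0\}$. I would therefore first pin down the sign of $b$, then read off the invariant measure, and finally close the system through the price fixed point.

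For part (i) I would show that $b$ is strictly decreasing on $(0,\infty)$ and that $b(0)=0$. Monotonicity follows from concavity of $\tilde v_0$ (diminishing marginal value of reserves, as anticipated for $v^\eta$ in Section~\ref{sec: Game value function of a representative player}): since $\tilde v_0'$ is then decreasing, the formulas \eqref{optimal production fluid limit non stationary}--\eqref{optimal exploration fluid limit non stationary} make $\tilde a^\ast_0$ decreasing and $\tilde q^\ast_0$ increasing, hence $b$ strictly decreasing. The value $b(0)=0$ is exactly the content of Lemma~\ref{lemma: boundary conditions fluid limit non stationary HJB}: substituting $\tilde v_0'(0)$ from \eqref{v0 partial x boundary condition} into \eqref{optimal production fluid limit non stationary}--\eqref{optimal exploration fluid limit non stationary} gives $\lambda\delta\,\tilde a^\ast_0(0) = \tilde q^\ast_0(0)$, i.e.~the boundary data are precisely those that make $x=0$ a rest point. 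Consequently $b(x)<0$ for every $x>0$, and the stationary form of \eqref{mfg transport equation fluid limit non stationary}, which reads $b(x)\,\tilde\eta_0'(x)=0$, forces $\tilde\eta_0'(x)=0$ for all $x>0$. Thus all mass concentrates at the origin: $\tilde\pi_0=1$ and $\tilde R_0=\int_0^\infty\tilde\eta_0(x)\,dx=0$, which is (i). In particular no producer ever carries positive reserves in equilibrium, so exploration at positive reserve levels is never exercised, which is the sense in which $\tilde a^\ast_0(x)=0$ for $x>0$.

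Part (iii) then follows from the rest-point identity $b(0)=0$, which rearranges to $\tilde a^\ast_0(0)=\tfrac{1}{\lambda\delta}\tilde q^\ast_0(0)$, i.e.~\eqref{equation: fluid limit boundary production and exploration relation}; equivalently the same relation is obtained by maximizing the instantaneous profit rate $\tilde p_0 q - C_q(q) - C_a(a)$ subject to the flow-balance constraint $\lambda\delta a = q$ that keeps $X_t\equiv 0$. For part (ii), the point-mass distribution makes aggregate production equal the boundary production rate, $\tilde Q_0 = \tilde q^\ast_0(0)$. Writing $\tilde q^\ast_0(0)$ as an explicit function of $\tilde p_0$ through \eqref{optimal production fluid limit non stationary} and \eqref{v0 partial x boundary condition}, and imposing the consistency condition $\tilde p_0 = L - \tilde Q_0$, yields a single scalar linear equation for $\tilde Q_0$; solving it gives \eqref{total production in fluid limit}, with the positive part encoding the shutdown regime $\lambda\delta(L-\kappa_1)\le\kappa_2$ in which exploration, and hence production, ceases.

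The hard part will be the rigorous justification of part (i): the assertion that the only invariant distribution of the deterministic fluid flow is the point mass at the unique rest point $x=0$. This needs (a) the strict monotonicity of $b$, which rests on establishing concavity of the stationary value function $\tilde v_0$ solving the first-order HJB \eqref{fluid limit HJB non stationary}, and (b) a clean passage from the finite-horizon Lemma~\ref{lemma: boundary conditions fluid limit non stationary HJB} to its stationary boundary data as $T\to\infty$, controlling the terminal boundary layer so that $b(0)=0$ survives in the limit. The remaining steps (the balance identity and the scalar fixed point) are routine algebra once concentration at the origin is in hand.
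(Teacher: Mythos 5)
Your outer skeleton coincides with the paper's: part (i) follows from the stationary transport equation, which reads $b(x)\,\tilde{\eta}_0'(x)=0$ with $b(x)=\lambda\delta\tilde{a}^\ast_0(x)-\tilde{q}^\ast_0(x)$, once the drift is strictly negative on $(0,\infty)$; and part (ii) is the same algebra (point mass at $0$, flow balance at $0$, price consistency $\tilde{p}_0=L-\tilde{Q}_0$). The crux, however, is the flow-balance relation \eqref{equation: fluid limit boundary production and exploration relation} together with the fact that $x=0$ is the \emph{only} rest point, and here the paper does something you do not. Its Lemma \ref{lemma: equilibrium production and exploration in fluid limit} works entirely within the stationary problem: feasibility at $x=0$ forces $\lambda\delta\tilde{a}^\ast_0(0)\ge\tilde{q}^\ast_0(0)$; monotonicity of the two controls then yields a crossing point $x^\ast\ge 0$ with $\lambda\delta\tilde{a}^\ast_0(x^\ast)=\tilde{q}^\ast_0(x^\ast)$; and a dynamic-programming exchange argument rules out $x^\ast>0$ — the constant rest-point strategy is admissible from zero reserves (so $\tilde{v}_0(x^\ast)\le\tilde{v}_0(0)$), while starting from $x^\ast$, running reserves down to zero with no exploration and then continuing optimally strictly dominates, because the accumulation path from $0$ pays strictly positive exploration costs (so $\tilde{v}_0(0)<\tilde{v}_0(x^\ast)$); contradiction. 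That comparison-of-values argument is the mathematical heart of the proposition.

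Your proposal replaces this step with two substitutes, and neither closes the argument. First, you take $b(0)=0$ from Lemma \ref{lemma: boundary conditions fluid limit non stationary HJB} and propose a passage $T\to\infty$. But that lemma concerns the time-dependent problem, and in the paper its proof itself begins by establishing $q_0^\ast(t,0)=\lambda\delta a_0^\ast(t,0)$ by the very same stationary-type exchange argument — so the logical dependence runs opposite to what you need — and the turnpike convergence you would have to prove (controlling the terminal boundary layer as $T\to\infty$) is precisely what the paper only discusses heuristically in Section \ref{sec: Examples of stationary mean field game}, citing \cite{CardaliaguetPorretta13} for other settings and explicitly leaving the extension to this model open. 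Second, your "equivalent" static derivation — maximizing $\tilde{p}_0 q - C_q(q)-C_a(a)$ subject to $\lambda\delta a=q$ — presupposes that the optimal policy keeps $X_t\equiv 0$; a producer at $x=0$ could instead choose $\lambda\delta a>q$ and accumulate reserves, and ruling that out is exactly what the exchange argument accomplishes, so this route assumes the conclusion. (Relatedly, your reading of part (iii), that $\tilde{a}^\ast_0(x)=0$ for $x>0$ merely "in the sense that" such states are never visited, weakens the stated claim rather than proving it; the paper asserts the control itself vanishes off the origin and uses it, with $\tilde{q}^\ast_0(x)>0$, to get the negative drift, whereas you instead lean on strict monotonicity of $b$, which hangs on an unproven strict concavity of $\tilde{v}_0$.) In short, you correctly flagged the hard part, but the missing idea is a self-contained stationary value-comparison pinning the rest point at $x=0$; without it, the proof does not go through.
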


The proof of Proposition~\ref{prop: stationary mean field game equilibrium in fluid limit} is Appendix \ref{section: proof of stationary mean field game equilibrium in fluid limit}.
In the case of fluid limit $\epsilon=0$, discovery of new resources happens in a completely deterministic way, thus it is not necessary to hold reserves for production.
Producers starting with positive reserves will not explore until reserves run out.
Once reserves level reaches zero,  equation \eqref{equation: fluid limit boundary production and exploration relation} implies that a player without reserves will choose production and exploration strategies such that the production rate exactly equals the rate of reserves increment due to his exploration effort. This explains how zero reserves can be sustained in equilibrium. Overall, the above Proposition shows that the stationary equilibrium with deterministic exploration is trivial, i.e.~only $x=0$ matters and the system of ODE's effectively collapses to algebraic equations linking $\tilde{Q}_0$ and $\tilde{A}_0$ to model parameters. This shows that the stochastic model is strictly more complex than the deterministic one.

%%%%%%
\subsection{Numerical scheme and illustration}
\label{sec: Numerical example of fluid limit model}

%In Section~\ref{sec: Fluid limit of exploration process} we studied the model where exploration was deterministic. In this section we present the effect of the introduced parameter $\epsilon$ in the regime $\epsilon \downarrow 0$.

The iterative scheme in Section~\ref{sec: Numerical method for the system of HJB and transport equations}
is easily adapted to solve the fluid limit system
\eqref{fluid limit HJB non stationary}--\eqref{mfg transport equation fluid limit non stationary}.
As in Section~\ref{sec: Numerical method for HJB equation},
we employ method of lines to numerically solve the HJB equation.
%The space derivative of $v_0(t, x)$ at grid point $(t_n, x_m)$ is approximated by a backward difference quotient $\frac{\partial}{\partial x} v_0(t_n, x_m) = \frac{v_0(t_n, x_m) - v_0(t_n, x_{m-1})}{\Delta x}$.
The space derivative of $v_0(t, x)$ at each spatial grid point $x_m$ is approximated by a backward difference quotient
$\frac{\partial}{\partial x} v_0(t, x_m) \simeq \frac{v_0(t, x_m) - v_0(t, x_{m-1})}{\Delta x}$ so that  $\frac{\partial}{\partial t} v_0(t, x_m)$ becomes a function of  $v_0(t, x_m)$ and
 $v_0(t, x_{m-1})$:
\begin{multline}
 \frac{\partial}{\partial t} v_0 (t , x_m)    =  r v_0(t, x_m)
 - \frac{1}{2\beta_1} \left[ \left(p_0(t) - \kappa_1  - \frac{v_0(t, x_m) - v_0(t, x_{m-1})}{\Delta x} \right)^+ \right]^2 		\\
- \frac{1}{2\beta_2} \left[  \left( \lambda \delta \frac{v_0(t, x_m) - v_0(t, x_{m-1})}{\Delta x} - \kappa_2  \right)^+  \right]^2 ,
\quad m = 1, 2, \ldots, M.
\label{eqn: system of numerical equations fluid limit}
\end{multline}
We use Matlab's Runge-Kutta ODE solver \texttt{ode45} to solve the system
\eqref{eqn: system of numerical equations fluid limit}
of ordinary differential equations for $\{v(t, x_m): m = 0, 1, \ldots, M\}$ backward in time with boundary condition
$v_0(t, x_0) \equiv v_0(t, 0)$ given by \eqref{v0 boundary condition} and initial condition
$v(T, x_m)=0$ for all $m=0, 1, ..., M$.

%%%%%%%%%%%%%%
We use forward in time and forward in space scheme to solve the transport equation
\eqref{mfg transport equation fluid limit non stationary}.
As in section~\ref{sec: Numerical method for transport equation}, we also prescribe the boundary condition $\eta_0(t_n, x_M) = 0$, $n=0, ..., N$ at $x_M \equiv X_{max}$ which assumes that $X_{max}$ is larger than the saturation level. 
%Similar to \eqref{numerical boundary probability}
%we obtain the boundary value $\pi_0(t_{n+1})$ by %\todo{supposed to be forward in space}
%\begin{align}
%\pi_0(t_{n+1}) &= \pi_0(t_n) - \Delta t  \cdot \lambda \delta a_0(t_n, x_0) \pi_0(t_n)
% -  \Delta t \cdot q_0(t_n, x_1) \frac{\eta_0(t_n, x_2) - \eta_0(t_n, x_1)}{\Delta x} , \notag
%\label{numerical boundary probability}
%\end{align}
%where $\eta_0(t_n, x_1)$ and $q_0(t_n, x_1)$ are used to approximate the values of $\eta_0(t_n, 0+)$ and $q_0(t_n, 0+)$, respectively,
%and $\frac{\eta_0(t_n, x_2) - \eta_0(t_n, x_1)}{\Delta x}$ is used to approximate $\frac{\partial}{\partial x} \eta_0(t_n , 0+)$.
We directly set $\eta_0(t_n, x_0) = 1$ and 
%Then we set $\eta_0(t_{n+1}, x_1) = 1 - \pi_0(t_{n+1})$,
%and similar to \eqref{numerical eta 1} 
obtain the numerical values of
$\eta_0(t_{n+1}, x_m)$ for $m=1,\ldots, M$ via
\begin{align}
 \eta_0(t_{n+1}, x_m)  &=
\eta_0(t_{n}, x_m)
+\Delta t \left[ - \lambda \delta a_0(t_n, x_m)
 + q_0(t_n, x_m) \right]
\frac{\eta_0(t_n, x_{m+1}) - \eta_0(t_n, x_{m})}{\Delta x}.
\notag	
\end{align}

Figure~\ref{fig: fluid limit Q and R} illustrates the resulting solution both in the time-dependent model described above (left panel) and its stationary version (middle and right panels) similar to Section \ref{sec: Stationary mean field game Nash equilibrium}. We observe  two distinct features of interest. First, we find that uncertainty discourages exploration as the discounting effect lowers the NPV of putting in effort today for a delayed reward at discovery date $\tau$. As a result, more uncertainty decreases aggregate production $\tilde{Q}$ and raises prices. Second, uncertainty encourages ``hoarding'', i.e.~holding additional reserves as a buffer against running out due to depletion Consequently, $\tilde{R}_\epsilon$
increases in $\epsilon$ (right panel of Figure~\ref{fig: fluid limit Q and R}). At the same time, as $\epsilon \downarrow 0$ stationary reserves level $\tilde{R}_\epsilon \downarrow 0$. Indeed, in the limit $\epsilon=0$, production can be viewed as a perfect just-in-time supply chain: effort is expended to find an infinitesimal amount of new underground resources which are immediately extracted and sold for profit. Thus, exploration effort becomes equivalent to a secondary production cost, the cost of securing the commodity supply to exactly match the desired production rate, and the precautionary need for reserves vanishes. Thus we conclude that economically uncertainty regarding discoveries carries a \emph{cost}.

%%%
\begin{figure}[htb]
\begin{center} \hspace*{6pt}
\begin{tabular}{rcc}
\begin{minipage}{0.3\textwidth}
\includegraphics[width=0.98\textwidth,height=2in,trim=1.35in 2.95in 1.25in 3in]{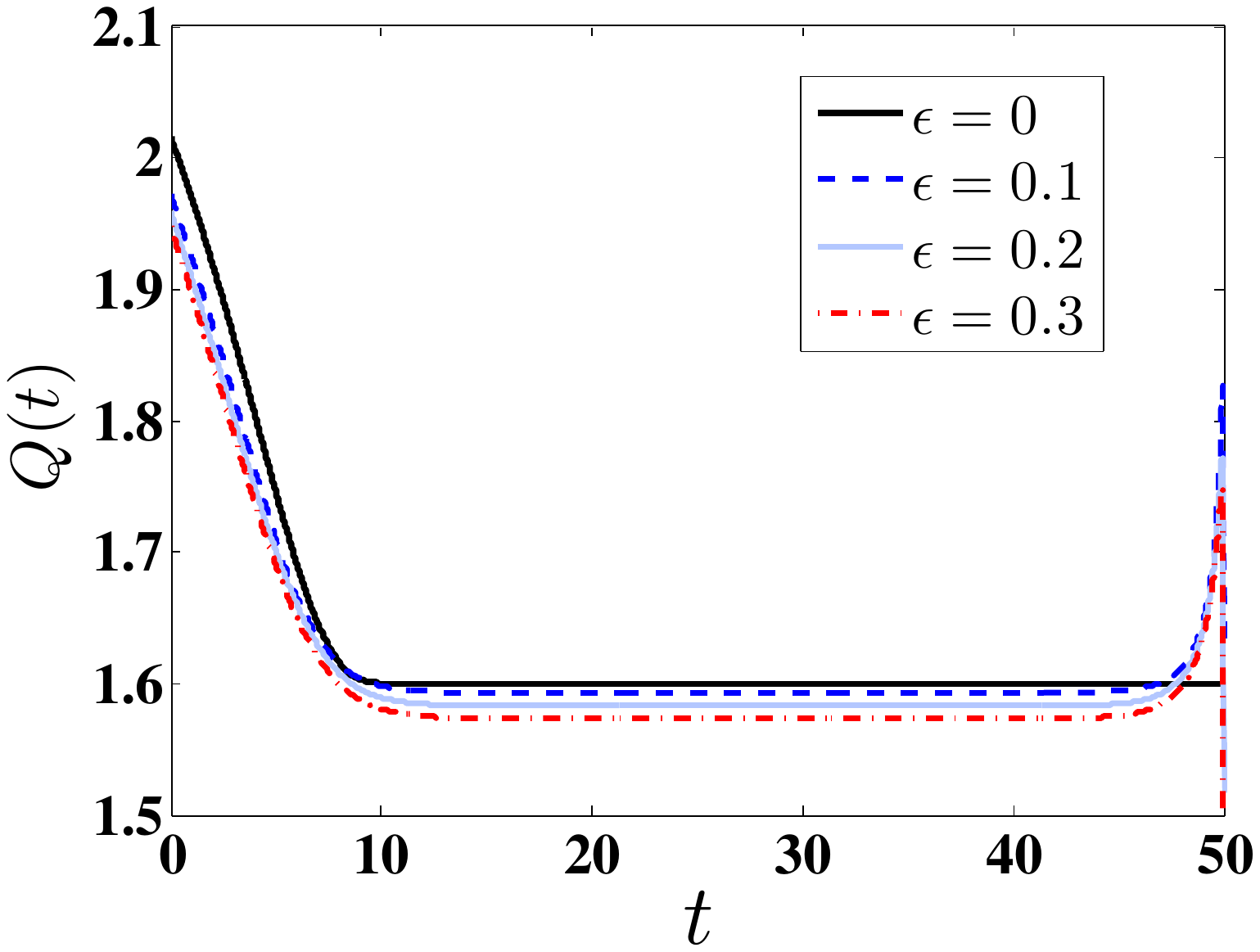}
\end{minipage} &
\begin{minipage}{0.3\textwidth}
\includegraphics[width=0.98\textwidth,height=2in,trim=1.45in 3.05in 1.45in 3.05in]{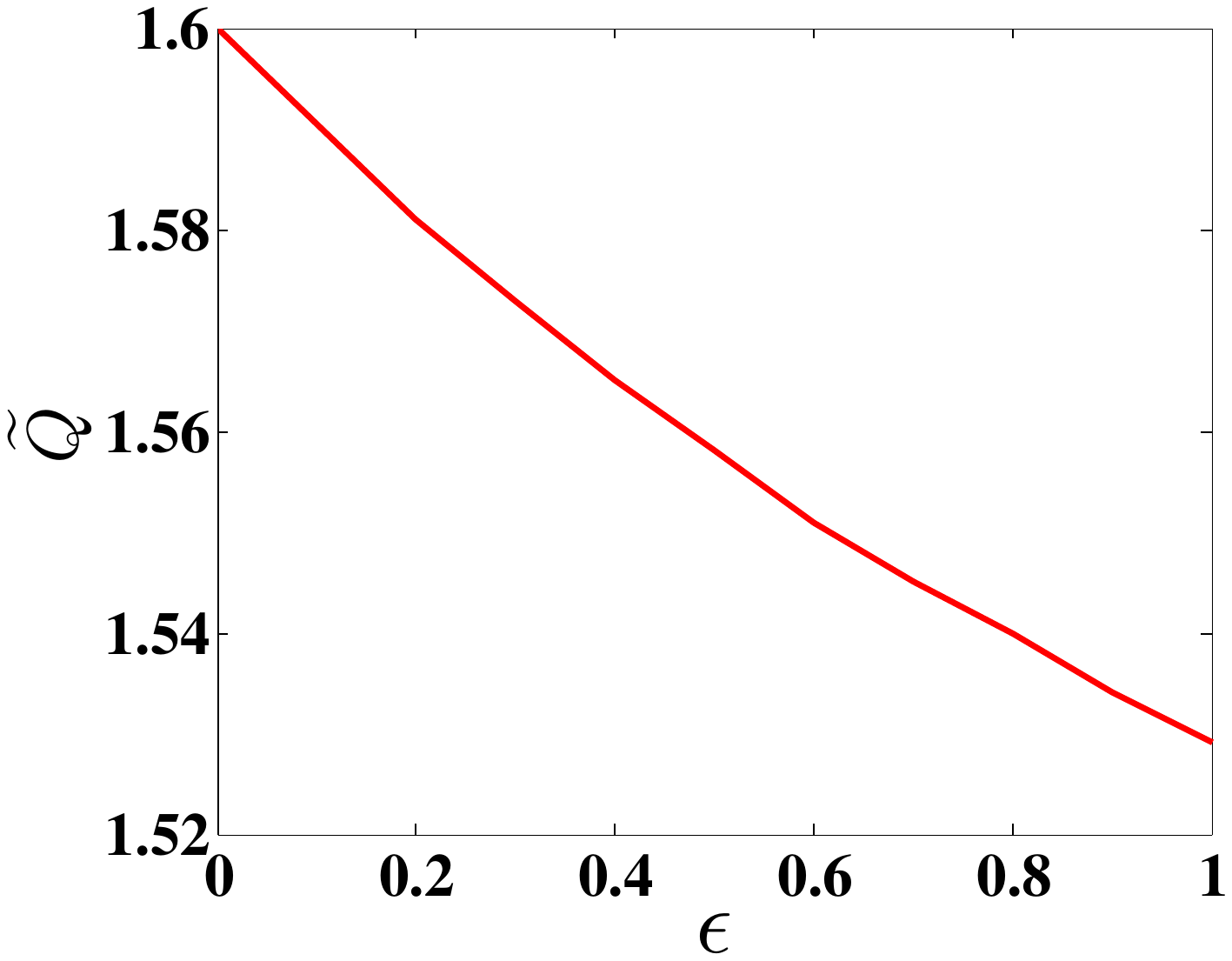}
\end{minipage} &
\begin{minipage}{0.3\textwidth}
\includegraphics[width=0.98\textwidth,height=2in,trim=1.35in 2.95in 1.35in 3in]{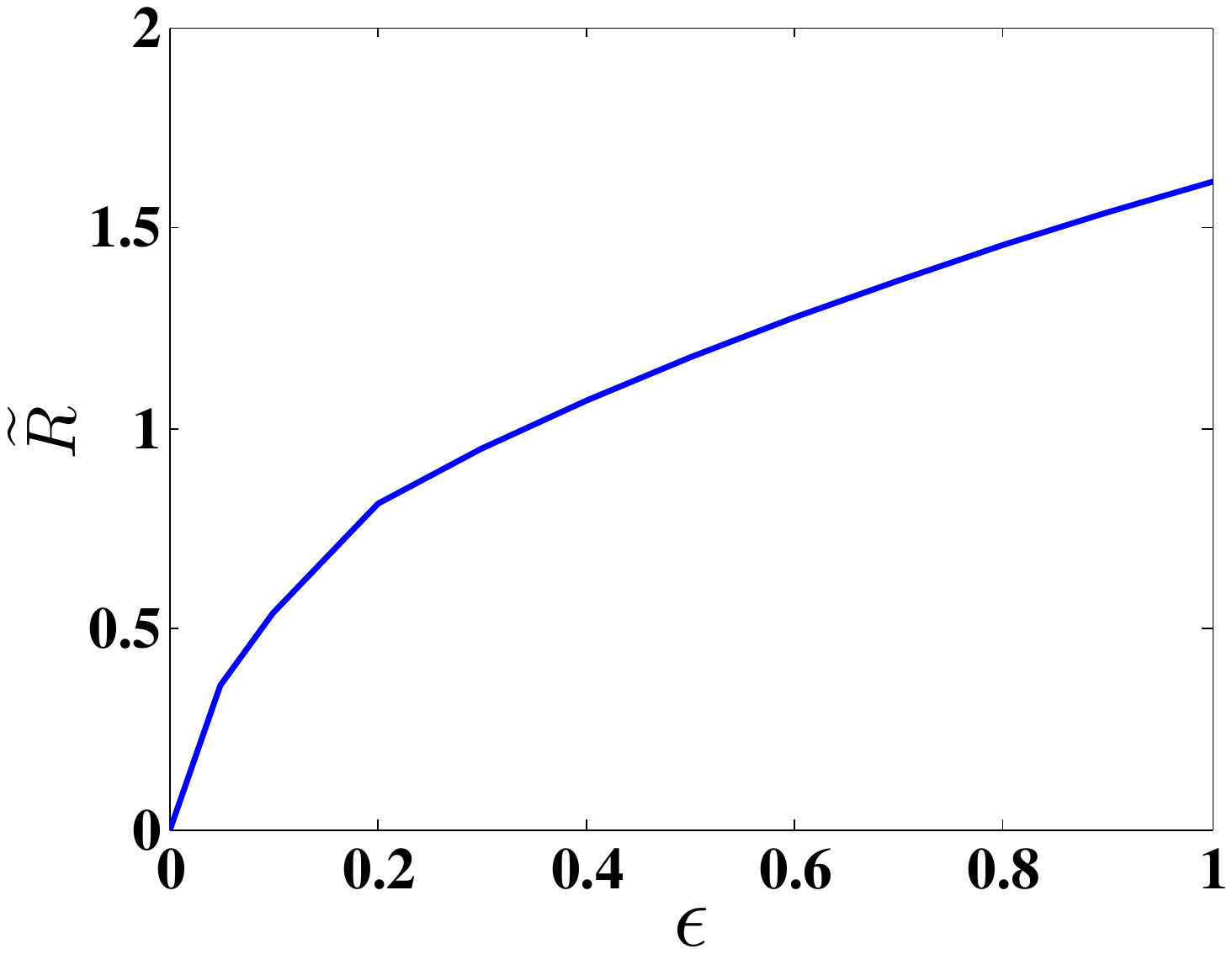}
\end{minipage}
\end{tabular}
\begin{minipage}{0.97\textwidth}
\caption{Equilibrium production and reserves level in the regime $\lambda_\epsilon = \lambda/ \epsilon$ and $\delta_\epsilon = \delta \epsilon$ for different values of $\epsilon$.
\emph{Left} panel: Evolution of total production $Q_\epsilon(t)$ for several levels of $\epsilon$.
\emph{Middle:} Stationary production $\tilde{Q}_\epsilon$ against $\epsilon$.
\emph{Right:} Stationary reserves level $\tilde{R}_\epsilon$ against $\epsilon$.
For $\epsilon =0$ we have $\tilde{Q}_0 = \frac{5 \cdot 1-0.2}{1 + 2\cdot 1} = 1.6$ from \eqref{total production in fluid limit} and $\tilde{R}_0 = 0$.
\label{fig: fluid limit Q and R}}
\end{minipage}\end{center}
\end{figure}

%%%%%%%%%%%%%%%%%%%%%%%%%%%%%%%%
%%%%%%%                section : Conclusion              %%%%%%%
\section{Conclusion}\label{sec:conclusion}

We investigate joint production and exploration of exhaustible commodities in a mean-field oligopoly. The ability to expend effort to find new resources creates several new phenomena that modify both the mathematical and economic structure of the market. First, exploration weakens the exhaustibility constraint and in particular permits existence of a stationary model where individual producer reserves evolve, but the market price and aggregate quantities are invariant over time. Second, exploration modifies the role of holding reserves ---rather than determining future available means of production, reserves are partly used as a buffer to mitigate running out. As a result, if exploration is instantaneous and deterministic, no reserves are needed. This was explored in our analysis of the fluid limit model and connects to the early single-agent works from 1970s. Third, exploration control brings novel mathematical challenges to Cournot MFGs, in particular due to the non-local term (from discrete reserve events) in the transport equation and the non-smooth reserves distribution that involves a point mass at $x=0$ and a density on $(0,\infty)$. Fourth, the time-stationary Cournot MFG creates to a non-standard ``free boundary'' feature which required an approximation with a time-dependent version.

Among our insights is analysis about the ambiguous effects of exploration uncertainty and exploration frequency on the MFG equilibrium. This highlights the intricate interaction between stochasticity, reserves and the two types of controls, in addition to the game effects.
A further important contribution is development of tailored numerical schemes to solve the various versions of the Cournot models which require different handling of the boundary conditions, of space- and time-dimensions and of the first-order-condition terms that determine the optimal controls.

In our illustrations, the role of time horizon $T$ was mainly secondary and only affected the discovery rate $\lambda(t)$. A more extensive calibration could be made to add additional $t$-dependency, which could be used as means to capture learning-by-doing, or to capture the intuition that discovery sizes might get smaller over time. 

Another variant of the presented MFG approach would be to consider competition between a single major energy producer and a large population of minor energy producers, cf.~\cite{Huang10,CarmonaZhu16}.
This would correspond for example to the dominant role played by the Organization of Petroleum Exporting Countries (OPEC) in the crude oil market, with OPEC controlling about 40\% of the world's oil production. Due to the resulting market power, the
minor producers choose production strategies based on the production strategy of OPEC. The corresponding game model would involve a  game value function for the major player, a game value function for a representative minor producer, and the reserves distribution of minor producers. The price is then determined by the aggregate production of the major plus all minor producers.

Another open problem is to establish the existence and uniqueness of the MFG MNE with stochastic discoveries, and the regularity of the associated value function. As discussed, the corresponding reserves distribution is non-smooth with a point mass at $x=0$, so only weak regularity is expected. Intuitively, better regularity might be possible if the discovery distribution is continuous (rather than a fixed amount $\delta$), although this could generate further challenges for the HJB equation, introducing a bonified integral term into \eqref{mfg HJB equation}.  Such theoretical analysis could also help to rigorize the convergence of the proposed numerical scheme.

%%%%%%%%%%%%%%%%%%%%%%%%%%%%%%%%%%%%%%%%%%%%
%%% Appendix Appendix Appendix Appendix Appendix %%%
%%%%%%%%%%%%%%%%%%%%%%%%%%%%%%%%%%%%%%%%%%%%

\appendix
\section{Appendix}

%%%
\subsection{Proof of Proposition \ref{prop: transport equation}}
\label{Proof of proposition transport equation}

\begin{proof}
Given $T$,
let $h(t,x) \in C^\infty_c((0, T) \times \mathds{R_+})$ be a test function that  is compactly  supported in
$(0, T) \times \mathds{R_+}$.
Using It\^{o}'s formula for jump processes, and $h(T, X_T) = h(0, X_0) = 0$ we have
\begin{align}
0
& = \E\left[ h(T,X_T) - h(0,X_0)   \right] 	\notag	\\
& = \E\left[ \int_0^T \frac{\partial}{\partial t}h(t,X_t) - q(t, X_t)\frac{\partial}{\partial y}h(t,X_t) dt
+ \int_0^T [h(t, X_t) - h(t-,X_{t-})] d N_t  \right] 	\notag	\\
& =-\int_{0}^\infty \int_0^T \frac{\partial}{\partial t}h(t,x) \frac{\partial}{\partial x}\eta(t,x) dt dx
+ \int_{0}^\infty \int_0^T q(t,x) \frac{\partial}{\partial x}h(t,x) \frac{\partial}{\partial x}\eta(t,x) dt dx  \notag	\\
& \quad
- \int_{\delta}^{\infty} \int_0^T \left( h(t, x) - h(t, x-\delta) \right)
\lambda(t) a(t,x-\delta) \frac{\partial}{\partial x}\eta(t,x-\delta) dt dx  \notag	\\
& \quad + \int_0^T h(t,\delta) \lambda(t) a(t,0)\pi(t) dt  =: I1 +I2 +I3 + I4.
\label{transport equation proof 1}
\end{align}

By integration-by-parts and the fact that $h(t,x)$ has compact support in $(0,T)\times \mathds{R_+}$, the first term on the right hand side of the last equality of
\eqref{transport equation proof 1} equals to
\begin{align}
\label{transport equation proof term 1}
%-\int_{0}^\infty \int_0^T \frac{\partial}{\partial t}h(t,x) \frac{\partial}{\partial x}\eta(t,x) dt dx
I1 & = \int_{0}^\infty \int_0^T \eta(t,x) \frac{\partial}{\partial t} \frac{\partial}{\partial x} h(t,x)  d t  dx 	=  - \int_{0}^\infty \int_0^T
\frac{\partial}{\partial x} h(t,x)  \frac{\partial}{\partial t} \eta(t,x)  d t  dx .
\end{align}

By defining $F(t,x):=\int_0^x \lambda(t) a(t,z) \eta(t,dz)$, $x>0$,
the third term on the right hand side of the last equality of \eqref{transport equation proof 1} can be written as
\begin{align}
\label{transport equation proof term 3}
%&  - \int_{\delta}^{\infty}  \int_0^T   h(t, x)
%\frac{\partial}{\partial x} F(t, x-\delta)  dt dx
% + \int_{\delta}^{\infty}  \int_0^T   h(t, x-\delta)  \frac{\partial}{\partial x} F(t, x-\delta)   dt dx \notag	\\
I3 &
=
 - \int_{\delta}^{\infty}  \int_0^T   h(t, x)
\frac{\partial}{\partial x} F(t, x-\delta)  dt dx
 + \int_{0}^{\infty}  \int_0^T   h(t, x)  \frac{\partial}{\partial x} F(t, x)   dt \, dx \notag	\\
&
=
  \int_{\delta}^{\infty}  \int_0^T
 F(t, x-\delta)  \frac{\partial}{\partial x} h(t, x)   dt dx
-  \int_{0}^{\infty}  \int_0^T   F(t, x)  \frac{\partial}{\partial x} h(t, x)   dt   \notag	\\
&
=
  \int_{\delta}^{\infty}  \int_0^T
 \left( F(t,x) - F(t, x-\delta)  \right)  \frac{\partial}{\partial x} h(t, x)   dt dx
 - \int_{0}^{\delta}  \int_0^T   F(t, x)  \frac{\partial}{\partial x} h(t, x)   dt \, dx.
\end{align}

%%%
The fourth term on the right hand side of the last equality of \eqref{transport equation proof 1} can be written as
\begin{align}
\label{transport equation proof term 4}
% \int_0^T h(t,\delta) \lambda(t) a(t,0)P(t) dt
I4 & =
 \int_0^T \left( \int_0^\delta \frac{\partial}{\partial x}h(t,x) dx \right)
\lambda(t) a(t,0)\pi(t) dt	\notag	\\
& =
\int_0^\delta \int_0^T \lambda(t) a(t,0)\pi(t) \frac{\partial}{\partial x}h(t,x)
 dt dx .
\end{align}

%%%
By substituting \eqref{transport equation proof term 1}-
\eqref{transport equation proof term 4} into equation \eqref{transport equation proof 1},
we obtain
\begin{align}
0 & =
 - \int_{0}^\delta \int_0^T
\frac{\partial}{\partial x} h(t,x)
\left[ \frac{\partial}{\partial t} \eta(t,x)
-  q(t,x) \frac{\partial}{\partial x}\eta(t,x)
+ \int_{0+}^x\lambda a(t,z) \eta(t,dz)	\right. 	\notag	\\
& \quad\qquad\qquad\qquad\qquad
 + \lambda(t) a(t,0) \pi(t) \Bigr] dt \, dx   	\notag	\\
& 	\quad
 - \int_{\delta}^\infty \int_0^T
\frac{\partial}{\partial x} h(t,x)
\left[ \frac{\partial}{\partial t} \eta(t,x)
-  q(t,x) \frac{\partial}{\partial x}\eta(t,x)
+ \int_{x-\delta}^x\lambda a(t,z) \eta(t,dz)
\right]
dt \, dx,
\label{transport equation proof 2}
\end{align}
which is true for any test function $h(t,x) \in C^\infty_c\left( (0,T)\times\mathds{R_+} \right)$.
According to the first term of the right hand side of \eqref{transport equation proof 2}, we have
\begin{align}
0=
\frac{\partial}{\partial t} \eta(t,x)
-  q(t,x) \frac{\partial}{\partial x}\eta(t,x)
+ \int_{0+}^x\lambda(t) a(t,z) \eta(t, d z)+ \lambda(t) a(t,0) \pi(t), \quad
0 < x < \delta.
\label{transport equation proof piece 2}
\end{align}
According to the second term of the right hand side of \eqref{transport equation proof 2}, we have
\begin{align}
\label{transport equation proof piece 3}
0=
\frac{\partial}{\partial t} \eta(t,x)
-  q(t,x) \frac{\partial}{\partial x}\eta(t,x)
+ \int_{x-\delta}^x\lambda(t) a(t,z) \eta(t,dz) , \quad
x > \delta.
\end{align}
%By letting $x\downarrow 0+$ in the equation \eqref{transport equation proof piece 2} and using $\pi(t) = \left(1 -\eta(t,0+) \right)$, we have
%\begin{align}
%\label{transport equation proof piece 1}
%0 =
%-\frac{d}{dt} \pi(t) -  q(t,0+) \frac{\partial}{\partial x}\eta(t,0+)  + \lambda(t)a(t,0)\pi(t).
%\end{align}

The two equations
\eqref{transport equation proof piece 2} - \eqref{transport equation proof piece 3}
constitute the transport equation of reserves distribution given in Proposition
\ref{prop: transport equation}.

\end{proof}
%%%%%

\subsection{Proof of Lemma \ref{lemma: relation of Q A R}}
\label{app: relation of Q A R}
%%% end of lemma: relation of Q A R %%

\begin{proof}
We integrate both sides of the transport equation \eqref{mfg transport equation} with respect to $x$ over $(0, \infty]$ to obtain
\begin{align}
\int_{0+}^\infty \frac{\partial}{\partial t}\eta^\ast(t, x) dx
& =
- \overbrace{\int_{0+}^\delta  \left( \int_{0+}^x \lambda(t) a^\ast(t,z) \eta^\ast(t, dz) \right)  \ dx}^{ =: I1} 	
 - \overbrace{\int_\delta^\infty \left( \int_{x-\delta}^x \lambda(t) a^\ast(t,z) \eta^\ast(t,dz) \right)  \ dx}^{ =:I2}   \notag \\
& \quad +  \int_{0+}^\infty \left( q^\ast(t,x)\frac{\partial}{\partial x} \eta^\ast(t,x) \right)  dx	.
\label{relation of Q A R equation 1}
\end{align}
For the last term by definition of the Stieltjes integral, the integrator is equivalently $\frac{\partial}{\partial x} \eta^\ast(t,x)  dx = \eta^\ast(t,dx)$.
We apply integration by parts to the first two terms on the RHS of
\eqref{relation of Q A R equation 1} to obtain
\begin{align}
%& \int_{0+}^\delta  \left( \int_{0+}^x \lambda(t) a(t,z) \eta(t, dz) \right)	 dx		\notag \\
I1 & = \left[ x\int_{0+}^x \lambda(t) a^\ast(t,z) \eta^\ast(t, dz) \right]_{0+}^\delta 	
 - \int_{0+}^\delta x \frac{\partial}{\partial x}
\left( \int_{0+}^x \lambda(t) a^\ast(t,z) \eta^\ast(t, dz) \right) dx	\notag	\\
& = \delta \int_{0+}^\delta \lambda(t) a^\ast(t,z) \eta^\ast(t, dz)
- \int_{0+}^\delta x \lambda(t) a^\ast(t,x) \eta^\ast(t, dx) , \qquad \text{and}
\label{relation of Q A R equation 2}
\end{align}
\begin{align}
%&
%\int_\delta^\infty \left( \int_{x-\delta}^x \lambda(t) a(t,z) \eta(t,dz) \right)  \ dx  \notag	\\
I2 &
=	\left[ x  \int_{x-\delta}^x \lambda(t) a^\ast(t,z) \eta^\ast(t,dz) \right]_{x=\delta}^{x=\infty}
- \int_\delta^\infty x \frac{\partial}{\partial x}\left( \int_{x-\delta}^x \lambda(t) a^\ast(t,z) \eta^\ast(t,dz) \right) dx
\notag	\\
&
= - \delta \int_{0+}^\delta \lambda(t) a^\ast(t,z) \eta^\ast(t,dz)
-\left[ \int_\delta^\infty x \lambda(t) a^\ast(t,x) \eta^\ast(t, dx)
- \int_{0+}^\infty (x+\delta) \lambda(t) a^\ast(t,x) \eta^\ast(t, dx) \right] \notag	\\
& =
- \delta \int_{0+}^\delta \lambda(t) a^\ast(t,z) \eta^\ast(t,dz)
+ \int_{0+}^\delta x \lambda(t) a^\ast(t,x) \eta^\ast(t, dx)
+ \delta \int_{0+}^\infty \lambda(t) a^\ast(t,x) \eta^\ast(t, dx) .
\label{relation of Q A R equation 3}
\end{align}
The left hand side of \eqref{relation of Q A R equation 1} can be written as
\begin{align}
\int_{0+}^\infty \frac{\partial}{\partial t}\eta^\ast(t,x)dx
 =  \frac{d}{d t} \int_{0+}^\infty  \eta^\ast(t,x)dx ,
\label{relation of Q A R equation 4}
\end{align}
where the exchange of the partial differential operator and the integral is justified by the Leibniz integral rule under the condition that both $\eta^\ast(t,x)$ and $\frac{\partial}{\partial t}\eta^\ast(t,x)$ are continuous in the domain $(t,x) \in [0, \infty)\times(0, \infty)$.
By substituting \eqref{relation of Q A R equation 2}--\eqref{relation of Q A R equation 4} into the equation \eqref{relation of Q A R equation 1}, we have
\begin{align}
 \frac{d}{d t} \int_{0+}^\infty  \eta^\ast(t,x)dx
=- \delta \int_{0+}^\infty \lambda(t) a^\ast(t,x) \eta^\ast(t, dx)
+ \int_{0+}^\infty q^\ast(t,x) \eta^\ast(t,dx) ,
\notag
\end{align}
which gives \eqref{relation of Q A R differential form}.
\end{proof}

%%%%%
\subsection{Proof of Proposition \ref{prop: stationary mean field game equilibrium in fluid limit}}
\label{section: proof of stationary mean field game equilibrium in fluid limit}

% \todo{move everything below into appendix}
We first present Lemmas
\ref{lemma fluid limit HJB transport equations}
and \ref{lemma: equilibrium production and exploration in fluid limit} that summarize
the partial differential equations associated with the fluid limit of our MFG model.
%%%
%%% lemma: Explicit HJB-Transport equations %%%
\begin{lemma}
\label{lemma fluid limit HJB transport equations}
The limiting game value function $v_0$ and reserves distribution function
$(\tilde{\pi}_0,  \tilde{\eta}_0)$ satisfy the following system% of mean field game equations \eqref{mfg HJB fluid limit} - \eqref{mfg transport fluid limit}.
\begin{align}
\label{mfg HJB fluid limit}
 r \tilde{v}_0(x) & =
 \left[ \left( \tilde{p}_0
- \tilde{v}'_0(x)   \right)  \tilde{q}^\ast_0(x) - C_q( \tilde{q}^\ast_0(x))  \right]  + \left[ - C_a( \tilde{a}^\ast_0(x) ) + \tilde{a}^\ast_0(x)\lambda \delta
\tilde{v}'_0(x) \right]  , \quad
 x \ge 0 ,
\end{align}
%% transport
\begin{align}
\label{mfg transport fluid limit}
\begin{cases}
 0 = -\lambda\delta \tilde{a}^\ast_0(0) \tilde{\pi}_0
- \tilde{q}^\ast_0(0) \tilde{\eta}'_0(0+) , \\
 0 =
\left(- \lambda \delta \tilde{a}_0^\ast(x) + \tilde{q}_0^\ast(x) \right) \tilde{\eta}'_0(x), \quad x > 0 ,
\end{cases}
\end{align}
where the optimal production rate $\tilde{q}^\ast_0$ and exploration rate $\tilde{a}^\ast_0$ are given by
\begin{align}
& \tilde{q}^\ast_0(x)
= \frac{1}{\beta_1} \left( L  -  \tilde{Q}_0 - \kappa_1
-  \tilde{v}_0'(x) \right)^+ ,  	\notag	\\
% = \frac{1}{\beta_1}\left[ \frac{\beta_1}{1+\beta_1} (L-\kappa_1)- \tilde{v}'_0(x)   -   \frac{1}{1+\beta_1} \int_0^\infty \tilde{v}'_0(z) \tilde{\eta}_0(dz)   \right]^+ ,  \notag		\\
& \tilde{a}^\ast_0(x)
 =  \frac{1}{\beta_2}\left( \lambda \delta  \tilde{v}'_0(x) - \kappa_2 \right)^+  ,
\label{production and exploration fluid limit}
\end{align}
and aggregate production $\tilde{Q}_0$ is uniquely determined by the equation
\begin{align}
\tilde{Q}_0
& =
- \int_0^\infty
\frac{1}{\beta_1}\left( L - \kappa_1 -  \tilde{v}_0'(x) - \tilde{Q}_0 \right)^+
\tilde{\eta}_0(dx) ,   \label{eq:tildeQ0}
\end{align}
and the equilibrium price is
\begin{align}
\tilde{p}_0 = L + \int_0^\infty \tilde{q}^\ast_0(z) \tilde{\eta}_0(dz) .
\end{align}

\end{lemma}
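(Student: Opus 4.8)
The plan is to establish Lemma \ref{lemma fluid limit HJB transport equations} as the stationary specialization of the fluid limit, and I would obtain it through two complementary routes that give the same system: (a) passing to the singular limit $\epsilon\downarrow 0$ in the finite-$\epsilon$ stationary equations of Proposition \ref{prop: Stationary mean field game partial differential equations} under the scaling $\lambda_\epsilon=\lambda/\epsilon$, $\delta_\epsilon=\delta\epsilon$; and (b) — which I would make the main line — working directly with the deterministic control problem that the fluid limit produces. The key heuristic underpinning both is that the jump generator degenerates to a first-order operator: since $\lambda_\epsilon\Delta^{(\delta_\epsilon)}_x v=\tfrac{\lambda}{\epsilon}\bigl(v(x+\delta\epsilon)-v(x)\bigr)=\lambda\delta\cdot\tfrac{v(x+\delta\epsilon)-v(x)}{\delta\epsilon}\to\lambda\delta\,\tilde v_0'(x)$, the controlled reserves in the limit satisfy the deterministic ODE $\dot X_t=\bigl(\lambda\delta\,\tilde a^\ast_0(X_t)-\tilde q^\ast_0(X_t)\bigr)\mathds{1}_{\{X_t>0\}}$, whose infinitesimal generator acting on $\tilde v_0$ is $(\lambda\delta a-q)\tilde v_0'$.

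For the HJB part I would write the stationary dynamic programming equation for this deterministic problem, namely $r\tilde v_0(x)=\sup_{q,a\ge0}\bigl\{\tilde p_0 q-C_q(q)-C_a(a)+(\lambda\delta a-q)\tilde v_0'(x)\bigr\}$, and immediately decouple it into $\sup_q\{(\tilde p_0-\tilde v_0')q-C_q(q)\}+\sup_a\{\lambda\delta\tilde v_0'\,a-C_a(a)\}$. Plugging the quadratic costs \eqref{production and exploration cost functions}, the two first-order conditions yield the claimed optimizers in \eqref{production and exploration fluid limit}, and substituting them back (using $\tilde p_0=L-\tilde Q_0$) reproduces the bracketed form \eqref{mfg HJB fluid limit}; a one-line completion-of-the-square check confirms this equals $\tfrac{1}{2\beta_1}[(\tilde p_0-\kappa_1-\tilde v_0')^+]^2+\tfrac{1}{2\beta_2}[(\lambda\delta\tilde v_0'-\kappa_2)^+]^2$. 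The aggregate-production fixed point \eqref{eq:tildeQ0} and the price then follow exactly as in Lemma \ref{lemma: mean field game explicit HJB equation}: integrate $\tilde q^\ast_0$ against $\tilde\eta_0$, and invoke the same monotonicity argument for the map $G(Q)=Q+\int_0^\infty\beta_1^{-1}(L-\kappa_1-\tilde v_0'-Q)^+\tilde\eta_0(dx)$ to get a unique root in $[0,L-\kappa_1]$.

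For the transport part I would note that in the stationary equation \eqref{stationary mfg transport equation} the non-local term localizes: over the shrinking window, $\int_{x-\delta_\epsilon}^x\lambda_\epsilon\tilde a^\ast\,\tilde\eta(dz)\approx\tfrac{\lambda}{\epsilon}\tilde a^\ast(x)\bigl(\tilde\eta(x)-\tilde\eta(x-\delta\epsilon)\bigr)\to\lambda\delta\,\tilde a^\ast_0(x)\tilde\eta_0'(x)$. Substituting this into the $x>\delta_\epsilon$ piece and letting $\epsilon\downarrow0$ gives the interior relation $0=(\tilde q^\ast_0(x)-\lambda\delta\tilde a^\ast_0(x))\tilde\eta_0'(x)$, the second line of \eqref{mfg transport fluid limit}, which is exactly the invariance condition $b(x)\tilde\eta_0'(x)=0$ for the deterministic flow with velocity $b=\lambda\delta\tilde a^\ast_0-\tilde q^\ast_0$. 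The boundary line I would derive as a flux balance at the atom: at stationarity the probability outflow from the point mass, $\lambda\delta\tilde a^\ast_0(0)\tilde\pi_0$ (exploration lifting zero-reserve producers upward), must equal the inflow $-\tilde q^\ast_0(0)\tilde\eta_0'(0+)$ (production depleting producers just above zero into the atom), which rearranges to $0=-\lambda\delta\tilde a^\ast_0(0)\tilde\pi_0-\tilde q^\ast_0(0)\tilde\eta_0'(0+)$.

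The main obstacle is rigor of the singular limit at the boundary: the coefficient $\lambda_\epsilon=\lambda/\epsilon$ blows up while the window $\delta_\epsilon$ collapses, so the finite boundary term is a delicate $0\cdot\infty$ cancellation, and controlling it would require uniform regularity and convergence estimates on $(\tilde v_\epsilon,\tilde\eta_\epsilon)$ that the paper itself flags as open for these first-order non-local Cournot MFGs. To avoid leaning on that, I would favor route (b) throughout: rather than justifying the limit term by term, I would treat the deterministic stationary control problem intrinsically, prove the candidate $(\tilde v_0,\tilde q^\ast_0,\tilde a^\ast_0)$ solves the first-order HJB \eqref{mfg HJB fluid limit} by a standard verification argument, and obtain the flux/invariance conditions \eqref{mfg transport fluid limit} directly from stationarity of the deterministic flow, using the $\epsilon\downarrow0$ computation only as motivation. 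This sidesteps the singular-limit estimates entirely while matching the claimed system.
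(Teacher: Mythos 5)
Your proposal is correct, and your route (a) is essentially the paper's own proof: the paper obtains \eqref{production and exploration fluid limit} by exactly the term-by-term limit you write, $\lambda_\epsilon\bigl(\tilde v_\epsilon(x+\delta_\epsilon)-\tilde v_\epsilon(x)\bigr)=\lambda\delta\,\tfrac{\tilde v_\epsilon(x+\delta\epsilon)-\tilde v_\epsilon(x)}{\delta\epsilon}\to\lambda\delta\,\tilde v_0'(x)$ applied to \eqref{eq:q-star}--\eqref{optimal mfg q and a explicit}, and then disposes of \eqref{mfg transport fluid limit} and \eqref{eq:tildeQ0} in one sentence (``follow similarly'' from \eqref{stationary mfg transport equation} and \eqref{eq:stat-Q}), recording only the interior localization $\int_{x-\delta_\epsilon}^x\lambda_\epsilon\tilde a_\epsilon\,\tilde\eta_\epsilon(dz)\to\lambda\delta\,\tilde a_0(x)\tilde\eta_0'(x)$ that you also derive (the paper's display drops the factor $\delta$ there, an apparent typo; your version is the one consistent with \eqref{mfg transport fluid limit}). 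Where you genuinely differ is in promoting route (b) --- verification for the intrinsic deterministic control problem plus stationarity/flux-balance for $(\tilde\pi_0,\tilde\eta_0)$ --- to the main line. This buys an explicit derivation of the boundary line of \eqref{mfg transport fluid limit}, which the paper asserts but never derives, and it sidesteps the singular-limit estimates that are indeed open for these non-local first-order Cournot MFGs; the trade-off is that it proves a statement about the limiting model rather than about the limit of the $\epsilon$-equilibria, though this is faithful to the paper's intent, since Section \ref{sec: Fluid limit of exploration process} introduces the fluid-limit equations as definitional. Two cautions. First, in your route (b) dynamics the indicator should multiply only the production term, as in \eqref{individual reserves process}: writing $\dot X_t=(\lambda\delta\tilde a-\tilde q)\mathds{1}_{\{X_t>0\}}$ makes $x=0$ absorbing even under positive exploration, which would distort the verification at the boundary; the correct formulation retains the exploration drift at $x=0$ and imposes the state constraint $\tilde q(0)\le\lambda\delta\,\tilde a(0)$. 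Second, your flux-balance reading of the atom equation is the right heuristic but is exactly where the $0\cdot\infty$ delicacy you flag lives: in the pre-limit the boundary-layer mass on $(0,\delta_\epsilon)$ remains of order one as $\epsilon\downarrow 0$, so the limiting equation at the atom is not a naive substitution of terms; neither your argument nor the paper's resolves this, and presenting it, as you do, at the formal level puts you on par with (in fact slightly ahead of) the published proof.
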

%%%   end lemma   %%%

\begin{proof}

To obtain the HJB equation \eqref{mfg HJB fluid limit} of limiting game value function $\tilde{v}_0(x)$ and the associated optimal production controls
\eqref{production and exploration fluid limit},
we let $\epsilon \to 0$, so that $\delta \epsilon \downarrow 0$ and $\lambda/\epsilon \uparrow \infty$ in the HJB equation
\eqref{mfg HJB} and the associated optimal production and exploration controls
\eqref{eq:q-star}-\eqref{optimal mfg q and a explicit}
\begin{align}
\tilde{a}^\ast_0(x)  = \lim_{\epsilon \to 0} \tilde{a}^\ast_\epsilon (x)
&  =  \lim_{\epsilon \to 0} \frac{1}{\beta_2}
\left[ \lambda_\epsilon \left(\tilde{v}_\epsilon(x+\delta_\epsilon) - \tilde{v}_\epsilon(x) \right) - \kappa_2 \right]^+  \notag	\\
&  =  \lim_{\epsilon \to 0} \frac{1}{\beta_2}
\left[  \frac{\lambda}{\epsilon} \left( \tilde{v}_\epsilon(x+\delta \epsilon) - \tilde{v}_\epsilon(x) \right) - \kappa_2 \right]^+ 	\notag	\\
& =   \frac{1}{\beta_2}
\left(  \lambda \delta \tilde{v}_0'(x)  - \kappa_2 \right)^+ ,
\notag \\
%%%
\tilde{q}_0^\ast(x) =  \lim_{\epsilon \to 0} \tilde{q}^\ast_\epsilon (x)
& = \lim_{\epsilon \to 0} \frac{1}{\beta_1} \left( L  -  \tilde{Q}_\epsilon - \kappa_1
- \tilde{v}_\epsilon'(x) \right)^+ \notag \\
& = \frac{1}{\beta_1} \left( L  -  \tilde{Q}_0 - \kappa_1
-  \tilde{v}_0'(x) \right)^+.
\notag
\end{align}
Equations \eqref{mfg transport fluid limit} and \eqref{eq:tildeQ0} follow similarly from \eqref{stationary mfg transport equation} and  \eqref{eq:stat-Q}.
%In the equation \eqref{mfg transport fluid limit epsilon}
%we send $\epsilon \downarrow 0$ to obtain the fluid limit transport equation .
Note that as $\epsilon \downarrow 0$ the integral term
$\int_{x - \delta_\epsilon}^x \lambda_\epsilon  \tilde{a}_\epsilon(z)  \tilde{\eta}_\epsilon(dz)$
in the third case $\delta_\epsilon < x$ of \eqref{eq:eta-2}
converges to
\begin{align}
\lim_{\epsilon \to 0} \int_{x - \delta_\epsilon}^x \lambda_\epsilon \tilde{a}_\epsilon(z) \tilde{\eta}_\epsilon(dz)
=
\lim_{\epsilon \to 0} \int_{x - \delta \epsilon}^x  \frac{\lambda}{\epsilon} \tilde{a}_\epsilon( z) \tilde{\eta}_\epsilon( dz)
=
\lambda \tilde{a}_0( x) \frac{\partial}{\partial x} \tilde{\eta}_0( x) .
\notag
\end{align}
\end{proof}

%% Lemma: fluid limit boundary production and exploration relation %%
\begin{lemma}[Fluid limit stationary boundary condition at $x=0$]
\label{lemma: equilibrium production and exploration in fluid limit}
The equilibrium production and exploration rates in fluid limit on the boundary $x=0$ satisfy
% $q^\ast_0(0)$ and $a^\ast_0(0)$ satisfy
%\begin{equation}
%\tilde{q}^\ast_0(0) = \delta \lambda \tilde{a}^\ast_0(0)  .
\eqref{equation: fluid limit boundary production and exploration relation}.
%\end{equation}

\end{lemma}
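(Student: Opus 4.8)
The plan is to read off the relation from a flux balance for the reserves process at the depletion boundary, rather than from the limiting transport equation \eqref{mfg transport fluid limit}, whose boundary term $-\lambda\delta\tilde{a}^\ast_0(0)\tilde\pi_0 - \tilde{q}^\ast_0(0)\tilde\eta_0'(0+)$ becomes delicate once the interior density degenerates to a point mass. In the stationary fluid limit the representative producer's reserves evolve deterministically, $\dot X_t = b(X_t)$ with drift $b(x) := \lambda\delta\tilde{a}^\ast_0(x) - \tilde{q}^\ast_0(x)$, subject to the state constraint $X_t \ge 0$. The target identity \eqref{equation: fluid limit boundary production and exploration relation} is exactly the statement that $b(0) = 0$, so the whole lemma reduces to pinning down the sign of the boundary drift from two opposite directions.

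First I would record the feasibility inequality $b(0) \ge 0$. Since production draws down reserves at rate $\tilde{q}^\ast_0$ while deterministic exploration replenishes them at rate $\lambda\delta\tilde{a}^\ast_0$, keeping $X_t \ge 0$ when $X_t = 0$ forces $\lambda\delta\tilde{a}^\ast_0(0) \ge \tilde{q}^\ast_0(0)$; this is the constraint already flagged after \eqref{mfg transport equation fluid limit non stationary}. Second, I would establish the reverse inequality $b(0) \le 0$ from stationarity. For $x>0$ the exploration control vanishes, $\tilde{a}^\ast_0(x) = 0$ (a producer holding strictly positive reserves has no incentive to pay exploration costs and simply draws down what it holds), so there $b(x) = -\tilde{q}^\ast_0(x) \le 0$ and every trajectory is pushed monotonically toward $x=0$. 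Consequently the invariant law concentrates at the origin, $\tilde\pi_0 = 1$; were $b(0) > 0$, mass would immediately be transported off the atom into $(0,\infty)$, contradicting invariance of $\tilde\pi_0$. Hence $b(0)\le 0$, and combining the two inequalities gives $b(0) = 0$, i.e.\ $\tilde{a}^\ast_0(0) = \tilde{q}^\ast_0(0)/(\lambda\delta)$.

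As an independent check I would verify the algebraic consistency of this relation with the first-order conditions \eqref{production and exploration fluid limit}: substituting $\lambda\delta\tilde{a}^\ast_0(0) = \tilde{q}^\ast_0(0)$ into the two FOCs and solving the resulting linear equation for $\tilde{v}_0'(0)$ reproduces the stationary boundary slope $\tilde{v}_0'(0) = [\beta_2(\tilde p_0 - \kappa_1) + \beta_1\lambda\delta\kappa_2]/(\beta_1\lambda^2\delta^2 + \beta_2)$ of Lemma \ref{lemma: boundary conditions fluid limit non stationary HJB}, confirming that the boundary relation and the HJB boundary datum are mutually consistent (indeed the latter is derived from the former).

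The main obstacle is the rigor of the stationarity half of the argument: I have to justify that the fluid-limit invariant distribution really is the point mass at the origin and that positive boundary drift is incompatible with it, which in PDE terms means interpreting the degenerate transport equation \eqref{mfg transport fluid limit} at $x=0$ correctly rather than formally. A cleaner route that sidesteps the atom's measure-theoretic subtlety is to invoke conservation of mass: in the stationary regime $\tfrac{d}{dt}\tilde R_0 = 0$ forces $\tilde Q_0 = \tilde A_0$, and evaluating both aggregates against the concentrated law gives $\tilde Q_0 = \tilde{q}^\ast_0(0)$ and $\tilde A_0 = \lambda\delta\tilde{a}^\ast_0(0)$, which again yields \eqref{equation: fluid limit boundary production and exploration relation}. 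Establishing $\tilde{a}^\ast_0(x)=0$ for $x>0$, on which both routes lean, is the one auxiliary fact I would need to argue carefully, via monotonicity of $\tilde v_0$ and the explicit exploration FOC.
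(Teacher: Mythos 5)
Your proposal rests on two steps, and both have genuine problems. First, the auxiliary fact $\tilde{a}^\ast_0(x)=0$ for all $x>0$, on which (as you acknowledge) both of your routes lean, is not something you can get from ``monotonicity of $\tilde v_0$ and the exploration FOC'': concavity of $\tilde v_0$ only gives that $\tilde{a}^\ast_0$ is non-increasing, while vanishing on all of $(0,\infty)$ requires $\lambda\delta\,\tilde v_0'(0+)\le \kappa_2$ together with $\lambda\delta\,\tilde v_0'(0)>\kappa_2$, i.e.\ a jump of the marginal value at the constrained boundary --- a statement of essentially the same depth as the lemma itself. In the paper's logical structure this fact is part (iii) of Proposition \ref{prop: stationary mean field game equilibrium in fluid limit}, whose proof \emph{invokes} Lemma \ref{lemma: equilibrium production and exploration in fluid limit} (the relation \eqref{equation: fluid limit boundary production and exploration relation} is used to solve for $\tilde v_0'(0)$), so importing it here is circular unless you supply an independent proof (e.g.\ a discounting/postponement exchange argument showing exploration is strictly dominated while reserves are positive), which you do not.

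Second, and more seriously, even granting that premise, your stationarity argument for $b(0)\le 0$ is unsound precisely in the regime it creates. If $\tilde a^\ast_0\equiv 0$ on $(0,\infty)$ but $\tilde a^\ast_0(0)>0$, the drift field $b$ is discontinuous at the origin with $b(0+)=-\tilde q^\ast_0(0+)<0$; then even with $b(0)>0$ no mass can leave the atom --- any particle nudged into $(0,\epsilon)$ is instantly pushed back, and the unique Filippov (sticking/chattering) dynamics keeps it at $x=0$. Hence the point mass at the origin is invariant \emph{regardless} of the sign of $b(0)$, and invariance of the law cannot force $b(0)\le 0$. The same objection defeats your mass-balance variant: $\frac{d}{dt}\tilde R_0=0$ identifies the \emph{realized} (relaxed) drift at the atom as zero, but equating the realized drift with the nominal feedback pair $\bigl(\tilde q^\ast_0(0),\tilde a^\ast_0(0)\bigr)$ requires ruling out chattering, which is again an optimality statement. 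What actually excludes over-exploration at the boundary is that it is strictly \emph{wasteful}, not that it violates stationarity --- and this is exactly why the paper's proof is a value-comparison argument: it uses only feasibility plus monotonicity of the two FOC controls to locate a balance point $x^\ast\ge 0$ where $\tilde q^\ast_0(x^\ast)=\lambda\delta\,\tilde a^\ast_0(x^\ast)$, and then shows $x^\ast=0$ by constructing admissible strategies yielding $\tilde v_0(x^\ast)\le \tilde v_0(0)$ and (strictly, from the saved exploration cost) $\tilde v_0(0)<\tilde v_0(x^\ast)$, a contradiction. Your proof would need to be rebuilt around such an optimality comparison rather than around invariance of the reserves distribution.
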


\begin{proof}
% [proof of lemma \ref{lemma: equilibrium production and exploration in fluid limit} ]
On the boundary $x=0$ where there is no reserves, we must have
$\delta \lambda \tilde{a}^\ast_0(0) \geq \tilde{q}^\ast_0(0) \geq 0$, i.e., the rate of reserves addition must be greater than or equal to production rate.
If $a^\ast_0(0) = 0$, it follows that
$q^\ast_0(0) = \lambda \delta a^\ast_0(0) = 0$. Now we consider the case that $a^\ast_0(0) > 0$.
Since $\tilde{q}^\ast_0(x) \geq 0$ is non-decreasing, and $\tilde{a}^\ast_0(x)$ decreases to $0$
as $x$ increases, we must have some point $x^\ast \geq 0$ such that
$q_0^\ast(x^\ast) = \lambda \delta a_0^\ast(x^\ast)$.
Note that once reserves process $X_t$ reaches the level $x^\ast$, it will remain unchanged since production rate $\tilde{q}_0^\ast(x^\ast)$ is balanced by the rate of reserves increment
$\lambda \delta \tilde{a}_0^\ast(x^\ast)$ at $X_t = x^\ast$. We now prove that $x^\ast = 0$. Towards a contradiction suppose that $x^\ast >0$.
Then  % \todo{there is no expectation here -- limit is deterministic}
\begin{align}
\tilde{v}_0(x^\ast)
& =
\int_0^{\infty}\left[ \tilde{p}_0 \tilde{q}^\ast_0(X^{ x^\ast }_t) - C_q(\tilde{q}^\ast_0(X^{ x^\ast }_t)) - C_a(\tilde{a}^\ast_0(X^{ x^\ast }_t)) \right] e^{-rt} dt  \notag \\
& =
\int_0^{\infty}\left[ \tilde{p}_0 \tilde{q}^\ast_0(x^\ast) - C_q(\tilde{q}^\ast_0(x^\ast)) - C_a(\tilde{a}^\ast_0(x^\ast)) \right] e^{-rt} dt  %\notag \\
%%
%& =
% \int_0^{\infty}\left[ \tilde{p}_0 \tilde{q}^\ast_0(X^{ 0 }_t) - C_q(\tilde{q}^\ast_0(xX^{ 0 }_t)) - C_a(\tilde{a}^\ast_0(X^{ 0 }_t)) \right] e^{-rt} dt \
 \leq \tilde{v}_0(0) , \notag
\end{align}
since starting at $x^*$, $X^{x^*}_t = x^*$ for all $t$, so that the resulting strategy $(\tilde{q}^\ast_0(x^\ast),\tilde{a}^\ast_0(x^\ast))$ is admissible for zero initial reserves and hence sub-optimal for the problem defining $\tilde{v}_0(0)$.

Next, let  $\tau:= \inf \left\{ t\geq 0: \int_0^t q^\ast_0(0) ds = x^\ast   \right\}$. %with $\tau = \infty$ if $\int_0^t q^\ast_0(0) ds < x^\ast$ for all $t\geq 0$, we have
\begin{align}
\tilde{v}_0(0)
& = \int_0^\tau  \left[ \tilde{p}_0 q^\ast_0(X^0_t) - C_q(q^\ast_0(X^0_t))-C_a(\tilde{a}^\ast_0(X^{0}_t)) \right] \e^{-rt} dt
+ \e^{-r \tau} \tilde{v}_0(X^0_\tau)   \notag \\
& < \int_0^\tau  \left[ \tilde{p}_0 q^\ast_0(X^0_t) - C_q(q^\ast_0(X^0_t)) \right] \e^{-rt} dt
+ \e^{-r \tau} \tilde{v}_0(X^0_\tau) \le \tilde{v}_0(x^\ast)  % \notag \\
%%
%& = \int_0^\tau  \left[ \tilde{p}_0 q^\ast_0(0) - C_q(q^\ast_0(0)) - C_a(a^\ast_0(0)) \right] \e^{-rt} dt
%+ \e^{-r \tau} \tilde{v}_0(0)   \notag \\
%%
%&= \tilde{v}_0(0) , \notag
\end{align}
where the strict inequality $``>"$ is due to the assumption that $a^\ast_0(0) > 0$ and the last inequality is due to the resulting strategy of starting at $x^\ast$, running reserves down to zero with zero exploration and then continuing with $\tilde{v}_0(X^{x^\ast}_\tau)$ is admissible for the problem defining $\tilde{v}_0(x^\ast)$ (treating $\tilde{p}_0$ fixed).  The above two inequalities contradict each other;
thus $x^\ast = 0$, and  $\tilde{q}_0^\ast(0) = \lambda \delta \tilde{a}_0^\ast(0)$. \end{proof}

%% End of lemma: fluid limit boundary production and exploration relation %%
 Using Lemmas \ref{lemma fluid limit HJB transport equations} and \ref{lemma: equilibrium production and exploration in fluid limit},
we  prove Proposition \ref{prop: stationary mean field game equilibrium in fluid limit}.

\begin{proof}[Proof of proposition \ref{prop: stationary mean field game equilibrium in fluid limit}]
\emph{(i)}.
Since $\tilde{q}^\ast_0(x) > 0$ and $\tilde{a}^\ast_0(x) = 0$ for $x>0$,
we have $\tilde{\eta}'_0(x) = 0$ for $x>0$,
according to the equation
\eqref{mfg transport fluid limit} in the interior. It follows that the reserves distribution degenerates to a point mass at 0, $\tilde{\pi}_0 = 1$. Substituting the latter fact into \eqref{eq:tildeQ0} we get
\begin{align}
\tilde{Q}_0
& =  \frac{1}{\beta_1} \left( L - \kappa_1 - \tilde{v}_0'(0) - \tilde{Q}_0 \right)^+ ,
\notag
\end{align}
which gives
$\tilde{Q}_0 =  \frac{1}{1 + \beta_1} \left( L - \kappa_1 -  \tilde{v}_0'(0) \right)^+$ .
According to \eqref{production and exploration fluid limit}
the equilibrium production rate at $x=0$ is
\begin{align}
\tilde{q}^\ast_0(0)
& = \frac{1}{\beta_1}\left( L - \kappa_1 - \tilde{v}_0'(0) - \tilde{Q}_0 \right)^+ \notag \\
%& = \frac{1}{\beta_1}\left[ L - \kappa_1 - \tilde{v}_0'(0) - \frac{1}{1 + \beta_1} \left( L - \kappa_1 -  \tilde{v}_0'(0) \right)^+  \right]^+ \notag \\
& = \frac{1}{1 + \beta_1} \left( L - \kappa_1 -  \tilde{v}_0'(0) \right)^+  .
\label{stationary equilibrium production 1 fluid limit}
\end{align}

By substituting production rate \eqref{stationary equilibrium production 1 fluid limit}
and  exploration effort
\eqref{production and exploration fluid limit} with $x=0$
into the equation $\tilde{q}^\ast_0(0) = \lambda\delta \tilde{a}^\ast_0(0)$
and solving for $\tilde{v}_0'(0)$, we obtain
\begin{align}
\tilde{v}_0'(0) = \frac{(L - \kappa_1)\beta_2 + \kappa_1(1+\beta_1)}{\beta_2 + (1+\beta_1)\lambda\delta} .
\label{v prime zero fluid limit}
\end{align}
Then by substituting the above $\tilde{v}_0'(0)$ into
\eqref{stationary equilibrium production 1 fluid limit} we have
\begin{align}
\tilde{q}_0^\ast(0)
= \frac{ [ (L-\kappa_1)\lambda \delta - \kappa_2 ]^+ }{\beta_2 + (1+\beta_1)\lambda \delta} .
\notag
\end{align}
Since $\tilde{\pi}_0 = 1$, we have $\tilde{Q}_0 = - \int_0^\infty \tilde{q}^\ast_0(x) \tilde{\eta}_0(dx)
= \tilde{q}^\ast_0(0)$ which yields \eqref{total production in fluid limit}. \end{proof}

%%%%%
\subsection{Proof of Lemma \ref{lemma: boundary conditions fluid limit non stationary HJB}}
\label{proof of lemma boundary conditions fluid limit non stationary HJB}

\begin{proof}
%%%%%%%
Similar to Lemma \ref{lemma: equilibrium production and exploration in fluid limit}, at $x=0$ we have
\begin{align}
q_0^\ast(t, 0) = \lambda \delta a_0^\ast(t, 0) .
\label{production and exploration on boundary fluid limit non stationary}
\end{align}
By substituting
\eqref{optimal production fluid limit non stationary}-\eqref{optimal exploration fluid limit non stationary} into
\eqref{production and exploration on boundary fluid limit non stationary},
we obtain the boundary condition %$\frac{\partial}{\partial x} v_0(t, 0)$,
\begin{align}
\frac{\partial}{\partial x} v_0(t, 0)
= \frac{\beta_2 (p_0(t) - \kappa_1 ) + \beta_1 \lambda \delta \kappa_2}{\beta_1 \lambda^2 \delta^2 + \beta_2} .
%\label{v0 partial x boundary condition}
\notag
\end{align}
%%%
Substituting \eqref{v0 partial x boundary condition} into
\eqref{optimal production fluid limit non stationary}-\eqref{optimal exploration fluid limit non stationary}, we obtain $a^\ast_0(t, 0)$ and $q^\ast_0(t, 0)$ in explicit form
\begin{align}
& a^\ast_0(t, 0) = \frac{\lambda \delta (p_0(t) - \kappa_1) - \kappa_2}{\beta_1 \lambda^2 \delta^2 + \beta_2},
\label{exploration on boundary fluid limit non stationary}
\\
%%%
& q^\ast_0(t, 0) =
\frac{\lambda^2 \delta^2 (p_0(t) - \kappa_1 ) - \lambda \delta \kappa_2}{\beta_1 \lambda^2 \delta^2 + \beta_2} .
\label{production on boundary fluid limit non stationary}
\end{align}

By substituting \eqref{v0 partial x boundary condition}, \eqref{exploration on boundary fluid limit non stationary}, and
\eqref{production on boundary fluid limit non stationary}
into the HJB equation
\eqref{fluid limit HJB non stationary},
we obtain the following linear first-order differential equation for $v_0(\cdot, 0)$:
%%%
\begin{align}
0 &= \frac{\partial}{\partial t} v_0(t, 0)  - r v_0(t, 0)
 +  \frac{1}{2}\left[ ( a^\ast_0(t, 0) )^2 +  ( q^\ast_0(t, 0))^2   \right]  ,   \qquad
0< x , \ 0\leq t < T ,
\notag
\end{align}
which admits an explicit solution
\begin{align}
v_0(t, 0)
& = v_0(T, 0) \e^{-r(T-t)} + \int_t^T  \frac{1}{2}\left[ ( a^\ast_0(s, 0) )^2
+  ( q^\ast_0(s, 0))^2   \right] \e^{-r(s - t)} ds  \notag
%%
%& = v_0(T, 0) \e^{-r(T-t)} + \int_t^T  \left[  \frac{\lambda \delta (p_0(s) - \kappa_1) - \kappa_2}{\beta_1 \lambda^2 \delta^2 + \beta_2}  \right]^2 (1+\lambda^2 \delta^2) \e^{-r(s - t)} ds  .
%\notag
\end{align}
that matches \eqref{v0 boundary condition} since $v_0(T,0) = 0$.
\end{proof}

\bibliographystyle{siam}
\bibliography{../../../../../masterBib}

\end{document}